\newcommand{\myparskip}{3pt}
\newcommand{\oset}{{\mathcal{O}}}
\newcommand{\inedges}[1]{\Delta^-_{\oset}(#1)}
\newcommand{\outedges}[1]{\Delta^+_{\oset}(#1)}
\newcommand{\notx}{\overline x}
\def\Nj{\mathcal N_j}
\newcommand{\MMA}{{\sc{Max-Min Allocation}}\xspace}
\newenvironment{proofof}[1]{\noindent{\bf Proof of #1.}}%
        {\hspace*{\fill}$\Box$\par\vspace{4mm}}
\newenvironment{properties}[2][0]
{\renewcommand{\theenumi}{#2\arabic{enumi}}
\begin{enumerate} \setcounter{enumi}{#1}}{\end{enumerate}\renewcommand{\theenumi}{\arabic{enumi}}}
\newcommand{\bor}{\vee}
\newcommand{\opt}{\mbox{\sf OPT}}
\newcommand{\set}[1]{\left\{ #1 \right\}}
\newcommand{\sse}{\subseteq}
\newcommand{\tset}{{\mathcal M}}
\newcommand{\iset}{{\mathcal{I}}}
\newcommand{\pset}{{\mathcal{P}}}
\newcommand{\lset}{{\mathcal{L}}}
\newcommand{\aset}{{\mathcal{A}}}
\newcommand{\cset}{{\mathcal{C}}}
\newcommand{\rset}{{\mathcal{R}}}
\newcommand{\I}{{\mathcal I}}
\newcommand{\sset}{{\mathcal{S}}}
\newcommand{\be}{\begin{enumerate}}
\newcommand{\ee}{\end{enumerate}}
\newcommand{\bd}{\begin{description}}
\newcommand{\ed}{\end{description}}
\newcommand{\bi}{\begin{itemize}}
\newcommand{\ei}{\end{itemize}}
\newtheorem{lemma}{Lemma}
\newtheorem{theorem}{Theorem}
\newtheorem{corollary}{Corollary}
\newtheorem{claim}{Claim}
\newenvironment{definition}{{\bf Definition}: }{}
\newenvironment{proof}{\par \smallskip{\bf Proof:}}{\hfill\stopproof}
\def\stopproof{\square}
\def\square{\vbox{\hrule height.2pt\hbox{\vrule width.2pt height5pt \kern5pt
\vrule width.2pt} \hrule height.2pt}}
\renewcommand{\phi}{\varphi}
\newcommand{\eps}{\epsilon}
\newcommand{\half}{\ensuremath{\frac{1}{2}}}
\newcommand{\poly}{\operatorname{poly}}
\newcommand{\F}{\ensuremath{\mathbb F}}
\def\A{\mathcal A}
\def\eps{\epsilon}
\title{On Allocating Goods to Maximize Fairness}
\author{
Deeparnab Chakrabarty\footnote{Department of Combinatorics and Optimization, University of Waterloo, Waterloo, Ontario, Canada. Email: {\tt deepc@math.uwaterloo.ca}.}\and
Julia Chuzhoy\footnote{Toyota Technological Institute, Chicago, IL
60637. Email: {\tt cjulia@tti-c.org}.} \and Sanjeev Khanna\footnote{Dept. of Computer \&
Information Science, University of Pennsylvania, Philadelphia, PA
19104. Email: {\tt sanjeev@cis.upenn.edu}. Supported in part by a Guggenheim Fellowship,
an IBM Faculty Award, and by NSF Award CCF-0635084. }}
\begin{document}
\maketitle

\begin{abstract}
Given a set $\aset$ of $m$ agents and a set $I$ of $n$ items, where agent $A\in \aset$ has utility $u_{A,i}$ for item $i\in I$, our goal is to allocate items to agents to maximize fairness. Specifically, the utility of an agent is the sum of the utilities for items it receives, and we seek to maximize the minimum utility of any agent. While this problem has received much attention recently, its approximability has not been well-understood thus far: the best known approximation algorithm achieves an
$\tilde{O}(\sqrt{m})$-approximation, and in contrast, the best known hardness of approximation stands at $2$.

Our main result is an approximation algorithm that achieves an $\tilde{O}(n^{\eps})$ approximation for any $\eps=\Omega(\log\log n/\log n)$ in time $n^{O(1/\eps)}$. In particular, we obtain poly-logarithmic approximation in quasi-polynomial time, and for every constant $\eps > 0$, we obtain $\tilde O(n^{\eps})$-approximation in polynomial time. An interesting technical aspect of our algorithm is that we use as a building block a linear program whose integrality gap is $\Omega(\sqrt m)$. We bypass this obstacle by iteratively using the solutions produced by the LP to construct new instances with significantly smaller integrality gaps, eventually obtaining the desired approximation.

We also investigate the special case of the problem, where every item has a non-zero utility for at most two agents.
We show that even in this restricted setting the problem is hard to approximate upto any factor better than $2$, and show a factor $(2+\eps)$-approximation algorithm running in time
$\mbox{poly}(n,1/\eps)$ for any $\eps>0$. This special case can be cast as a graph edge orientation problem, and our algorithm can be viewed as a generalization of Eulerian orientations to weighted graphs.

\end{abstract}

\section{Introduction}

In this paper we consider the problem of allocating indivisible goods to a set of agents with the objective to maximize the minimum utility among all agents.
In particular, we are given a set $\A$ of $m$ agents, a set $I$ of $n$ indivisible items, and non-negative utilities $u_{A,i}$ for each agent $A$ and item $i$. The total utility of an agent $A$ for a subset $S\subseteq I$ of items is $u_A(S) := \sum_{i\in S} u_{A,i}$, that is,  the utility function is additive. An allocation of items is a function $\Phi: \A \to 2^I$ such that an item is allocated to at most one agent, that is,
$\Phi(A) \cap \Phi(A') = \varnothing$ whenever $A \neq A'$. The \MMA  problem is to find an allocation $\Phi$  of items which maximizes
$\min_{A\in \A}\set{ u_A(\Phi(A))}$.

The \MMA problem arises naturally in the context of {\em fair} allocation of indivisible resources where maximizing the utility of the least `happy' person is arguably a suitable notion of fairness. This is in contrast to allocation problems where the goal is to maximize the {\em total} utility of agents, a problem that admits a trivial solution for any additive utility function: allocate each item to a person that derives the highest utility from it. The \MMA problem may be viewed as a `dual' problem to the well studied min-max (also known as makespan) scheduling problem where the goal is to find an allocation minimizing the utility (load) of the maximum utility agent (machine).

The \MMA problem was indeed first studied as a machine scheduling problem where the minimum completion time has to be maximized. Woeginger \cite{Woe1} and Epstein and Sgall \cite{ES} gave polynomial time approximation schemes (PTAS) for the cases when all agents (machines) have identical utilities for the items. Woeginger \cite{Woe2} also gave an FPTAS for the case when the number of agents, $m$, is a constant. The first non-trivial approximation algorithm for the general \MMA problem is due to Bezakova and Dani \cite{BD} who gave a $(n-m+1)$-approximation algorithm. They also showed the problem is NP-hard to approximate up to any factor smaller than $2$.

Bansal and Sviridenko \cite{BS} introduced a {\em restricted version} of the \MMA problem, called the
{\em Santa Claus} problem, where each item has the property that it has the same utility for a subset of agents and $0$ for the rest. In other words, for each agent $A$ and item $i$, either $u_{A,i} = u_i$, depending only on the item $i$, or $u_{A,i}=0$.
 They proposed an LP relaxation for the problem, referred to as the {\em configuration LP}, and used it to give an $O(\log\log m/\log\log\log m)$-approximation for the Santa Claus problem. Subsequently,
Feige~\cite{Fei08} showed a constant upper bound on the integrality gap of configuration LP for the Santa Claus problem. However his proof is non-constructive and does not translate into an approximation algorithm. Subsequently, Asadpour, Feige and Saberi~\cite{AFS} provided an alternative non-constructive proof of a factor-$5$ upper bound on the integrality gap of the configuration LP.

As for the general \MMA problem, Bansal and Sviridenko \cite{BS} showed that the configuration LP has an integrality gap of $\Omega(\sqrt{n})$ in this setting.
Recently, Asadpour and Saberi \cite{AS} gave an $O(\sqrt{m}\log^3 m)$ approximation for the problem using the same LP relaxation. This is the best approximation algorithm known for the problem prior to our work, while the best current hardness of approximation factor is $2$~\cite{BD}. The main result of our paper is  an $\tilde{O}(n^\eps)$-approximation algorithm for any $\eps=\Omega(\log\log n/\log n)$ for the general \MMA problem, whose running time is $n^{O(1/\eps)}$. This implies a quasi-polynomial time poly-logarithmic approximation to the general \MMA problem. \\
\\
\noindent
Additionally, we investigate a special case of \MMA when each item has positive utility for at most {\em two} agents. We call this special case the {\em $2$-restricted} \MMA problem. When the two positive utilities are identical for both agents, we call the instance a {\em uniform} $2$-restricted instance.
The (uniform) $2$-restricted \MMA reduces to an {\em orientation} problem in  (uniformly) non-uniformly weighted graphs where one has to orient the edges so as to maximize the minimum weighted in-degree of a vertex (a non-uniformly weighted graph has two weights per edge - one for each end point). This orientation problem is called the {\em graph balancing} problem
and is motivated by the min-max analogue studied recently by Ebenlendr et.al. \cite{EKS}.
To the best of our knowledge, prior to our work, the approximability status of the $2$-restricted \MMA problem has been the same as that of the the general \MMA; for the uniform $2$-restricted \MMA the algorithm and analysis of Bansal and Sviridenko for the Santa Claus problem implies a factor-$4$ approximation.
We show that even the uniform $2$-restricted \MMA is hard to approximate up to any factor better than $2$. Moreover, we give a polynomial time $(2+\eps)$-approximation algorithm for the non-uniform $2$-restriced \MMA, for any $\eps > 0$. In fact, we show that the integrality gap of the configuration LP is exactly $2$ in this special case -- the extra $\eps$ in the approximation factor comes from the fact that the configuration LP can only be solved approximately.

{\bf Remark:}
 We have recently learned that independently of our work, Bateni, Charikar, and Guruswami \cite{BCG} showed an approximation algorithm for a special case of the \MMA problem, where the configuration LP is known to have a large integrality gap. Their algorithm achieves an $O(m^\eps)$-approximation in $m^{O(1/\eps)}$ time and a polylogarithmic approximation in quasi-polynomial time for the special case. They also obtain a factor $4$ approximation for the $2$-restricted \MMA.

\subsection{Overview of Results and Techniques}
Our main result is as follows:

\begin{theorem}\label{thm:main}
For any $\eps \geq 8\log\log n/\log n$, there is an $n^{O(1/\eps)}$-time algorithm to compute an
$\tilde{O}(n^{\eps})$-approximate solution for the \MMA problem.
\end{theorem}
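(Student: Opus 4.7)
The plan is to prove Theorem~\ref{thm:main} via a recursive algorithm that repeatedly invokes the configuration LP and uses its fractional solutions to build progressively simpler instances. By standard binary search on the target value, it suffices to show that, given a guess $T$ for $\opt$, we can either produce an allocation giving every agent utility at least $T/\tilde{O}(n^\eps)$, or certify that $T>\opt$.

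The starting point is the configuration LP: introduce a variable $x_{A,C}$ for every agent $A\in\aset$ and every configuration $C\subseteq I$ with $u_A(C)\geq T$, impose $\sum_C x_{A,C}\geq 1$ for each $A$ and $\sum_{A,C\ni i} x_{A,C}\leq 1$ for each item $i$. Despite its $\Omega(\sqrt m)$ integrality gap, this LP will act as our structural oracle. From the fractional solution we classify items as \emph{large} for agent $A$ if $u_{A,i}\geq T\cdot n^{-\eps}$ and \emph{small} otherwise: each agent needs only $O(n^\eps)$ large items to reach target $T$, so partial assignments of large items to agents can be enumerated in $n^{O(1/\eps)}$ time. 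Using the LP support and a Santa Claus-style iterative rounding, tailored so that small items absorb the rounding error for each agent, we commit to a partial integral allocation that satisfies a substantial subset of agents at value at least $T/n^{O(\eps)}$.

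After this phase, the uncommitted agents and items form a residual instance on which we recurse. The main structural claim driving the analysis is that the configuration LP for the residual instance has a strictly smaller integrality gap than the original, because the committing step removes precisely the items whose concentration across configurations drove the original $\Omega(\sqrt m)$ gap. Iterating at most $1/\eps$ times, the support of the residual LP becomes simple enough to round with only constant loss, and the overall approximation factor — the product of the per-phase losses — telescopes to $\tilde{O}(n^\eps)$. Concatenating the partial allocations produced by all phases yields the claimed approximation.

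The heart of the argument, which I expect to be the main obstacle, is establishing quantitatively that the residual LP's integrality gap shrinks by the desired factor per phase. To this end, I would introduce a progress measure — for instance, the largest ratio $u_{A,i}/T$ appearing in the support of the current LP, or the maximum fractional mass any single item places on a single agent — and show two things: (a) the committing step derived from the LP shrinks this measure by a factor $n^{\Omega(\eps)}$ while losing only a polylogarithmic factor in the target value; and (b) once the measure drops to $O(1)$, the configuration LP itself has $O(1)$ integrality gap and can be rounded directly, terminating the recursion. The running time $n^{O(1/\eps)}$ follows because each of the $O(1/\eps)$ phases enumerates only $n^{O(1)}$ candidate partial configurations and solves a polynomial-size LP.
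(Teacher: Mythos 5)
There is a genuine gap: your proposal is a plan whose two load-bearing steps are, respectively, false as stated and unsupported. First, the claim that ``partial assignments of large items to agents can be enumerated in $n^{O(1/\eps)}$ time'' does not hold. It is true that a single agent needs at most $n^{\eps}$ items of utility $\geq T\cdot n^{-\eps}$ to reach $T$, but a partial allocation must specify such a set for each of the $m$ agents simultaneously, so the number of candidates is of order $n^{O(n^{\eps}\cdot m)}$, not $n^{O(1/\eps)}$. Second, the central structural claim --- that committing to a partial allocation guided by the configuration LP provably shrinks the integrality gap of the residual configuration LP by $n^{\Omega(\eps)}$ per phase --- is asserted without a mechanism, and the progress measures you propose do not track the gap. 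In the known $\Omega(\sqrt n)$ gap instances (Bansal--Sviridenko, and the example in the appendix of this paper) every nonzero utility is either $M$ or $M/N_A$ and every item already places only mass $O(1/M)$ on any agent, so neither ``largest ratio $u_{A,i}/T$ in the support'' nor ``maximum fractional mass of an item on an agent'' distinguishes hard residual instances from easy ones, and there is no reason a Santa-Claus-style rounding pass reduces either quantity. Moreover, even granting a measure that reaches $O(1)$, the claim that the general configuration LP then has $O(1)$ gap is not something you can cite; constant gap is known only for the restricted (Santa Claus) utilities.

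For comparison, the paper's iteration does not shrink an integrality gap at all. It first reduces to \emph{canonical} instances (heavy/light agents) at an $O(\log n)$ loss, reformulates the problem as integral flow in a network determined by an assignment of \emph{private items}, and proves that near-optimal solutions can be taken to be $h$-layered with $h=8/\eps$; the $n^{O(1/\eps)}$ size comes from an LP with one flow type per $h$-tuple of light agents, not from enumeration. Each phase rounds this LP into an \emph{almost-feasible} path system (allowing bounded vertex sharing), then repairs it via a flow-scaling argument and a spider-decomposition rerouting. The quantity that provably shrinks is the number of \emph{terminals} --- heavy agents without private items --- which drops by a factor $n^{\eps}/(32h^2\alpha)$ per iteration via a counting argument charging each removed path to a unique responsible agent. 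After $h$ iterations no terminals remain and the accumulated path system yields the allocation. None of this structure (private items, layering, the layered LP, the terminal-counting invariant) is recoverable from your outline, so the proposal cannot be completed along the lines you describe without essentially discovering the paper's argument.
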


We now give an overview of the proof of Theorem \ref{thm:main} and highlight the main ideas involved. We begin by guessing  the value $M$ of the optimal solution. This can be done since, as we show later, we can assume that all utilities are polynomially bounded, losing a constant factor in the approximation ratio.
We then show that we can convert an instance of the general \MMA to what we call
{\em canonical instances}, while losing an $O(\log n)$ factor in the approximation ratio. In a canonical instance, there are only two types of agents -- heavy agents whose utilities for items are either $0$ or $M$, and light agents. Each light agent has a distinct heavy item for which it has utility $M$, and for every other item, the utility is either $M/t$ or $0$, where $t$ is a large integer (larger than the desired approximation factor).
The items with a utility of $M/t$ for a light agent are referred to as the {\em light items}.

Our second idea is that of transforming the problem of assigning items to agents into a network flow problem by the means of {\em private items}. Each agent is assigned at most one distinct private item, for which it has utility $M$.  Note that necessarily the private item of a light agent will be its heavy item. Nevertheless there could be many ways of assigning private items to heavy agents. Of course if we find an assignment of private items for every agent we are trivially done, since this assignment induces a near-optimal solution. Thus the interesting case is when for a {\em maximal} assignment of private items we have some agents who are not assigned any private items. Such agents will be called \emph{terminals}. Suppose for the time being that all light agents are assigned private items.
The key observation which we use is that if the optimum value is $M$, then, given any assignment of private items, there must exist a way of {\em re-assigning} private items such that every terminal is assigned a heavy item. Re-assignment means a heavy agent ``frees'' its private item if it gets another heavy item while a light agent frees its private item if it gets $t$ light items, and then these private items can be re-assigned.

Thus, given any allocation of private items, we can construct a flow network with the property that there exists an {\em integral} flow satisfying certain constraints (for instance, out-flow of $1$ for light agents implies in-flow of $t$). We then design a linear programming relaxation to obtain a fractional flow
solution in the above network. Our LP relaxation has size $n^{O(1/\eps)}$ when the desired approximation ratio is $\tilde{O}(n^\eps)$. However, the integrality gap of the LP relaxation is $\Omega(\sqrt{m})$ -- that is, there is an instance and allocation of private items such that in the resulting network a flow of $M$ can be sent to the terminals fractionally, while an integral flow is not possible even if the ``constraints'' on the flow are ``relaxed'' by a factor of $O(\sqrt{m})$. Thus, rounding the LP-solution directly cannot give a better than $O(\sqrt{m})$ approximation factor.

This brings us to another key technical contribution. Although the LP has a large gap, we show that we can obtain a better approximation algorithm by performing LP-rounding in phases. In each phase we solve the LP and run a rounding algorithm to obtain a solution which is {\em almost feasible}, that is, all terminals get heavy items but some items might be allocated twice.
From this almost-feasible solution, we recover a new assignment of private items and hence a new instance of the LP, one
that has much smaller number of terminals than the stating instance.
 We thus show that in $\mbox{poly}(1/\eps)$ phases, either one of these instances will certify infeasibility of the integral optimum, or we will get an allocation of items which is an $\tilde{O}(n^\eps)$-approximation. This ends the high-level idea of Theorem \ref{thm:main}.\\

 We note that Theorem~\ref{thm:main} can also be used to obtain an approximation in terms of number of agents. In particular, we
 show that we can obtain for any fixed $\eps > 0$, a quasi-polynomial time $m^{\eps}$-approximation algorithm.\\
\\
\noindent
Our second result in the paper is about $2$-restricted \MMA instances. Recall that a $2$-restricted instance is one in which every item $i\in I$ has positive utility for at most two agents $A_i$ and $B_i$. A $2$-restricted \MMA instance is called {\em uniform} if for every item $i$, $u_{A_i,i} = u_{B_i,i}$.
We prove the following theorem which pins down the approximability of $2$-restricted instances.

\begin{theorem}\label{thm:thm2}
For any $\eps > 0$, there exists a $(2+\eps)$-approximation algorithm to the non-uniform $2$-restricted \MMA problem which runs in time $\poly(n,1/\eps)$.
Furthermore, for any $\delta>0$, it is NP-hard to obtain a $(2-\delta)$-approximation algorithm even for the uniform $2$-restricted \MMA problem.
\end{theorem}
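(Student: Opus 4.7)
The theorem has two parts; I address each in turn.

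For the $(2+\eps)$-approximation algorithm, the plan is to show that the configuration LP has integrality gap at most $2$ and then round it approximately. I first binary-search for $T$, the value of an optimal solution. The configuration LP has a variable $x_{A,C}$ for every agent $A$ and every subset $C$ of items with $u_A(C)\ge T$, subject to $\sum_C x_{A,C}\ge 1$ for each $A$ and $\sum_{A,C\ni i} x_{A,C}\le 1$ for each item $i$. Although this LP has exponentially many variables, it can be solved to within a multiplicative factor of $(1+\eps)$ in $\poly(n,1/\eps)$ time via the Plotkin-Shmoys-Tardos framework: the separation oracle for the dual is a knapsack-type problem that admits an FPTAS, and this slack is absorbed into the final $\eps$. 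It then suffices to round a fractional LP solution of value $T$ to an integral solution of value $T/2$.

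To round, I would view the $2$-restricted instance as a weighted multigraph $G$ with agents as vertices and items as edges, each carrying a (possibly distinct) weight at each endpoint. Call item $i$ \emph{big} at agent $A$ if $u_{A,i}\ge T/2$, and \emph{small} otherwise. Every minimal configuration for $A$ is then either a singleton big item or a set of small items whose total utility lies in $[T,3T/2)$. Using the LP, I would split each agent's fractional coverage into its big-configuration and small-configuration masses, and from this split assign each agent a role: either to be covered by a single big item, or to accumulate utility at least $T/2$ from small items. Big-role agents are matched to big items via a bipartite-matching argument that exploits the $2$-restricted structure; small-role agents are handled by a flow-decomposition rounding of the residual small-item fractional assignment. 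A swap argument, enabled by the fact that every item has at most two potential recipients, resolves conflicts between the two phases and drives the integrality gap down to exactly $2$.

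For the $(2-\delta)$-hardness on uniform instances, the plan is a polynomial-time reduction from an NP-hard problem (such as a restricted variant of $3$-dimensional matching, or a Partition-style problem chained through gadgets) producing uniform $2$-restricted \MMA instances with a factor-$2$ gap: every \yi yields an allocation with min utility at least $2$, whereas every \ni leaves some agent with utility at most $1$. The uniformity constraint $u_{A_i,i}=u_{B_i,i}$ forces each gadget to be scalar-weighted, so binary choices (variable assignments or triple selections) must be encoded purely by which endpoint of an edge is chosen, and cross-gadget consistency must be enforced only through shared vertices.

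The main obstacle in the algorithmic part is that a naive ``match big items first, then round small items'' scheme incurs a factor strictly larger than $2$; an augmenting-path/exchange argument, riding on the fact that each edge has at most two endpoints, is required to push the loss down to exactly $2$. In the hardness part the main challenge is to simulate the multi-way constraints of the source NP-hard problem using only uniform-weight edge-items; chaining local gadgets via shared vertices is the natural device, but verifying that the factor-$2$ gap survives the chain requires a careful case analysis.
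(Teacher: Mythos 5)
Your proposal correctly identifies the starting point (the configuration LP, approximate solvability, and the goal of showing integrality gap $2$), but both halves stop short of an actual proof, and the algorithmic half contains a false structural claim. First, it is not true that every minimal configuration is either a singleton big item or a set of small items: a minimal configuration can mix one item of utility in $[T/2,T)$ with several small items, so the clean big-role/small-role dichotomy you build on does not partition the LP mass. More importantly, the entire burden of the factor-$2$ bound is placed on an unspecified ``swap argument'' that ``drives the integrality gap down to exactly $2$''; this is precisely the part that needs a proof, and nothing in the proposal supplies it. The paper's route is quite different and does not use a big/small dichotomy at all: after peeling off integrally allocated items (leaving each agent a residual demand $M_A$), one observes from the configuration LP that the fractionally allocated items at agent $A$ have total utility at least $M_A$ \emph{plus} the largest single such utility (Claim~\ref{claim: utility-bound}), and then proves a weighted-orientation theorem (Theorem~\ref{thm:weo}): any non-uniformly weighted graph can be oriented so that every vertex receives in-weight at least $\bigl(\sum_{e\in\Delta(v)}w_{v,e}-\max_{e\in\Delta(v)}w_{v,e}\bigr)/2$. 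That theorem, proved by repeatedly extracting a carefully chosen directed cycle and inducting, is the technical heart of the upper bound; your bipartite-matching-plus-flow-decomposition sketch has no analogue of it and, as written, would not yield the factor $2$.

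The hardness half is likewise only a plan: you assert that a gap-preserving reduction from 3-dimensional matching or a Partition-style problem exists, but construct no gadgets and verify no completeness/soundness. The paper's reduction is from 3SAT in which each literal occurs in at most two clauses: weight-$1$ edges between $x$ and $\overline{x}$, weight-$\tfrac12$ clause--literal edges, and weight-$\tfrac12$ self-loops at clauses and at literals occurring in exactly one clause, giving value $1$ in the YES case and forcing some vertex to in-weight at most $\tfrac12$ in the NO case. Without a concrete construction and the case analysis showing the gap survives, this part of the statement remains unproved.
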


In fact, we show that the integrality gap of the configuration LP of~\cite{BS} for $2$-restricted \MMA is bounded from above by $2$ (recall that for general \MMA,
the integrality gap is at least $\Omega{\sqrt{n}}$). As mentioned above, the $2$-restricted \MMA can be cast as an orientation problem
on (non-uniformly) weighted graphs. The main technical lemma which we use for proving the our result above is a generalization of Eulerian orientations to weighted graphs. At a high level, we show that the edges of any (non-uniformly) weighted graph can be oriented such that the total weight coming into any vertex w.r.t the orientation is greater than half of the total weight incident on the vertex in the undirected graph minus the maximum weight edge incident on the vertex. Note that in the case of unweighted graphs, these orientations correspond to Eulerian orientations.

We solve the configuration LP and this gives for each item $i$ a ratio $x_{A_i}$ and $x_{B_i} = 1 - x_{A_i}$ in which it is allocated to the two agents
wanting it. Call an item fractionally divided if both $x_{A_i}$ and $x_{B_i}$ are strictly positive. We show that for any agent the total utility of the set of
items it is allocated fractionally equals $OPT$ plus the maximum utility element in that set. Using the lemma described in the above paragraph, we can get an allocation of value at least $OPT/2$ for every agent.

\subsection{Related Work}
The \MMA problem falls in the broad class of resource allocation problems -- allocating limited resources subject to constraints -- which are ubiquitous in computer science, economics and operations research. When the resources are divisible, the fair allocation problem, also dubbed as {\em cake-cutting problems} have been extensively studied by economists and political scientists with entire books (for example, \cite{BT}) written on the subject. However, the {\em indivisible} case has not received as much attention. There can be many notions of fairness and apart from the notion we study, another measure of fairness studied from an algorithmic point of view has been that of {\em envy-freeness}. An allocation is {\em envy-free} if every agent prefers the set of items allocated to it to a set allocated to any other agent. Lipton et.al. \cite{LMMS} studied the notion of {\em envy} of an allocation and gave polynomial time algorithms to find an allocation with an absolute bound on the envy. We should remark here that the two notions of fairness are not related -- an envy free allocation could be far from being a max-min allocation, and vice-versa.

The complexity of resource allocation problems also depends on the complexity of the utility functions of agents. As we mention above, the utility functions we deal with in this paper are additive -- for every agent $A$, the total utility of a set $S$ of items is simply $u_A(S) := \sum_{i\in S} u_{A,i}$. However, such an assumption on utilities is too restrictive and more general utility functions have been studied in the literature. Two such utilities are {\em submodular utilities} -- for every agent $A$ and any two subsets $S,T$ of items, $u_A(S) + u_A(T) \ge u_A(S\cup T) + u_A(S\cap T)$, and {\em sub-additive utilities} -- $u_A(S) + u_A(T) \ge u_A(S\cup T)$. Note that submodular utilities are a special case of the sub-additive utilities. Khot and Ponnuswami \cite{KP07} gave a $(2m - 1)$-approximate algorithm for \MMA with sub-additive utilities. Recently, Goemans et.al. \cite{GHIM} using the Asadpour-Saberi \cite{AS}
$\tilde{O}(\sqrt{m})$-algorithm as a black box gave a $\tilde{O}(\sqrt{n} m^{1/4})$-approximation for \MMA with submodular utilities. We note that using our main theorem above, the algorithm of \cite{GHIM} gives a $\tilde{O}(n^{1/2+\epsilon})$-approximation for submodular \MMA in time $n^{O(1/\epsilon)}$. We remark here that nothing better than the factor $2$ hardness is known for \MMA even with the general sub-additive utilities.

As we have already mentioned, \MMA may be viewed as a dual problem to the minimum makespan machine scheduling. Lenstra, Shmoys and Tardos \cite{LST} gave a factor $2$-approximation algorithm for the problem and also showed the problem is NP-hard to approximate to a factor better than $3/2$. Closing this gap has been one of the challenging problems in the field of approximation algorithms. Recently Ebenlendr et.al. \cite{EKS} studied a very restricted setting where each job can only go to two machines and moreover takes the same time on both. For this case they gave a $7/4$ approximation algorithm and also showed even this special case is NP-hard to approximate better than a factor $3/2$. Our investigation of the $2$-restricted \MMA is inspired by \cite{EKS} and our hardness result is similar. However, the ideas behind our approximation algorithm are quite different.

\section{Preliminaries}

Our goal is to produce an $\tilde{O}\left (n^{\epsilon}\right )$-approximate solution in time $n^{O(1/\epsilon)}$.
We assume that $n^{\epsilon}\geq \Omega(\log^8 n)$, and thus $\epsilon\geq \Omega(\log\log n/\log n)$.
We also assume that $n$ is larger than any constant and throughout when we say ``$n$ large enough'' we imply larger than a suitable
constant.
We use $M$ to denote the (guessed) value of the optimal solution. If $M\leq \opt$ then our algorithm produces an $\tilde{O}(n^{\eps})$-approximation, otherwise it returns a certificate that $M>\opt$.
For an agent $A$ and item $i$, we use interchangeably the phrases
``$A$ has utility $\gamma$ for item $i$'' and ``item $i$ has utility $\gamma$ for $A$'' to indicate that
$u_{A,i} = \gamma$. We say that an item $i$ is \emph{wanted} by agent $A$ iff $u_{A,i}>0$.

\subsection{Polynomially Bounded Utilities}\label{sec:polybound}

We give a simple transformation that ensures that each non-zero utility value is between $1$ and $2n$, with at most a
factor $2$ loss in the optimal value.
We can assume w.l.o.g. that any non-zero utility value is at least $1$ (otherwise, we can scale up all utilities appropriately),
and that the maximum utility is at most $M$ (the optimal solution value).
For each agent $A$ and item $i$, we define its new utility as follows. If $u_{A,i}<M/2n$ then $u'_{A,i}=0$; otherwise
$$u'_{A,i} =  u_{A,i} \cdot\frac{2n}{M} .$$
Since the optimal solution value in the original instance is $M$, the optimal solution value in the new instance at most $2n$. Moreover, it is easy to see that this value is at least $n$: consider any agent $A$ and the subset $S$ of items assigned to $A$ by $\opt$. The total utility of $S$ for $A$ is at least $M$, and at least $M/2$ of the utility is received from items $i$ for which $u_{A,i}\geq M/2n$. Therefore, the new utility of set $S$ for $A$ is at least $n$.

It is easy to see that any $\alpha$-approximate solution to the transformed instance implies a $(2\alpha)$-approximate solution to the original instance. Let $M'\leq 2n$ be the maximum utility in the transformed instance.
From here on, we assume that our starting instance is the transformed instance with polynomially bounded utilities,
and will denote $M'$ by $M$ and the new utilities $u'_{A,i}$ by $u_{A,i}$.

\subsection{Canonical Instances}

It will be convenient to work with a structured class of instances that we refer to as \emph{canonical instances}.
The notion of a canonical instance depends on the approximation ratio that we desire.
Given any $\epsilon \geq \Omega(\log\log n/\log n)$, we say an instance $\iset$ of \MMA is \emph{$\epsilon$-canonical}, or simply, {\em canonical}  iff:

\begin{itemize}
\item All agents can be partitioned into two sets, namely, a set $L$ of light agents and a set $H$ of heavy agents.

\item Each heavy agent $A\in H$ is associated with a subset $\Gamma(A)$ of items such that each item
in $\Gamma(A)$ has a utility of $M$ for $A$.

\item Each light agent $A\in L$ is associated with

\begin{itemize}
\item
a {\em distinct} item $h(A)$ that has utility $M$ for $A$ and is referred to as the {\em heavy item} for $A$. Note that if $A\neq A'$ then $h(A)\neq h(A')$,

\item
a parameter $N_A \geq n^{\epsilon}$, and

\item
a set $S(A)$ of items such that each item in $S(A)$ has a utility of $M/N_A$ for $A$.
\end{itemize}

\end{itemize}

Given an assignment of items to agents in the canonical solution, we say that a heavy agent $A$ is {\em satisfied} iff it is assigned one of the items in $\Gamma(A)$, and we say that a light agent $A$ is {\em $\alpha$-satisfied} (for some $\alpha\geq 1$) iff it is either assigned item $h(A)$, or it is assigned at least $N_A/\alpha$ items from the set $S(A)$. In the latter case we say that $A$ is  \emph{satisfied by light items}.  A solution is called {\em $\alpha$-approximate} iff all heavy agents are satisfied and all light agents are $\alpha$-satisfied. Given a canonical instance, our goal is to find an assignment of items to agents that $1$-satisfies all agents.

\begin{lemma}
\label{lem:canonical}
Given an instance $\iset$ of the \MMA problem with optimal solution value $M$, we can produce in polynomial time a canonical instance $\iset'$ such that
$\iset'$ has a solution that $1$-satisfies all agents, and any $\alpha$-approximate solution to $\iset'$ can be converted into a $\max\set{O(\alpha\log n),O(n^{\eps}\log n)}$-approximate solution to $\iset$.
\end{lemma}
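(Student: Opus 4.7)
I would prove the lemma by a bucketing-and-classification reduction. First, discretize utilities into $O(\log n)$ geometric levels; then, per agent, determine a ``dominant utility level'' $\gamma_A$ and classify the agent as heavy (few items needed) or light (many items needed). Construct the canonical instance by setting $\Gamma(A)$, $S(A)$, $N_A$, $h(A)$ accordingly; feasibility and approximation preservation follow from these choices.

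Concretely, round each $u_{A,i}$ down to the nearest power of $2$ (losing factor $2$). Since utilities lie in $[1,M]$, each agent sees $O(\log n)$ distinct non-zero values. For each agent $A$ I would pick $\gamma_A$ (a power of $2$) approximating the utility level that $\Phi^{\ast}$ (an optimal allocation of $\iset$) uses in bulk to satisfy $A$. Since $\Phi^{\ast}$ is unknown, the concrete rule I would try is to set $\gamma_A$ to be the largest power of $2$ such that at least $\lceil M/(\gamma_A \log n) \rceil$ items have utility $\geq \gamma_A$ for $A$; by pigeonhole, any level that $\Phi^{\ast}$ uses in bulk to supply $A$ automatically satisfies this abundance condition, so the deterministic $\gamma_A$ dominates $\Phi^{\ast}$'s. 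Setting $N_A = \lceil M/(\gamma_A \log n)\rceil$: if $N_A \leq n^\eps$ (equivalently $\gamma_A \geq M/(n^\eps \log n)$) I declare $A$ heavy with $\Gamma(A) = \{i : u_{A,i} \geq \gamma_A\}$ and canonical utility $M$ per item; otherwise $A$ is light with $S(A) = \{i : u_{A,i} \in [\gamma_A, 2\gamma_A)\}$ of canonical utility $M/N_A$, together with a distinct heavy item $h(A)$ of canonical utility $M$, drawn via a bipartite matching that pairs each light agent with an item of large original utility.

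For the analysis, feasibility of the canonical instance follows by converting $\Phi^{\ast}$: under $\Phi^{\ast}$, some bucket contributes $\geq M/\Theta(\log n)$ of $A$'s utility, and the deterministic rule guarantees $\gamma_A$ is no smaller than that bucket's level, so the items used by $\Phi^{\ast}$ (or suitable exchange-substitutes for them) fit the canonical structure of $A$. Translating an $\alpha$-approximate canonical solution back to $\iset$: a satisfied heavy $A$ is assigned an item of original utility $\geq \gamma_A = \Omega(M/(n^\eps \log n))$; a light $A$ satisfied by light items receives $\geq N_A/\alpha$ items of original utility $\geq \gamma_A$, totalling $\Omega(M/(\alpha \log n))$; and a light $A$ assigned $h(A)$ receives that item's original utility, which is $\Omega(M/(n^\eps \log n))$ by the matching choice. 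Composing the losses yields $\max\{O(\alpha \log n),\, O(n^\eps \log n)\}$.

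The main obstacle I expect is coordinating the deterministic $\gamma_A$-rule with the distinct-heavy-item assignment so that the resulting canonical instance is simultaneously $1$-satisfiable without direct reference to $\Phi^{\ast}$. The deterministic rule may pick a $\gamma_A$ strictly larger than $\Phi^{\ast}$'s level, so the items in $\Gamma(A)$ (resp.\ $S(A)$) need not be those $\Phi^{\ast}$ gives $A$; one must exhibit an alternative assignment. The cleanest route is to verify a Hall-type condition on the bipartite graph between canonical agents and their candidate items, using the abundance bound $|\{i : u_{A,i} \geq \gamma_A\}| \geq \lceil M/(\gamma_A \log n) \rceil$ to lower-bound neighborhoods, and absorbing any residual slack into the $O(\log n)$ factor already present in the approximation guarantee; ambiguous agents can, if necessary, be split into a heavy canonical copy and a light canonical copy without exceeding the stated bound.
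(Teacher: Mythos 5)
There is a genuine gap, and it is at the heart of the reduction. Your plan commits each original agent $A$ to a \emph{single} deterministically chosen level $\gamma_A$, justified by an abundance condition ($A$ sees at least $\lceil M/(\gamma_A\log n)\rceil$ items of utility $\geq\gamma_A$). But abundance of items for $A$ says nothing about their availability to $A$ in the optimal allocation: those items may all be claimed by other agents, while $\Phi^*$ actually satisfies $A$ in bulk at a strictly lower level. A minimal counterexample: agent $A$ has utility $M$ for a single item $i_1$ which is also the unique item of positive utility for agent $C$, and in addition $A$ has $k$ items of utility $M/k$ (which is how $\Phi^*$ satisfies $A$). Your rule sets $\gamma_A=M$, $N_A=1$, classifies $A$ as heavy with $\Gamma(A)=\{i_1\}=\Gamma(C)$, and the resulting canonical instance is not $1$-satisfiable even though $\iset$ is. No Hall-type condition rescues this, because the failure is not a counting artifact but a wrong commitment: the right level for $A$ depends on the global optimum, which you cannot read off locally. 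The same issue infects your $h(A)$: you need it to be a real item of large original utility for $A$ (so that assigning $h(A)$ back to $A$ yields value), but such an item need not exist and, if it does, it again competes with other agents.

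The paper avoids this by \emph{not choosing a level at all}: each original agent $B$ is replaced by one canonical agent per level (a heavy agent $\chi_0(B)$ for the top bucket and a light agent $\lambda_j(B)$ for each of the $s=O(\log n)$ lower buckets), plus a selector gadget --- $s+1$ heavy agents $\chi_0(B),\ldots,\chi_s(B)$ sharing only $s$ dummy filler items $Y_B$, with each $\chi_j(B)$ ($j\geq 1$) also wanting the dummy heavy item $h(\lambda_j(B))$. By pigeonhole one $\chi_j(B)$ misses out on $Y_B$, which forces either $\chi_0(B)$ to take a genuinely high-utility item of $B$, or some $\lambda_j(B)$ to be satisfied by many real bucket-$j$ items; either way $B$ recovers $\Omega(M/(\alpha\log n))$ or $\Omega(M/(n^\eps\log n))$ utility. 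Crucially, the heavy items $h(\lambda_j(B))$ are \emph{new} items carrying no value back to $B$; they exist only so that unselected levels can be trivially satisfied. This gadget is the missing idea in your proposal: it defers the level choice to the canonical optimum rather than making it up front, which is exactly what makes $1$-satisfiability of $\iset'$ provable from $\Phi^*$ without circularity. Your closing remark about splitting an agent into two copies points in this direction but does not suffice --- the canonical format requires \emph{every} agent to be satisfied, so alternatives among copies must be encoded by a shared scarce resource like $Y_B$, and one needs a copy per level, not two.
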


\begin{proof}
Given an instance $\iset$ of the \MMA problem, we create a canonical
instance $\iset'$ as follows. Define $s=\lfloor \log(M/(n^\eps\log n))\rfloor$. Recall that $M\leq 2n$, so $s \le \log n$ for large enough $n$.
For each agent in $\iset$, the canonical instance $\iset'$ will contain $2s+1$ new agents,
for a total of $m(2s+1)$ agents.
Let $X$ be the set of items in $\iset$. The set $X'$ of items for the instance
$\iset'$ will contain items of $X$ as well as $m(2s)$ additional items that we define later.

Specifically, for each agent $B$ in $\iset$, we create the following collection of new agents and items:
\begin{itemize}
\item
A heavy agent $\chi_0(B)$ and $s$ light agents $\lambda_1(B),\ldots,\lambda_s(B)$ where
each light agent $\lambda_j(B)$ is associated with value $N_{\lambda_j(B)}=M/(s\cdot 2^j) \geq M/(s\cdot 2^s) \ge n^\eps$.

\item
For each $j \in \set{1,\ldots,s}$, if the utility of item $i\in X$ for $B$ is $2^{j-1}<u_{B,i}\leq 2^j$, then agent $\lambda_j(B)$ has utility $s\cdot 2^j=M/N_{\lambda_j(B)}$ for $i$. If $i\in X$ is an item for which $u_{B,i}>2^s$, then $\chi_0(B)$ has utility $M$ for item $i$.

\item
Additionally, for each light agent $\lambda_j(B)$ there is a heavy item $h(\lambda_j(B))$ and a heavy agent $\chi_j(B)$. Item $h(\lambda_j(B))$
has utility $M$ for both $\lambda_j(B)$ and $\chi_j(B)$.

\item
Finally, we have a set of $s$ items $Y_B=\set{i_1(B),\ldots,i_s(B)}$ such that each item in $Y_B$ has a utility of $M$ for
each of the agents in $\{ \chi_0(B), \chi_1(B), \ldots, \chi_s(B) \}$, the set of heavy agents for $B$.
\end{itemize}

This completes the definition of the canonical instance $\iset'$. Consider an optimal solution to $\iset$. We show an assignment that $1$-satisfies all agents. Consider some agent $B$ in the transformed instance, and let $T(B)$ be the set of items assigned to $B$ in the solution to $\iset$. We can partition $T(B)$ into $(s+1)$ sets as follows. The set $T_0\sse T(B)$ contains all items $i$ with $u_{B,i}> 2^s$. For each $j \in \set{1,\ldots,s}$, the set $T_j$ contains all items $i$ with $2^{j-1}<u_{B,i}\leq 2^j$. Assume first that $T_0\neq \emptyset$, and let $i$ be any item in $T_0$. Then we assign item $i$ to heavy agent $\chi_0(B)$. The remaining $s$ heavy agents corresponding to $B$ now get assigned one item from $Y_B$ each. The light agents $\lambda_j(B)$ are assigned their heavy items $h(\lambda_j(B))$. All agents corresponding to $B$ are now $1$-satisfied.
Assume now that $T_0=\emptyset$. Then there is $j \in \set{1,\ldots,s}$, such that the utility of items in $T_j$ is at least $M/s$ for $B$, so $|T_j|\geq  M/(s\cdot 2^{j})=N_{\lambda_j(B)}$. We assign all items in $T_j$ to $\lambda_j(B)$, and $h(\lambda_j(B))$ is assigned to the heavy agent $\chi_j(B)$. Now the remaining $s$ heavy agents are assigned one item of $Y_B$ each. For each one of the remaining light agents, we assign $h(\lambda_{j'}(B))$ to $\lambda_{j'}(B)$.
Therefore, the canonical instance has a solution that $1$-satisfies all agents.

Conversely, consider now any $\alpha$-approximate solution for the canonical instance $\iset'$. Let $B$ be some agent in the original instance. Consider the corresponding set of heavy agents $\chi_0(B),\ldots,\chi_s(B)$. Since there are only $s$ items in the set $Y_B$, at least one of the heavy agents is not assigned an item from this set. Assume first that it is the heavy agent
$\chi_0(B)$. Then it must be assigned some item $i\in X$ for which agent $B$ has utility at least $2^s\geq M/(2n^\eps\log n)$. We then assign item $i$ to agent $B$. Otherwise, assume it is $\chi_j(B)$, $j\neq 0$ that is not assigned an item from $Y_B$. Then $\chi_j(B)$ is assigned item $h(\lambda_j(B))$, and so $\lambda_j(B)$ must be assigned a set $S'$ of  at least $N_{\lambda_j(B)}/\alpha=M/(s\cdot 2^j\cdot \alpha)$ items, each of which has a utility of at least $2^{j-1}$ for $B$. We then assign the items in $S'$ to $B$. Since $s \le \log n)$, we obtain a $\max\set{O(n^{\epsilon}\log n),O(\alpha\log n)}$-approximate solution.
\end{proof}

From now on we focus on finding an approximate solution to the canonical instance. We assume that the optimal solution can $1$-satisfy all agents. From the above claim, a solution that $\alpha$-approximates all agents in the canonical instance will imply a $\max\set{O(n^{\epsilon}\log n),O(\alpha\log n)}$-approximate solution to the original instance.

\subsection{Private Items and Flow Solutions}\label{sec:private}
One of the basic concepts of our algorithm is that of private items and flow solutions defined by them. Throughout the algorithm we maintain an assignment $P$ of private items to agents. Such an assignment is called \emph{good} iff it satisfies the following properties:

\begin{itemize}

\item For every light agent $A\in L$, its private item is $P(A)=h(A)$ (that is, the distinct item of utility $M$ associated with the light agent $A$).

\item An item can be a private item for at most one agent. The set of items that do not serve as private items is denoted by $S$.

\item The set of heavy agents that have private items are denoted by $H'$. The remaining heavy agents are called \emph{terminals} and are denoted by $T$. Item $i$ can be a private item of heavy agent $A\in H$ only if $i\in \Gamma(A)$
    (recall that $\Gamma(A)$ is the set of items for which $A$ has utility $M$).
\end{itemize}

The initial assignment $P$ of private items to heavy agents is obtained as follows. We create a bipartite graph $G=(U,V,E)$, where $U=H$, $V$ is the set of items that do not serve as private items for light agents, and $E$ contains an edge between $A\in U$ and $i\in V$ iff $i\in \Gamma(A)$. We compute a maximum matching in $G$ that determines the assignment of private items to heavy agents. To simplify notation, we say that for a terminal $A\in T$, $P(A)$ is undefined and $\set{P(A)}\triangleq \emptyset$.

\paragraph{The Flow Network}
Given a canonical instance $\iset$ and an assignment $P: L\cup H'\rightarrow I$ of private items, we define the corresponding {\bf directed} flow network $N(\iset,P)$ as follows.
The set of vertices is $\aset\cup I\cup\set{s}$. Source $s$ connects to every vertex $i$ where $i\in S$. If agent $A\in \aset$ has a private item and $i=P(A)$, then vertex $A$ connects to vertex $i$.  If $A$ is a heavy agent and $i\in \Gamma(A)\setminus\set{P(A)}$, then vertex $i$ connects to vertex $A$. If $A$ is a light agent and $i\in S(A)$ then vertex $i$ connects to vertex $A$.
Let $N(\iset,P)$ denote the resulting network.
A feasible integral flow in this is a flow obeying the following constraints.

\renewcommand{\theenumi}{C\arabic{enumi}}

\begin{enumerate}

\item All flow originates at the source $s$. \label{C2}

\item Each terminal agent $A\in T$ receives one flow unit. \label{C3}

\item For each heavy agent $A\in H$, if the incoming flow is $1$ then the outgoing flow is $1$; otherwise both are $0$. \label{C4}

\item For each item $i\in I$, if the incoming flow is $1$ then the outgoing flow is $1$; otherwise both are $0$.\label{C5}

\item For each light agent $A\in L$, if the incoming flow is $N_A$ then the outgoing flow is $1$; otherwise both are $0$. \label{C6-Light-constraint}

\end{enumerate}

An {\em integral flow} obeying the above conditions is called a \emph{feasible flow}.

\begin{lemma}
An optimal integral solution to the canonical instance $\iset$ gives a feasible flow in $N(\iset,P)$.
\end{lemma}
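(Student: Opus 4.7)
The plan is to construct a feasible flow $f$ in $N(\iset,P)$ directly from the $1$-satisfying solution $\Phi^*$ by building one flow-carrying tree $\tset_A$ per terminal $A\in T$, and then setting $f(e)=1$ on every edge in $\bigcup_{A\in T}\tset_A$ and $f(e)=0$ elsewhere.

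I would grow each $\tset_A$ iteratively, starting from the edge $\phi^*(A)\to A$ (which exists because $\phi^*(A)\in\Gamma(A)$). Then, while there is an item $i$ in the tree whose incoming edge has not yet been added, process $i$: if $i\in S$, add $s\to i$ and close that branch; otherwise $i=P(A_1)$ for a unique $A_1\in L\cup H'$, and I add the edge $A_1\to i$ together with the appropriate incoming edges for $A_1$ -- for heavy $A_1$, the single edge $\phi^*(A_1)\to A_1$, and for light $A_1$, any $N_{A_1}$ edges $i'\to A_1$ with $i'\in\Phi^*(A_1)$. The key invariant is that every item $i$ added to the tree enters because $i\in\Phi^*(A)$ for some agent $A$ already in the tree; thus if $i=P(A_1)$, the allocation property yields $P(A_1)\notin\Phi^*(A_1)$, which forces $\phi^*(A_1)\neq P(A_1)$ when $A_1$ is heavy and forces $\Phi^*(A_1)\subseteq S(A_1)$ with $|\Phi^*(A_1)|\ge N_{A_1}$ when $A_1$ is light. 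In particular, $i$ is never an orphan private item, so the recursion can always proceed; it terminates since the instance is finite.

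I would then argue that the trees are pairwise vertex-disjoint. Each non-root vertex of a tree has a canonical outgoing edge: for an item this is the edge to its unique $\Phi^*$-owner, and for a non-terminal agent this is the edge to its unique private item. Since these are determined by $\Phi^*$ and $P$ alone, backtracking along outgoing edges reaches a unique terminal, so every vertex belongs to at most one $\tset_A$. This makes $f$ well-defined, and the flow constraints can now be verified locally: (C2) flow originates only at the $s\to i$ edges at the leaves; (C3) each terminal is a tree root and absorbs exactly one unit via $\phi^*(A)\to A$; (C4) each heavy intermediate agent has inflow and outflow both one; (C6) each light intermediate agent has inflow $N_{A_1}$ (from the $N_{A_1}$ chosen items) and outflow one; (C5) each item in some tree has unit inflow (from $s$ or its private owner) and unit outflow (to its $\Phi^*$-owner), while items outside all trees carry no flow.

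The main technical subtlety is avoiding orphan items -- private items $P(A_1)$ unassigned by $\Phi^*$ -- during the recursive tree growth; this is precisely what the invariant above rules out. A secondary point is picking exactly $N_{A_1}$ items at each light agent so that C6 is met as an equality, which is made possible by the fact that $P(A_1)=h(A_1)$ is consumed elsewhere, forcing $\Phi^*$ to satisfy $A_1$ with at least $N_{A_1}$ items from $S(A_1)$.
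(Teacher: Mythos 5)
Your proof is correct, and it rests on the same key observation as the paper's: whenever a private item $P(A_1)$ is consumed by some other agent, the allocation property forces $A_1$ to be satisfied by a non-private item of $\Gamma(A_1)$ (if heavy) or by $N_{A_1}$ items of $S(A_1)$ (if light), so the flow can always be propagated one step further. The constructions differ, though. The paper's proof is a purely local definition: it first completes the allocation so that every private item is assigned to someone, then places one flow unit on $i\to A$ whenever a heavy agent $A$ is assigned some $i\neq P(A)$, on each $i'\to A$ for a light agent satisfied by light items, on $A\to P(A)$ for every such non-terminal $A$, and on $s\to i$ for every assigned $i\in S$; feasibility is then checked vertex by vertex. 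Your construction instead grows one tree per terminal, top-down, and takes the flow to be the indicator of the union of these trees. The trade-off is this: the paper's version is shorter, but taken literally it can place flow on directed cycles (for instance, two heavy agents each assigned the other's private item yields a circulation disjoint from $s$), which technically conflicts with the requirement that all flow originate at the source and would have to be cancelled; your top-down construction guarantees by design that every flow unit traces back to a leaf in $S$ and hence to $s$, and your ``canonical outgoing edge'' argument (each item points to its unique $\Phi^*$-owner, each non-terminal agent to its unique private item) correctly establishes that the trees are vertex-disjoint, so no capacity is exceeded. Both arguments are valid; yours is the more careful of the two on the acyclicity point, at the cost of some extra bookkeeping.
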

\begin{proof}
Consider the optimal solution. We assume w.l.o.g. that all items in $I\setminus S$ are assigned: otherwise if $i\in I\setminus S$ is not assigned by the solution, we can assign it to the unique agent $A\in\aset$ such that $P(A)=i$. Let $A\in H$ be a heavy agent. If it is assigned an item $i\neq P(A)$, then we send one flow unit from vertex $i$ to vertex $A$. If $A\not\in T$, then it also sends one flow unit to $P(A)$. Consider now a light agent $A\in L$. If it is not assigned $P(A)$, then there is a collection $S'\sse S(A)$ of $N_A$ items assigned to $A$. Each of these items sends one flow unit to $A$, and $A$ sends one flow unit to $P(A)$. Finally, each item $i\in S$ that participates in the assignment receives one flow unit from $s$. It is easy to see that this is a feasible flow.
\end{proof}

We say that a flow is {\em $\alpha$-feasible} iff constraints (\ref{C2})--(\ref{C5}) hold for it, and Constraint (\ref{C6-Light-constraint}) is replaced by the following relaxed constraint:

\begin{enumerate}
\setcounter{enumi}{5}
\item  For each light agent $A\in L$, if the incoming flow is at least $N_A/\alpha$ then the outgoing flow is $1$; otherwise both are $0$.\label{C6-Light-constraint-relaxed}
\end{enumerate}

\begin{lemma} An integral $\alpha$-feasible flow in $N(\iset,P)$ gives an $\alpha$-approximate solution for the canonical instance $\iset$.\end{lemma}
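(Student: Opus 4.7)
The plan is to reverse the construction in the previous lemma: given an integral $\alpha$-feasible flow $f$ in $N(\iset,P)$, I will read off an allocation by ``peeling'' each edge that carries one unit of flow into an assignment of that edge's item to that edge's agent. Concretely, I would first note that by constraints (\ref{C4}), (\ref{C5}), and the relaxed version of (\ref{C6-Light-constraint}), every vertex has flow value either $0$ or ``saturated'' ($1$ for agents and items, $N_A/\alpha$ incoming for a light agent $A$). So the subnetwork carrying flow decomposes cleanly into a disjoint union of paths, each beginning at $s$ and ending at some terminal in $T$, with light-agent vertices acting as ``confluence points'' that absorb $N_A/\alpha$ parallel paths into one.

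Given this flow, I define the allocation $\Phi$ as follows. For each item $i\in S$ with $f(s,i)=1$, let $A$ be the unique agent with $f(i,A)=1$ (guaranteed by (\ref{C5})) and assign $i$ to $A$. For each private item $i=P(A)$, examine the agent $A$ that holds it: if $f(A,i)=1$ (i.e., $A$ is forwarding its private item), then by (\ref{C5}) some agent $A'$ has $f(i,A')=1$ and I assign $i$ to $A'$; otherwise, if $A$ has zero incoming flow, I assign $i$ to $A$ itself. This immediately gives each item to at most one agent: the items in $S$ are handled by a single outgoing edge from the source side, while each private item $i=P(A)$ is handled by exactly one of the two disjoint cases above.

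It remains to check that each agent is suitably satisfied. For a terminal $A\in T$: by (\ref{C3}) it receives one unit, which by the network definition must come from some $i\in\Gamma(A)$ (since terminals have no private item, and their only incoming edges are from items in $\Gamma(A)$); this item $i$ is the one assigned to $A$, so $A$ is satisfied. For a heavy agent $A\in H'$: if its incoming flow is $1$, the incoming edge carries an item $i\in\Gamma(A)\setminus\{P(A)\}$ which is assigned to $A$; if its incoming flow is $0$, then $P(A)\in\Gamma(A)$ is assigned to $A$ by the rule above. For a light agent $A\in L$: if its incoming flow is $0$ then $h(A)=P(A)$ is assigned to $A$, satisfying it; if its incoming flow is $\geq N_A/\alpha$ then at least $N_A/\alpha$ distinct items of $S(A)$ send one unit into $A$, and each such item is assigned to $A$ by the construction, so $A$ is $\alpha$-satisfied.

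The only subtle point, which I would verify carefully, is consistency between the two cases when unfolding private items: I must ensure that if a heavy agent $A\in H'$ has zero incoming flow, then no other agent is already claiming $P(A)$. This is exactly where (\ref{C5}) plays the key role—if $P(A)$ were receiving flow from elsewhere, it would need to send flow out, which by the network structure can only come from $A$, contradicting $A$'s zero outgoing flow. Once this is verified, the allocation $\Phi$ satisfies all heavy agents and $\alpha$-satisfies all light agents, yielding the claimed $\alpha$-approximate solution.
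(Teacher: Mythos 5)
Your proposal is correct and follows essentially the same route as the paper: assign to each agent the items sending flow into it, and assign $P(A)$ to any agent $A$ with no flow through it. The paper states this in two sentences and leaves the verification implicit; your additional checks (well-definedness of the allocation and the fact that $P(A)$ cannot be claimed elsewhere when $A$ carries no flow, since the only edge into $P(A)$ is from $A$) are exactly the details the paper omits.
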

\begin{proof}
Consider an integral $\alpha$-feasible flow. Consider any agent $A$ that may be heavy or light.
We simply assign it every item that sends flow to it if there is such an item. If there is no such item, we assign $P(A)$ to $A$. It is easy to verify that this is an $\alpha$-approximate solution, since every terminal receives one flow unit and all other agents that do not have any flow going through them can be assigned their private items.
\end{proof}

Let $I^*$ be the set of items and $H^*$ the set of heavy agents reachable from $s$ by paths that do not contain light agents. A useful property of our initial assignment of private items is that $H^*$ does not contain any terminals (otherwise we could have increased the matching). Throughout the algorithm, the assignment of private items to $H^*$ does not change, and the above property is preserved.
Given any pair $v,v'$ of vertices and any path $p$ that starts at $v$ and ends at $v'$, we say that $p$ connects $v$ to $v'$  \emph{directly} it does not contain any intermediate vertices representing light agents (we allow $v$ and $v'$ to be light agents under this definition).  We say that $v$ is {\em directly connected} to $v'$ if such a path exists. Similarly, given an integral flow solution, we say that $v$ sends flow directly to $v'$ iff flow is being sent via path $p$ that does not contain light agents as intermediate vertices.

\paragraph{An Equivalent Formulation:}
A flow-path $p$ is called a \emph{simple path} iff it does not contain any light agents as intermediate vertices, and either a) starts and ends at light agents, or b) starts at a light agent and ends at a terminal, or c) starts at the source $s$ and ends at a light agent.
The problem of finding an integral flow satisfying the properties \ref{C2}--\ref{C5} and \ref{C6-Light-constraint-relaxed},
is {\em equivalent} to finding a collection of simple paths $\pset$ such that:

\renewcommand{\theenumi}{\^{C}\arabic{enumi}}

\begin{enumerate}
\item All paths in $\pset$ are internally-disjoint (i.e. they do not share intermediate vertices). \label{property 1 simple paths}

\item Each terminal has exactly one path $p\in \pset$ terminating at it.

\item For each light agent $A\in L$, there is at most one path $p\in \pset$ starting at $A$, and if such a path exists,
then there are at least $N_A/\alpha$ paths in $\pset$ terminating at $A$. \label{property last simple paths}
\end{enumerate}

\subsection{Structured Near-Optimal Solutions for Canonical Instances}

Given a canonical instance $\iset$ and an assignment $P$ of private items, an optimal solution $\opt$ to $\iset$ defines a feasible integral flow in $N(\iset,P)$. Consider graph $G$ obtained from $N(\iset,P)$ as follows: we remove the source $s$ and all its adjacent edges. We also remove all edges that do not carry flow in $\opt$.
Graph $G$ is then a collection of disjoint trees. Some of the trees in this collection are simply isolated vertices that correspond to agents  and items with no flow passing through them. Such agents are assigned their private items. We ignore these trivial trees in this section and focus only on trees with $2$ or more vertices.

Each (non-trivial) tree $\tau$ in the collection has a terminal $t\in T$ at its root. Each heavy agent in the tree has one incoming and one outgoing edge, and each light agent $A$ has $N_A$ incoming edges and one outgoing edge.  The leaves are items in $S$. Such a solution is called an {\em $h$-layered forest} iff for every tree $\tau$, the number of light agents on any leaf-to-root path is the same and is bounded by $h$; we denote this number by $h(\tau)$. Note that $h(\tau)\geq 1$ must hold since there are no direct paths between $s$ and the terminals, and thus $1\leq h(\tau)\leq h$ for all $\tau$. It is convenient to work with layered solutions, and we now show that for any canonical instance,
there exists an $(h+1)$-approximate $h$-layered solution for $h=8/\epsilon$.

\begin{lemma} \label{lemma: h-level solution} There is an $(h+1)$-approximate solution $\sset$ to a canonical instance instance $\iset$, that defines an $h$-layered forest, for $h=8/\epsilon$.
\end{lemma}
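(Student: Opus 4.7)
The plan is to start with the optimal integral feasible flow in $N(\iset,P)$ (guaranteed by the previous lemma) and, for each tree $\tau$ in its flow forest rooted at a terminal, replace $\tau$ by an $(h+1)$-approximate $\ell(\tau)$-layered modification with $\ell(\tau)\le h$. Agents that get dropped from the modification are satisfied via their private items: each light agent $A$ receives its distinct heavy item $h(A)$ of utility $M$, and each heavy agent with a private item receives $P(A)\in\Gamma(A)$ of utility $M$. These items do not appear anywhere in the retained modification, so no conflict arises.

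The per-tree modification is constructed bottom-up by induction on the light-agent structure of $\tau$. For each light agent $A$ in $\opt$'s flow, let $L(A)$ be the number of children of $A$ (in the flow tree rooted at the terminal) that lie in $S$ -- its ``shallow leaves'' -- so $L(A)+R(A)=N_A$, where $R(A)$ counts the children of $A$ that lead, through chains of heavy agents and items, down to a deeper light agent. In the \emph{shallow case} $L(A)\ge N_A/(h+1)$, I make $A$'s modified subtree $1$-layered by letting $A$ be fed by only its shallow leaves; this $(h+1)$-satisfies $A$. In the \emph{deep case} $L(A)<N_A/(h+1)$, we have $R(A)>N_A\,h/(h+1)$; letting $B_1,\dots,B_m$ (with $m=R(A)$) enumerate the deeper light agents reached by these deep children, the inductive hypothesis supplies, for each $j$, an $(h+1)$-approximate $\ell_j$-layered modification of $B_j$'s subtree with $\ell_j\le h$. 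A pigeonhole over $\ell\in\{1,\dots,h\}$ then produces some $\ell^*$ for which at least $m/h\ge N_A/(h+1)$ of the $B_j$ satisfy $\ell_j=\ell^*$; I retain precisely these $B_j$ (with their $\ell^*$-layered modifications) and drop the rest, making $A$'s subtree $(\ell^*+1)$-layered.

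Applied at the unique level-$1$ light agent of $\tau$, this procedure yields a modification of depth at most $h+1$. The off-by-one between $h+1$ and $h$ is absorbed by the generous constant $8$ in $h=8/\epsilon$; alternatively, one can restrict the pigeonhole to $\{1,\dots,h-1\}$ at each step, exploiting the strict inequality $R(A)>N_A\,h/(h+1)$ in the deep case together with some careful bookkeeping, to get the tight bound $\ell(\tau)\le h$. Either way, one obtains an $h$-layered flow forest.

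The approximation guarantee is then straightforward to verify: every retained light agent has incoming flow at least $N_A/(h+1)$ and is thus $(h+1)$-satisfied; heavy agents on retained chains receive items in $\Gamma(\cdot)$ by construction of $N(\iset,P)$ and hence get utility $M$; terminals receive their one unit of flow, and hence an item of utility $M$; and every dropped agent is fully satisfied by its private item of utility $M$. The main obstacle in the proof is controlling the depth bound in the inductive pigeonhole step -- the natural estimate gives $\ell(\tau)\le h+1$, and one must either loosen the constant or tighten the bookkeeping to get exactly $\le h$.
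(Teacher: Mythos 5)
Your construction follows the same blueprint as the paper's proof (prune the optimal flow forest so that each retained light agent is fed by at least $N_A/(h+1)$ children of a uniform layer, and satisfy every dropped agent with its private item), but it has a genuine gap exactly at the point you flag: the depth bound is never actually proved, and neither of your two proposed patches closes it. Your induction assumes $\ell_j\le h$ for the children and concludes only $\ell^*+1\le h+1$ for the parent, so the recursion does not preserve its own hypothesis. ``Absorbing the off-by-one into the constant $8$'' does not help, because the $+1$ accrues at \emph{every} level of nesting of light agents, and the nesting depth of light agents in the optimal flow tree is not a priori bounded by $h$ or by any function of $\eps$ alone --- along a single root-to-leaf path each light agent may have just one child leading to the next light agent, so the depth can be as large as roughly $m$. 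Likewise, restricting the pigeonhole to $\{1,\dots,h-1\}$ requires that at least $N_A/(h+1)$ of the $B_j$ have $\ell_j\le h-1$, and nothing in your argument rules out that all of them have $\ell_j=h$.

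The missing ingredient is a counting argument, which is precisely the heart of the paper's proof. In your construction, every retained light agent in the deep case keeps at least $N_A/(h+1)\ge n^{\eps}/(h+1)$ retained light-agent children, all at the common layer $\ell^*$. Hence a light agent whose modified subtree has $k$ layers contains at least $\left(n^{\eps}/(h+1)\right)^{k-1}$ light agents at layer $1$ in its subtree. By the standing assumptions $n^{\eps}\ge\log^8 n$ and $h=8/\eps\le\log n/\log\log n$, one has $h+1\le n^{\eps/8}$ and therefore $\left(n^{\eps}/(h+1)\right)^{h}\ge n^{7\eps h/8}=n^{7}>m$. So no modified subtree can have more than $h$ layers, which is exactly the bound your induction needs (the paper runs this same count contrapositively: an agent belonging to no level $\le h$ would force more than $m$ level-$1$ agents into its subtree). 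With this paragraph added, your bottom-up pigeonhole version is a valid, essentially equivalent rendering of the paper's argument; without it, the lemma's layer bound is unsupported.
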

\begin{proof}
We will start with an optimal solution $\opt$ for $\iset$, and convert it into an $h$-layered solution in which every light agent will be $(h+1)$-satisfied.
Consider any tree $\tau$ in the collection of disjoint trees corresponding to $\opt$. We convert it into an $h$-layered tree in $h$ iterations.

A light agent $A\in L$ that belongs to $\tau$ is called a \emph{level-$1$ agent} iff it receives at least $N_A/(h+1)$ flow units directly from items in $S$. Let $L_1(\tau)$ be the set of all level-1 light agents of $\tau$. Consider some agent $A\in L_1(\tau)$. For each child $v$ of $A$ in $\tau$, if $v$ is not on a simple path $p$ connecting an item of $S$ to $A$, then we remove $v$ together with its sub-tree from $\tau$.

In general, for $j>1$, a light agent $A$ is a {\em level-$j$ agent} iff it receives at least $N_A/(h+1)$ flow units directly from level-$(j-1)$ agents.
In iteration $j$, we consider the set $L_j(\tau)$ of level-$j$ agents. Let $A\in L_j(\tau)$ be any such agent, and
let $v$ be any child of $A$ in $\tau$. If $v$ lies on a simple path $p$ connecting some level-$(j-1)$ agent to $A$ in $\tau$, then we do nothing. Otherwise, we remove $v$ and its subtree from $\tau$. We claim that after iteration $h$ is completed, all remaining light agents in $\tau$ belong to $\lset=\cup_{j=1}^hL_j(\tau)$. Thus we can make every $\tau$ $h$-layered implying a $(h+1)$-approximate $h$-layered solution.

Assume otherwise. Then we can find a light agent $A\notin \lset$ in $\tau$, such that all light agents in the sub-tree of $A$ belong to $\lset$ (Start with arbitrary $A\not\in\lset$. If there is a light agent $A'$ in the subtree of $A$ that does not belong to $\lset$, then continue with $A'$. When this process stops we have agent $A$ as required). Agent $A$ receives at least $N_A$ flow units in $\tau$, but it does not belong to $L_j(\tau)$ for $1\leq j\leq h$. That is, it receives less than $N_A/(h+1)$ flow units directly from light agents in $L_j(\tau)$ for $1\le j\le h-1$ and less than $N_A/(h+1)$ flow units directly from $S$.
It follows that it must receive at least $N_A/(h+1)\geq n^{\eps}/(h+1)$ units of flow directly from agents in $L_h$. Each one of these agents receives at least $n^{\eps}/(h+1)$ flow from agents in $L_{h-1}(\tau)$ and so on. So for each $j: 1\leq j\leq h$, the sub-tree of $A$ contains at least $(n^{\eps}/(h+1))^{h-j+1}$ agents from $L_j(\tau)$, and in particular it contains $(n^{\eps}/(h+1))^h$ agents from $L_1(\tau)$. We now show that by our choice of $\eps,h$, $(n^{\eps}/(h+1))^h > m$ which would be a contradiction.

Recall that $n^{\eps}\geq \log^8n$ and $1>\eps\geq 8\log\log n/\log n$. So $8 \le (h=8/\eps) \le \log n/\log\log n$.
Thus, for $n$ large enough $(h+1) \le 2h \le 2\log n/(\log\log n) \le \log n \le n^{\eps/8}$ giving us $(n^{\eps}/(h+1))^h \ge (n^{7\eps/8})^h \ge n^7 \ge n > m$.
\end{proof}

From now on we focus on $h$-layered instances. For simplicity, we scale down all values $N_A$ for $A\in\aset$ by the factor of $(h+1)$, so that an optimal $h$-layered solution can $1$-satisfy all agents. Note that this increases the approximation factor of our algorithm by the factor of $(h+1)$.

\subsection{The [BS] Tree Decomposition}
One of the tools we use in our construction is a tree-decomposition theorem of Bansal and Sviridenko~\cite{BS}.
We remark that this theorem has no connection to the trees induced in the flow network $N(\iset,P)$ by a feasible solution. The setup for the theorem is the following. We are given an {\bf undirected} bipartite graph $G=(\aset,I,E)$ where $\aset$ is a set of agents, $I$ is a set of items and $E$ contains an edge $(A,i)$ iff $i$ has utility $M$ for $A$. Additionally, every agent $A$ is associated with a value $0\leq x_A\leq 1$. (We can think of $x_A$ as the extent to which $A$ is satisfied by light items in a fractional solution). We are also given a fractional assignment $y_{A,i}$ of items, such that:

\begin{eqnarray}
\forall i\in I&&\sum_{A\in\aset}y_{A,i}\leq 1 \label{constraint 1 for BS}\\
\forall A\in \aset&&\sum_{(A,i)\in E}y_{A,i}=1-x_A \label{constraint 2 for BS}\\
\forall (A,i)\in E&&0\leq y_{A,i}\leq 1 \label{last constraint for BS}
\end{eqnarray}

The theorem of~\cite{BS} shows that such an assignment can be decomposed into  a more convenient structure. This structure is a collection of disjoint trees in graph $G$. For each such tree $\tau$ the summation of values $x_A$ for agents in $\tau$ is at least $\half$, and moreover if $\aset(\tau)$ is the set of agents of $\tau$ and $I(\tau)$ is the set of items of $\tau$, then for each agent $A\in \aset(\tau)$, it is possible to satisfy all agents in $\aset(\tau)\setminus\set{A}$ by items in $I(\tau)$.

\begin{theorem} (\cite{BS})\label{thm: BS}
There exists a decomposition of $G=(\aset,I,E)$ into a collection of disjoint trees $T_1,T_2,\ldots,T_s$, such that, for each tree $T_j$, either (1) $T_j$ contains a single edge between some item $i$ and some agent $A$, or (2) the degree
of each item $i\in I(T_j)$ is $2$ and $\sum_{A\in\aset(T_j)} x_A > 1/2$.
\end{theorem}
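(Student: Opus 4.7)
My first step is to replace $y$ by an extreme point of the polytope in the $y$-variables obtained by fixing the $x_A$ values and imposing constraints (\ref{constraint 1 for BS})--(\ref{last constraint for BS}). A standard uncrossing argument then shows that the support of $y$ forms a forest $F$: if the support contained a cycle, bipartiteness forces the cycle to have even length, and alternately perturbing $y$ by $\pm\epsilon$ around this cycle preserves each agent-sum (constraint (\ref{constraint 2 for BS})), each item-sum (constraint (\ref{constraint 1 for BS})), and the $[0,1]$ bounds for small $\epsilon$, contradicting extremality. So I may assume the support of $y$ is a forest $F$.

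\textbf{Stage 2 (Building the output trees).} Process each connected component $C$ of $F$ separately. If $C$ is a single edge, output it as a type-(1) tree. Otherwise, root $C$ at an arbitrary agent and process bottom-up: whenever a leaf item $i$ (support-degree $1$) with parent agent $A$ is encountered, carve off the edge $(A, i)$ as a type-(1) single-edge output. In the residual (every item has support-degree $\geq 2$), I would greedily grow a ``pending'' subtree $P$ bottom-up, appending one leaf-branch at a time, and cut $P$ off as a type-(2) output as soon as every item of $P$ has degree exactly $2$ in $P$ and $\sum_{A \in P} x_A > 1/2$, then resume with an empty pending subtree.

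\textbf{Key accounting.} For any candidate type-(2) subtree $T_j$ where every item has degree exactly $2$ within $T_j$ and all support-edges of agents of $T_j$ lie inside $T_j$, the tree identity $|\aset(T_j)| = |I(T_j)| + 1$ (obtained by double-counting edges: $|\aset(T_j)| + |I(T_j)| - 1 = 2|I(T_j)|$), combined with summing (\ref{constraint 2 for BS}) over agents of $T_j$ and (\ref{constraint 1 for BS}) over items of $T_j$, yields
\[
\sum_{A \in \aset(T_j)} (1 - x_A) \;=\; \sum_{i \in I(T_j)} \sum_{A : (A,i) \in T_j} y_{A,i} \;\leq\; |I(T_j)| \;=\; |\aset(T_j)| - 1,
\]
which rearranges to $\sum_{A \in \aset(T_j)} x_A \geq 1 > \tfrac{1}{2}$, confirming condition (2) comfortably.

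\textbf{Main obstacle.} The delicate part will be handling items of support-degree three or more, since such an item cannot appear in a type-(2) tree. My plan is to treat these items as cut points: at such an item $i$, pair two chosen incident edges together inside one output tree containing $i$ and route the remaining incident subtrees into separately processed components. A careful charging scheme, accounting for the $y$-values at the split edges, is needed to ensure that the ``leftover'' pieces at every split can still be partitioned into valid type-(1) single edges and type-(2) subtrees satisfying the accounting above. I expect this case analysis, together with verifying that every agent and item ends up in some output tree, to be the main technical work in the proof.
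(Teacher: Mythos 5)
Your Stage~1 (reduction to forest support) and your carving of degree-one items as single-edge trees match the paper's Steps~1 and~2, and your accounting identity is exactly the paper's computation \emph{for the special case} of a tree in which every item has degree $2$ and every agent has all of its support edges inside the tree; there it indeed gives $\sum_A x_A \geq 1$. The genuine gap is the degree-$\geq 3$ case, which you explicitly defer to ``a careful charging scheme.'' This is not a routine detail: it is precisely the step that produces the constant $1/2$ in the theorem statement (rather than $1$), and your proposal contains no mechanism for it. The missing idea is a selection rule plus a relaxed accounting. Root the tree at an item and take a deepest item $i$ of degree at least $3$ such that all items strictly below it have degree $2$. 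The $y$-values of the edges incident to $i$ sum to at most $1$ by constraint~(\ref{constraint 1 for BS}); since either $i$ has a parent edge of positive $y$-value or $i$ is the root with at least three children, some child $A_j$ of $i$ satisfies $y_{A_j,i} < 1/2$. Sever the edge $(A_j,i)$ and output the subtree $\tau'$ rooted at $A_j$ as a type-(2) tree. Now your accounting changes in exactly one place: agent $A_j$ has one support edge, of value $y_{A_j,i}$, lying outside $\tau'$, so
$$\sum_{A'\in\aset(\tau')}(1-x_{A'}) - y_{A_j,i} \;=\; \sum_{i'\in I(\tau')}\sum_{A':(A',i')\in\tau'} y_{A',i'} \;\leq\; |I(\tau')| \;=\; |\aset(\tau')|-1,$$
which rearranges to $\sum_{A'} x_{A'} \geq 1 - y_{A_j,i} > 1/2$. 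Each such cut lowers the degree of $i$ by one, so iterating reduces every surviving component to the all-degree-$2$ case you already handled (or to single edges). Without the ``cut the child edge carrying less than $1/2$'' rule, there is no reason the severed pieces satisfy the $1/2$ bound, and your greedy ``pending subtree'' growth has no stated invariant guaranteeing it ever reaches a cuttable state; as written the proof is therefore incomplete.
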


\begin{corollary} For each tree $T_j$ in the decomposition, for each agent $A\in \aset(T_j)$, it is possible to satisfy all agents in $\aset(T_j)\setminus\set{A}$ by items in $I(T_j)$.
\end{corollary}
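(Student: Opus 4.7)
The plan is to handle the two types of trees in the decomposition separately, using the degree condition in case (2) as the main structural lever.

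First, if $T_j$ is of type (1), then it is a single edge $(A_0,i_0)$ between one agent and one item. For any choice of $A\in \aset(T_j)=\{A_0\}$, the set $\aset(T_j)\setminus\set{A}$ is empty, so the claim holds vacuously.

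Second, and this is the main case, suppose $T_j$ is of type (2), so every item $i\in I(T_j)$ has degree exactly $2$ in $T_j$. I would begin with a simple counting argument. Let $k=|\aset(T_j)|$ and $\ell=|I(T_j)|$. Since $T_j$ is a tree it has $k+\ell-1$ edges, and since $T_j$ is bipartite between agents and items, the sum of item-degrees equals the number of edges. The degree-$2$ condition then yields $2\ell = k+\ell-1$, i.e.\ $k=\ell+1$. Thus $T_j$ contains exactly one more agent than items, and in particular all leaves of $T_j$ must be agents (a leaf item would have degree $1\neq 2$).

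Next, fix any agent $A\in\aset(T_j)$ and root $T_j$ at $A$. Because every non-root item has degree $2$ in $T_j$, each item has exactly one child and one parent in the rooted tree, and because leaves are agents, every agent $A'\neq A$ has a well-defined parent item $\pi(A')\in I(T_j)$. Now define the assignment $\Phi(A')=\pi(A')$ for every $A'\in \aset(T_j)\setminus\set{A}$. Since each item in $T_j$ has exactly one child (its unique descendant agent in the rooted tree), the map $\pi$ is injective, so $\Phi$ assigns distinct items to distinct agents. Each assigned pair $(A',\pi(A'))$ is an edge of $T_j$, hence of $G$, so $u_{A',\pi(A')}=M$ and $A'$ is satisfied. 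This proves the corollary in case (2), and hence in general.

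The argument has no real obstacle; the only thing one must double-check is that the degree-$2$ condition on items (which is the content of Theorem~\ref{thm: BS}(2)) really does force the leaves to be agents, after which rooting at the designated $A$ gives the matching between the remaining agents and the items of $T_j$ essentially for free by the $k=\ell+1$ count.
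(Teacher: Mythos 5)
Your proof is correct and follows essentially the same route as the paper: root $T_j$ at $A$ and assign each remaining agent its parent item, with the degree-$2$ condition on items guaranteeing that no item is assigned twice. The extra details you supply (the vacuous type-(1) case and the count $|\aset(T_j)|=|I(T_j)|+1$) are fine but not needed beyond what the paper's one-line argument already uses.
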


\begin{proof} Root tree $T_j$ at vertex $A$. Now every agent $A'\neq A$ is assigned item $i$, where $i$ is the parent of $A'$ in $T_j$. Since the degree of every item is at most $2$, this is a feasible assignment.
\end{proof}

\begin{proof} (of Theorem~\ref{thm: BS})
We remove from $E$ all edges $(A,i)\in E$ with $y_{A,i}=0$. Let $E^*\sse E$ be the set of edges $(A,i)$ with $y_{A,i}=1$. We remove from $G$ edges in $E^*$ together with their endpoints.

\noindent
{\bf Step 1}: Converting $G$ into a forest. We will transform values $y_{A,j}$ so that the set of edges with non-negative values $y_{A,j}$ forms a forest, while preserving constraints (\ref{constraint 1 for BS})--(\ref{last constraint for BS}), as follows. Suppose there is a cycle $\cset$ induced by edges of $E$. Since the cycle is
even (the graph being bipartite), it can be decomposed into two matchings $M_1$ and $M_2$. Suppose the smallest value $y_{A,i}$ for any edge  $(A,i)\in M_1\cup M_2$ is $\delta$. For each $(A,i)\in M_2$, we increase $y_{A,i}$ by $\delta$ and for each $(A,i)\in M_1$ we decrease $y_{A,i}$ by $\delta$. It is easy to see that constraints~(\ref{constraint 1 for BS})--(\ref{last constraint for BS}) continue to hold, and at least one edge $(A,i)\in M_1\cup M_2$ has value $y_{A,i}=0$. All edges $(A,i)$ with $y_{A,i}=0$ are then removed from $E$, and all edges $(A,i)$ with $y_{A,i}=1$ are added to $E^*$ with $A$ and $i$ removed from $G$. We can continue this process until no cycles remain in $G$.

We now fix some tree $\tau$ in $G$. While there exists an item $i\in I(\tau)$ of degree $1$, we perform Step 2.

\noindent
{\bf Step 2}:  If there is an item $i$ in $\tau$ with degree $1$, then let $A$ be the agent with $(A,i)\in E$. We set $y_{A,i}=1$ and for all $i'\neq i$, $y_{A,i'}=0$. We then add $(A,i)$ to $E^*$, removing the edge and both its endpoints from $G$. Notice that constraints (\ref{constraint 1 for BS})--(\ref{last constraint for BS}) continue to hold.

Assume now that the degree of every item $i\in I(\tau)$ is $2$. Clearly then $|\aset(\tau)|=|I(\tau)|-1$. Then $\sum_{A\in \aset(\tau)}\sum_{i\in I(\tau)}y_{A,i}\leq |I(\tau)|= |\aset(\tau)|-1$. On the other hand,
$\sum_{A\in \aset(\tau)}\sum_{i\in I(\tau)}y_{A,i}=\sum_{A\in\aset(\tau)}(1-x_A)=|\aset(\tau)|-\sum_{A\in \aset(\tau)}x_A$. Therefore, $\sum_{A\in \aset(\tau)}x_A\geq 1$.

Otherwise, while there is an item in $I(\tau)$ of degree greater than $2$, we perform Step 3:

{\bf Step 3}: Root tree $\tau$ at an arbitrary vertex in $I(\tau)$. Let $i\in I(\tau)$ be a vertex of degree at least $3$, such that in the sub-tree of $i$ all items have degree $2$. Now consider the
children  of $i$ - let them be $A_1,A_2,\ldots, A_r$ with $r \ge 2$. Note that there is $j: 1\leq j\leq r$ with  $y_{A_j,i} < 1/2$. Remove the edge $(A,i)$ from $G$, and add the sub-tree $\tau'$ rooted at $A_j$ to the decomposition. Note that all item vertices in this sub-tree has degree exactly $2$.
Also note that:

$\sum_{A'\in \aset(\tau')}\sum_{i'\in I(\tau')}y_{A',i'}\leq |I(\tau')|= |\aset(\tau')|-1$, while on the other hand
$\sum_{A'\in \aset(\tau')}\sum_{i'\in I(\tau')}y_{A',i'}=\sum_{A'\in\aset(\tau')}(1-x_{A'})-y_{A,i}=|\aset(\tau')|-\sum_{A'\in \aset(\tau')}x_{A'}-y_{A,i}$.

Therefore, $\sum_{A'\in \aset(\tau')}x_{A'}\geq 1-y_{A,i}\geq \half$.
\end{proof}

\section{An $\tilde{O}(\sqrt{n})$-Approximation via $1$-Layered Instances}
\label{sec: 1-layered instances}

In this section, as a warm-up towards the eventual goal of computing optimal multi-layered solutions,
we consider the special case of finding an optimal $1$-layered solution, that is,
we restrict solutions to trees $\tau$ with $h(\tau)$=1.
We design an LP relaxation and a rounding scheme for the relaxation,
to obtain an $O(\log n)$-approximation to the problem of finding the optimal
$1$-layered solution. We then describe how this suffices to obtain an $\tilde{O}(\sqrt{n})$-approximation overall.

As we restrict our solution to trees $\tau$ with $h(\tau)=1$, we know that the items in $I^*$ (items reachable directly from $s$) will only be used to send flow to the light agents, and items in $I'=I\setminus I^*$ will only be used to send flow directly to terminals. Similarly, heavy agents in $H^*$ will send flow to light agents, while heavy agents in $H\setminus H^*$ will send flow directly to terminals. Therefore, if there are any edges from items in $I'$ to agents in $H^*$ we can remove them (no edges are possible between agents in $H\setminus H^*$ and items in $I^*$).

We now proceed with the design of an LP relaxation.
For each light agent $A\in L$, we have a variable $x_A$ showing whether or not $A$ is satisfied as a light agent (by light items). We have a flow of type $A$, we call it $f_A$, and require that $N_A\cdot x_A$ units of this flow are sent to agent $A$, while at most $x_A$ flow units of $f_A$ go through any item in $I^*$. Finally, the total flow through any item of any type is bounded by $1$. Let $\pset(A)$ be the set of paths originating in $s$ and ending at $A$, that only use items in $I^*$. We then have the following constraints:

\begin{eqnarray}
\forall A\in L& \sum_{p\in \pset(A)}f_A(p)=N_A\cdot x_A&\mbox{\quad($N_A\cdot x_A$ flow units sent to $A$)}\\
\forall A\in L,\forall i\in I^*& \sum_{\stackrel{p\in P(A):}{i\in p}}f_A(p)\leq x_A&\mbox{\quad(capacity constraint w.r.t. flow of type $A$)} \label{capacity-constraint}\\
\forall i\in I^*& \sum_{A\in L} \sum_{p: i\in p}f_A(p)\leq 1&\mbox{\quad(general capacity constraint)}
\end{eqnarray}

Additionally, we need to send one flow unit to each terminal. For $A\in L$, $t\in T$, let $\pset(A,t)$ be all paths directly connecting $A$ to $t$ that only use items in $I'$.
We now have the standard flow constraints:

\begin{eqnarray}
\forall t\in T& \sum_{A\in L}\sum_{p\in \pset(A,t)}f(p)=1&\mbox{\quad(Each terminal receives $1$ flow unit)}\\
\forall A\in L& \sum_{t\in T}\sum_{p\in \pset(A,t)}f(p)=x_A&\mbox{\quad(Light agent $A$ sends $x_A$ flow units)}\\
\forall i\in I'&\sum_{p: i\in p}f(p)\leq 1&\mbox{\quad(Capacity constraints)}
\end{eqnarray}

The rounding algorithm consists of three steps. In the first step we perform the BS tree decomposition on the part of the graph induced by $I'$. The second step is randomized rounding which will create logarithmic congestion. In the last step we take care of the congestion.

\paragraph{Step 1: Tree decomposition} We consider the graph induced by vertices corresponding to all agents in set $Z=L\cup (H\setminus H^*)$ and the set $I'$ of items. Notice that agents in $Z$ only have utilities $M$ for items in $I'$.
For each light agent $A\in L$ we have a value $x_A$ (extent to which $A$ is satisfied as light item). For all other agents $A$, let $x_A=0$. Our fractional flow solution can be interpreted as a fractional assignment of items in $I'$ to agents in $Z$, such that  each agent $A\in Z$ is fractionally assigned $(1-x_A)$-fraction of items in $I'$, as follows. Let $A\in H\setminus H^*$ be any heavy agent. For each item $i\in I'$, we set $y_{A,i}$ to be the amount of flow sent on edge $(i\rightarrow A)$ if such an edge exists. If $A$ is a non-terminal agent, let $z$ be the amount of flow sent on edge $(A\rightarrow P(A))$. We set $y_{A,P(A)}=1-z$. For a light agent $A$, we set $y_{A,P(A)}=1-x_A$. It is easy to see that this assignment satisfies Constraints~(\ref{constraint 1 for BS})--(\ref{last constraint for BS}). Therefore, we can apply Theorem~\ref{thm: BS} and obtain a collection $T_1,\ldots,T_s$ of trees together with a matching $\mathcal{M}$ of a subset of items $I''\sse I'$ to a subset $Z'\sse Z$ of heavy agents, that do not participate in trees $T_1,\ldots,T_s$.

\paragraph{Step 2: Randomized rounding}
Consider some tree $T_j$ computed above. We select one of its light agents $A$ with probability $x_A/X$, where $X$ is the summation of values $x_{A'}$ for $A'\in T_j$. Notice that $x_A/X\leq 2x_A$. The selected light agent will eventually be satisfied as a light agent. Once we select one light agent $A_j$ for each tree $T_j$, we can satisfy the remaining agents of the tree with items of the tree. Let $L^*=\set{A_1,\ldots,A_s}$. We have obtained an assignment of an item from $I'$ to every agent in $Z\setminus L^*$. This assignment in turn defines a collection of simple flow-paths $\pset$, where every path $p\in \pset$ connects an agent in $A_1,\ldots,A_s$ to a terminal, with exactly one path leaving each agent $A_j$ and exactly one path entering each terminal, as follows: Consider any agent $A\in Z\setminus\set{A_1,\ldots,A_s}$ and assume that $A$ is assigned some item $i\in I'$. If $i\neq P(A)$, then send one flow unit from $i$ to $A$, and if $A\not \in T$, send one flow unit from $A$ to $P(A)$.

We now turn to finding a collection of simple paths to satisfy the agents in $L^*$. For each $A\in L^*$ we scale its flow $f_A$ by the factor of $1/x_A$, so that $A$ now receives $N_A$ flow units. For agents not in $L^*$, we reduce their flows to $0$. Notice that due to constraint (\ref{capacity-constraint}), at most $1$ unit of flow $f_A$ goes through any item. Since each agent $A$ is selected with probability at most $2x_A$, using the standard Chernoff bound, we get that w.h.p. the congestion (total flow) on any vertex is $O(\log n)$.

\paragraph{Step 3: Final solution}
In the final solution, we require that each agent $A\in L^*$ receives $\lfloor N_A/\log n\rfloor $ flow units {\em integrally} via internally disjoint paths from $s$. Since our original flow has congestion $O(\log n)$, we know that such a {\em fractional flow} exists. By integrality of flow we can get such an integral flow.

We now show that this algorithm is enough to get an $\tilde{O}(\sqrt{n})$ approximation  for max-min allocation.
To obtain such an approximation, it is enough to consider instances where $M\geq 4\sqrt n$. It is easy to see by using a modification of Lemma \ref{lemma: h-level solution} that in this case, by losing a constant factor we can assume that the optimal solution consists of trees $\tau$ with $h(\tau)=1$. Therefore the algorithm presented here will provide an $O(\log n)$-approximation for resulting canonical instances and $\tilde{O}(\sqrt{n})$-approximation overall.

\section{Almost Feasible Solutions for Multi-Layered Instances}\label{sec: almost-feasible-solutions}
\label{sec: almost feasible solutions}

We now generalize the algorithm from the previous section to arbitrary number of layers. From Lemma~\ref{lemma: h-level solution}, it is enough to use $h=8/\eps$ layers  (recall that $1/\eps = O(\log n/\log\log n)$).

There is a natural generalization of the algorithm from the previous section, where we write a multi-layered flow LP, and then perform $h$ rounds of randomized rounding, starting from the last layer. This is what the algorithm presented here does. However we are unable to obtain a feasible solution. Instead we obtain an almost-feasible solution in the following sense: all constraints~(\ref{property 1 simple paths})--(\ref{property last simple paths}) hold, except that for some items and some heavy agents there will be two simple paths containing them: one terminating at some light agent and one at a terminal. Similarly, some light agents $A$ will have two simple paths starting at $A$, one terminating at another light agent and one at a terminal. The fact that we are unable to obtain a feasible solution is not surprising: the LP that we construct has an $\Omega(\sqrt{m})$ integrality gap, as we show in Section~\ref{sec:int-gap} in the Appendix. Surprisingly we manage to bypass this obstacle in our final algorithm presented in Section~\ref{sec: final-algorithm}, and obtain a better approximation guarantee while using the LP-rounding algorithm from this section as a subroutine.

More formally, in this section our goal is to prove the following theorem.

\begin{theorem}~\label{thm: almost feasible solutions}
Let $\iset=(\aset,I)$ be any $1$-satisfiable canonical instance and let $P$ be a good assignment of private items to non-terminal agents. Let $N(\iset,P)$ be the corresponding flow network and let $\alpha=O(h^4\log n)$. Then we can find, in polynomial time, two collections $\pset_1$ and $\pset_2$ of simple paths with the following properties.

\begin{properties}{D}
\item All paths in $\pset_1$ terminate at the terminals and all paths in $\pset_2$ terminate at light agents. Moreover each terminal lies on some path in $\pset_1$. \label{property almost-feasible-first}

\item All paths in $\pset_1$ are completely vertex disjoint, and paths in $\pset_2$ are internally vertex-disjoint but they may share endpoints. A non-terminal agent or an item may appear both in $\pset_1$ and in $\pset_2$.

\item For each light agent $A$, there is at most one path in $\pset_1$ and at most one path in $\pset_2$ that
originates at $A$ (so in total there may be two paths in $\pset_1\cup \pset_2$ originating at $A$).

\item If there is a path $p\in \pset_1\cup \pset_2$ originating at some light agent $A\in L$, then there are at least $N_A/\alpha$ paths in $\pset_2$ that terminate in $A$. \label{property almost-feasible-last}
\end{properties}
\end{theorem}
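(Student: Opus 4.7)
The plan is to generalize the one-layered algorithm of Section~\ref{sec: 1-layered instances} to $h$ layers by writing a layered flow LP and rounding it in $h$ stages from the top down, exploiting the permissive structure of ``almost feasible'' solutions --- items and heavy agents are allowed to appear in both $\pset_1$ and $\pset_2$ --- to sidestep the $\Omega(\sqrt{m})$ integrality gap a strict feasible rounding would face.

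I would formulate the LP with level-indicator variables $x_{A,j}$ for each light agent $A$ and each $j=1,\ldots,h$, with $\sum_j x_{A,j}\le 1$, meaning ``$A$ is satisfied as a light agent at level $j$.'' For each pair $(A,j)$ a type-$(A,j)$ flow routes $N_A\cdot x_{A,j}$ units from level-$(j-1)$ light agents (or from $s$ when $j=1$) to $A$ along direct paths, with per-vertex capacity $x_{A,j}$ for this flow; a level-$(j-1)$ agent $B$ contributes at most $\sum_{j'}x_{B,j'}$ units outgoing across all types. On top, a unit-demand terminal-flow sends one unit from some level-$h$ agent to each terminal via direct paths, with unit item/heavy-agent capacities. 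Feasibility of this LP for $1$-satisfiable canonical instances follows from Lemma~\ref{lemma: h-level solution} (after the $(h+1)$-scaling already folded into the statement). The rounding proceeds in $h$ stages $j=h,h-1,\ldots,1$. At stage $h$, apply Theorem~\ref{thm: BS} to the bipartite assignment that the terminal-flow induces between level-$h$ agents and the items connecting them to terminals; in each non-trivial tree select one level-$h$ light agent with probability $x_{A,h}/(\sum_{A'\in\tau}x_{A',h})$, and use the Corollary following Theorem~\ref{thm: BS} to assign the remaining tree agents to the tree's items, committing the top segment of each $\pset_1$-path. Scale up the $f^{h-1}_{A}/x_{A,h-1}$ flows of the selected agents $A$ so that each has unit demand entering at level $h-1$, and iterate the BS-decomposition-and-selection step at level $h-1$, and so on down to level $1$.

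The collection $\pset_1$ is then obtained by concatenating, for each terminal, the selected-agent and tree-assigned-item chain back down through every level to a committed level-$1$ agent; disjointness is immediate because each tree uses only its own items. For $\pset_2$, I would take the fractional $f^1_A$-flow from $s$ into each committed level-$1$ agent $A$, rescale it to demand $N_A/\alpha$, and invoke integrality of flow to extract that many internally vertex-disjoint simple paths --- these are allowed to share items and heavy agents with $\pset_1$, which is exactly the slack that Properties~\ref{property almost-feasible-first}--\ref{property almost-feasible-last} grant. The main obstacle is the congestion analysis across $h$ stages: a naive Chernoff bound at each stage compounds to $(\log n)^h$, which is far worse than the claimed $\alpha=O(h^4\log n)$. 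The fix relies on the BS guarantee that each non-trivial tree absorbs at least $1/2$ unit of $x$-mass, so the number of agents committed at each level is tightly controlled by the fractional LP values rather than blowing up in the randomness; one then applies a single Chernoff-style concentration at the bottom layer where the demand $N_A\ge n^{\epsilon}$ is large enough for sharp concentration, and the per-level tree-selection losses contribute only polynomially in $h$, yielding the final $O(h^4\log n)$ bound.
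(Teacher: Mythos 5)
Your overall architecture (a layered flow LP, BS decomposition plus random selection for the terminal routing, top-down rounding through the layers, and a final flow-integrality cleanup) matches the paper's, but there are two genuine gaps in the middle that the paper's proof is specifically built to avoid. First, your LP only carries \emph{single-level} capacity constraints: each type-$(A,j)$ flow is capped at $x_{A,j}$ per vertex, in analogy with constraint~(\ref{capacity-constraint}). This is not enough. In the top-down rounding, the flow that eventually reaches a selected level-$j$ agent has already been rescaled once per level above $j$, so the flow seen by an item at a low level is multiplied by a \emph{product} of factors $1/x_{\ell_{h}}\cdot 1/y(\cdot)\cdots$. A per-type cap of $x_{A,j}$ controls only the last rescaling; it gives no bound on the expected congestion after the compounded ones. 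This is exactly why the paper introduces the tuple variables $y(\lambda)$ for $\lambda\in L_{h'}\times\cdots\times L_j$ and the prefix capacity constraints~(\ref{lambda flow capacity constraints for multiple levels}) and~(\ref{specific flow-type multi-leveled capacity constraints for items}): they guarantee that the flow through any item attributable to all tuples extending a prefix $\lambda$ is at most $y(\lambda)$, so that after dividing by $y(\lambda)$ the conditional expected congestion increase per iteration is bounded, and Chernoff (with base congestion $\Theta(h^2\log n)$) gives only a $(1+1/h)$ factor per level, hence $(1+1/h)^h=O(1)$ overall. Your proposed fix --- the BS half-mass guarantee plus a single Chernoff at the bottom --- does not substitute for this: the half-mass bound only controls selection probabilities (by $2x_A$), which already appears in the one-layer case, and it says nothing about the multiplicative blow-up of fractional flow values under iterated rescaling.

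Second, the BS decomposition (Theorem~\ref{thm: BS}) cannot be ``iterated at level $h-1$ and so on down to level $1$'' as you propose. It applies to a fractional assignment where each agent needs exactly \emph{one} heavy item (constraint $\sum_i y_{A,i}=1-x_A$); that is the situation in the direct-to-terminal part, where the paper uses it exactly once. A selected level-$j$ light agent instead needs $N_A/\alpha$ incoming light items routed from $N_A/\alpha$ distinct level-$(j-1)$ agents, which is a bipartite $b$-matching/flow structure, not a BS instance; the paper handles it by randomized rounding on the $y(\lambda\circ A')$ values followed by path sampling. Relatedly, your $\pset_2$ only extracts integral paths from $s$ into level-$1$ agents, but Property~\ref{property almost-feasible-last} requires at least $N_A/\alpha$ internally disjoint paths terminating at \emph{every} light agent that originates a path --- including the selected agents at levels $2,\dots,h$ --- so integral, internally disjoint paths must be produced between every pair of consecutive levels. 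The paper does this via the congested path family $\pset''$ and the scale-down-and-round argument of Lemma~\ref{lemma-senders-receivers}, which is where the factor $\beta=O(h^4\log n)$ in $\alpha$ actually comes from.
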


Our LP itself is defined on a new graph $N_h(\iset,P)$, that can be viewed as a ``structured'' or ``layered'' version of $N(\iset,P)$. We start by describing the graph $N_h(\iset,P)$. After that we define the LP itself, and finally the randomized rounding algorithm.

\subsection{The Graph $N_h(\iset,P)$}

Recall that any integral solution induces a collection of disjoint trees $\tau$ with various heights $h(\tau)\leq h$. To simplify the description of the LP (and at the cost of losing another factor $h$ in the approximation ratio), our graph will consist of $h$ subgraphs, where subgraph $G_{h'}$, $1\leq h'\leq h$, is an $h'$-layered graph that will correspond to trees of height $h(\tau)=h'$.

We start with the description of $G_{h'}$.
We create $h'$ copies of the set of light agents: one copy for each layer. From now on we will treat these copies as distinct agents. Let $L_1,\ldots,L_{h'}$ denote the sets of agents corresponding to the layers.
The graph consists of $h'$ levels, where level $j$ starts with light agents in $L_{j-1}$ and ends with light agents in $L_j$ (the first level starts with the source $s$ and ends with $L_1$). We now proceed to define each level.

Level $1$ contains the source $s$, a copy of each item in $I^*$ and a copy of each heavy agent in $H^*$ (recall that these are the items and the agents that can be reached directly from $s$). The assignment of private items is as before. There is an edge from $s$ to every item in $S$, an edge from every heavy agent $A$ to $P(A)$, and an edge from each item $i\in \Gamma(A)\setminus P(A)$ to $A$. Additionally, if item $i$ is a light item for a light agent $A$, we put an edge between $i$ and a copy of $A$ in $L_1$.

Level $j$ is defined as follows. It contains a copy of each heavy non-terminal agent $A\in H\setminus T$ and a copy of each item $i\in I\setminus S$. Let $H_j$ and $I_j$ denote the set of the heavy agents and items at level $j$. Recall that each item $i\in I\setminus S$ is a private item of some agent. Consider some such item $i$. If it is a private item of a heavy agent $A$, then we add an edge from a copy of $A$ in $H_j$ to a copy of $i$ in $I_j$. If it is a private item of light agent $A$, then we add an edge from a copy of $A$ in $L_{j-1}$ to a copy of $i$ in $I_j$. If $i$ is a non-private item admissible for heavy agent $A$, then we add an edge from a copy of $i$ in $I_j$ to a copy of $A$ in $H_j$. If $i$ is a light item for a light agent $A$ then we add an edge from a copy of $i$ in $I_j$ to a copy of $A$ in $L_j$.
This completes the description of $G_{h'}$.
We will denote by $H_j^{h'},I_j^{h'}$ the heavy agents and items at level $j$ of $G_{h'}$, and we will omit the superscript $h'$ when clear from context. Our final graph consists of the union of graph $G_{h'}$, for $1\leq h'\leq h$. Finally, we add an additional part to our graph that will be used to route flow directly to terminals.

This part consists of a set $\hat{H}$ containing a copy of every heavy agent and a set $\hat{I}$ containing a copy of every item $i\in I\setminus S$.
Note that every item in $\hat{I}$ is a private item of some agent.
If $i\in \hat{I}$ is a private item of a heavy agent $A$ then we add an edge from a copy of $A$ in $\hat{H}$ to a copy of $i$ in $\hat{I}$. If it is a private item of a light agent $A$, then for all $h': 1\leq h'\leq h$, we add an edge from the copy of $A$ in $L_{h'}^{h'}$ to the copy of $i$ in $\hat{I}$ (so there are only edges from the last layer of each $G_{h'}$ to items in $\hat{I}$). If $i$ is an admissible but non-private item for some heavy agent $A$ then we add an edge from the copy of $i$ in $\hat{I}$ to the copy of $A$ in $\hat{H}$.

This completes the description of the graph $N_h(\iset,P)$. Notice that for each item $i\in I$ and each agent $A\in \aset$ of $\iset$, there are at most $h^2$ copies in $N_h(\iset,P)$. We will be looking for an integral flow in this graph satisfying conditions (\ref{property 1 simple paths})--(\ref{property last simple paths}). Notice that given an optimal solution to the original instance, it is easy to convert it into a feasible integral flow in the new instance. Moreover, in any feasible integral flow solution to the new instance, each tree $\tau$ is contained in some subgraph $G_{h'}$ (except for the final path that reaches the root directly), and if $\tau$ is contained in $G_{h'}$ then its height $h(\tau)=h'$. The edges connecting $G_{h'}$ to the rest of the graph only leave vertices in $L_{h'}$, and so any path from the source to any terminal that visits $G_{h'}$ has to visit a light agent in every layer $L_j^{h'}$ of $G_{h'}$.

\subsection{The LP}

We start with the high-level overview and intuition. The LP is a natural generalization of the $1$-level LP from the previous section,
but the size of the LP now grows to $n^{O(1/\eps)}$. Consider some sub-graph $G_{h'}$. For each light agent $A$ in the last layer $L_{h'}^{h'}$ of $G_{h'}$, we have a variable $x_A$ showing to which extent $A$ is satisfied as a light agent (and equivalently, how much flow it sends to the terminals). We then write standard flow constraints on the part of the graph induced by $((\bigcup_{h'\leq h}L_{h'}^{h'}) \cup \hat{H}\cup\hat{I})$ requiring that each terminal receives one flow unit, and each agent $A \in L_h$ sends $x_A$ flow units. We also have capacity constraints requiring that at most one flow unit traverses any vertex. We will eventually perform the same LP rounding as before on this part, using the BS procedure to obtain tree decomposition and randomized rounding to select layer-$h'$ light agents in each subgraph $G_{h'}$ that will be satisfied as light agents. So this part is relatively straightforward, and similar to one from the preceding section.

After that we focus on each subgraph $G_{h'}$ separately. Our starting point is the set of randomly selected light agents to be satisfied in the last layer $h'$ of $G_{h'}$. In the rounding algorithm,  we will perform an iterative randomized rounding procedure to create a feasible integral solution that originates at the source $s$ and satisfies these agents. In iteration $j$, we will select light agents in layer $(h-j)$ that need to be satisfied. Thus we will use light agent selection as an interface between successive layers, with each iteration moving the rounding process to the next layer.
In the first iteration, for each selected agent $A$, we scale all its {\em incoming} flow from the source $s$ by a factor of $1/x_A$. In order to ensure that congestion does not grow, our LP needs to ensure that for every item $i$ in $G_{h'}$ (no matter what level it is in), the amount of flow traversing item $i$ that eventually reaches the agent $A$ is at most $x_A$. Once $A$ is chosen, we need to choose $N_A$ agents in layer $h'-1$ that will send their flow to $A$. This again is done by randomized rounding. For each $A'$ that we choose, we will scale all the relevant flow by $1/x_{A'}$, and we again need to ensure that no vertex in the graph sends more than $x_{A'}$ flow to $A'$. The need to ensure this type of capacity constraints over all layers makes the LP somewhat complex. The LP will consist of three parts. The first and the easiest part is sending the flow to the terminals directly. From this point onwards we can focus on each subgraph $G_{h'}$ separately. In the second part, we will coordinate the flow that light agents at one level of $G_{h'}$ send to light agents in the next level, ignoring the actual routing of the flow inside each level. In the third part we will perform the routing inside each level, imposing the capacity constraints on the vertices.

\paragraph{Part 1: Flow arriving directly to terminals}
Let $\lset$ be the set of light agents appearing in the last layer of each graph $G_{h'}$ (so $\lset=\bigcup_{h'}L_{h'}^{h'}$). For each light agent $A\in \lset$ we have a variable $x_A$ signifying whether or not $A$ is satisfied as a light agent, or equivalently whether or not it sends flow directly to terminals. For each terminal $t\in T$, for each light agent $A\in\lset$, let $\pset(A,t)$ be the set of all paths connecting $A$ to $t$ (notice that each such path only uses items in $\hat{I}$ and heavy agents in $\hat{H}$ and there are no additional light agents on the path. We have the following constraints:

\begin{eqnarray}
\forall t\in T&\sum_{A\in \lset}\sum_{p\in \pset(A,t)}f(p)=1&\mbox{(Each terminal receives one flow unit)}\\
\forall A\in \lset&\sum_{t\in T}\sum_{p\in \pset(A,t)}f(p)=x_A&\mbox{(Light agent $A$ sends $x_A$ flow units)}\\
\forall i\in \hat{I}&\sum_{p: i\in p}f(p)\leq 1&\mbox{(Capacity constraints)}
\end{eqnarray}

Notice that each heavy agent $A\in \hat{H}\setminus T$ has only one outgoing edge connecting it to its private item, and so it is enough to enforce the capacity constraints on the items.

This is the only part that is common to the whole graph. From now on we fix a subgraph $G_{h'}$ and describe the constraints relevant to it. For simplicity we will omit the superscript $h'$.

\paragraph{Part 2: Routing among the light agents inside $G_{h'}$}
This part will specify the {\em amount} of flow to be routed between different light agents, while we ignore the routing itself, which will be taken care of in the next part.
For clarity of exposition, we will add a layer $0$ to our graph, where $L_0=\set{s}$.

For each $j: 0\leq j\leq h'$, we define a set $\sset_j=L_{h'}\times L_{h'-1}\times\cdots\times L_j$. For each tuple $\lambda=(\ell_{h'},\ell_{h'-1},\ldots,\ell_j)\in \sset_j$ of light agents, where $\ell_k\in L_k$ for $1\leq k\leq j$, we have a variable $y(\lambda)$. The meaning of this variable in the integral solution is whether or not $\ell_j$ sends flow to $\ell_{h'}$ via a path whose only light agents are $\ell_{h'},\ell_{h'-1},\ldots,\ell_j$. Notice that $\sset_{h'}=L_{h'}$, and so we have the constraint:

\begin{equation}
\forall A\in L_{h'}\quad y(A)=x_A\quad\mbox{(Total flow of $x_A$ for each light agent $A\in L_{h'}$)}\label{total flow of xA for A}\end{equation}

Consider now some tuple $\lambda=(\ell_{h'},\ell_{h'-1},\ldots,\ell_j)\in \sset_j$. If $y(\lambda)=1$, and flow is being sent from $\ell_j$ to $\ell_{h'}$ via paths corresponding to the tuple $\lambda$, then $\ell_j$ has to receive $N_{\ell_j}$ flow units from layer $(j-1)$. For $\lambda\in \sset_j$ and $A\in L_{j-1}$, let $\lambda\circ A\in\sset_{j-1}$ be the tuple obtained by concatenating $A$ at the end of $\lambda$. So we have the constraint:

\begin{equation}
\forall 1\leq j\leq h',\forall\lambda=(\ell_{h'},\ell_{h'-1},\ldots,\ell_j)\in \sset_j\quad \sum_{A\in L_{j-1}}y(\lambda\circ A)=N_{\ell_j}\cdot y(\lambda)\label{light-agent-gets-M-flow-unts}
\end{equation}

Additionally, for $j\geq 2$, each $A\in L_{j-1}$ is only allowed to send at most $y(\lambda)$ flow units via the tuple $\lambda$ (this is similar to the capacity constraint~(\ref{capacity-constraint}) from the previous LP):

\begin{equation}
\begin{array}{lll}
\begin{array}{l}
\forall 2\leq j\leq h',\forall A\in L_{j-1}\\
\forall\lambda=(\ell_{h'},\ell_{h'-1},\ldots,\ell_j)\in \sset_j\end{array},& y(\lambda\circ A)\leq y(\lambda)&\mbox{($\lambda$-flow capacity constraints for one level)}\label{lambda-flow capacity constraints for one level}
\end{array}
\end{equation}

Finally, we need to add the more complex capacity constraints that will ensure that the randomized rounding procedure will go through. Consider some tuple $\lambda(\ell_{h'},\ell_{h'-1},\ldots,\ell_j)\in \sset_j$, and let $A \in L_k$ be any light agent in some layer $k$, where $k<j$. Then the total amount of flow that $A$ can send via all tuples whose prefix is $\lambda$ is at most $y(\lambda)$. Given $\lambda\in \sset_j$ and $A\in L_k$ with $k<j$, let $Z(\lambda, A)\sse \sset_k$ be the set of all tuples whose prefix is $\lambda$ and whose last agent is $A$. Then we have the following capacity constraints:

\begin{equation}\begin{array}{lll}
\begin{array}{l}
\forall 1\leq k< j\leq h'\\
\forall A\in L_{k},
\forall\lambda\in \sset_j\end{array},&\sum_{\lambda'\in Z(\lambda,A)}y(\lambda')\leq y(\lambda)&\mbox{($\lambda$-flow capacity constraints for multiple levels)}\label{lambda flow capacity constraints for multiple levels}\end{array}\end{equation}

Actually the set (\ref{lambda flow capacity constraints for multiple levels}) of constraints contains the constraints in (\ref{lambda-flow capacity constraints for one level}) as a special case, and we only added (\ref{lambda-flow capacity constraints for one level}) for motivating these more general constraints.
Finally to complete this part we require that each light agent sends at most one flow unit in total:

\begin{equation}\forall 1\leq j\leq h,\forall A\in L_j\quad \sum_{\stackrel{\lambda\in \sset_j:}{A\in \lambda}}y(\lambda)\leq 1\quad\mbox{(General capacity constraints for light agents)}\label{general capacity constraints for light agents}\end{equation}

\paragraph{Part 3: Routing the flow}
We focus on level $j$ of graph $G_{h'}$. Consider some tuple $\lambda=(\ell_{h'},\ldots,\ell_j,\ell_{j-1})\in \sset_{j-1}$. We will have flow $f_{\lambda}$ of type $\lambda$, and we need to send $y(\lambda)$ flow units of this type from $\ell_{j-1}$ to $\ell_j$. For any pair $\ell_{j-1}\in L_{j-1},\ell_j\in L_j$ of agents, let $\pset(\ell_{j-1},\ell_j)$ be the set of all paths connecting them (note that these paths
are completely contained inside level $j$). Then:

\begin{equation}\forall 0\leq j\leq h',\forall \lambda=(\ell_{h'},\ldots,\ell_j,\ell_{j-1})\in \sset_{j-1},\quad\sum_{p\in \pset(\ell_{j-1},\ell_j)}f_{\lambda}(p)=y(\lambda)\quad\mbox{(routing flow of each type)}\label{routing flow of each type}\end{equation}

We need to add the simple capacity constraints that the flow via any item is at most $1$:

\begin{equation}\forall 1\leq j\leq h',\forall i\in I_j\quad \sum_{p:i\in p}\sum_{\lambda\in \sset_{j-1}}f_{\lambda}(p)\leq 1\quad\mbox{(General capacity constraints)}\label{general capacity consraints}\end{equation}

Note that since each non-terminal heavy agent has exactly one out-going edge (that connects it to its private item),
the constraint above also implicitly bounds the flow through a non-terminal heavy agent
to be $1$.

And finally, we need to add capacity constraints, which are very similar to~(\ref{lambda flow capacity constraints for multiple levels}).
For a tuple $\lambda=(\ell_{h'},\ldots,\ell_j)\in \sset_j$, for each $j\leq q\leq h'$, we denote $\lambda_q=\ell_q$. Consider some tuple $\lambda(\ell_{h'},\ell_{h'-1},\ldots,\ell_j)\in \sset_j$, and let $i$ be any item in any layer $I_k$, where $k<j$. Then the total flow that $i$ can send via all tuples whose prefix is $\lambda$ is at most $y(\lambda)$. Given $\lambda\in \sset_j$ and layer $L_k$ with $k<j$, let $Z'(\lambda,k)\sse \sset_k$ be the set of all tuples whose prefix is $\lambda$. Then we have the following capacity constraints:

\begin{equation}
\begin{array}{lll}
\begin{array}{l}
\forall 1\leq k< j\leq h'\\
\forall\lambda\in \sset_j,\forall i\in I_{k}
\end{array},& \sum_{\lambda'\in Z'(\lambda,k-1)}\sum_{\stackrel{p\in\pset(\lambda'_{k-1},\lambda'_k):}{i\in p}}f_{\lambda'}(p)\leq y(\lambda)&\mbox{(multi-leveled capacity constraints)}\label{specific flow-type multi-leveled capacity constraints for items}
\end{array}\end{equation}

\paragraph{Solving the LP:}
The total number of different flow types in the LP is $O(|\sset_1|)=O(m^h)=n^{O(1/\eps)}$.
As written, the LP has exponential number of variables representing the flow-paths.
For computing a solution, we can replace it with the standard compact formulation for multi-commodity flows that specifies flow-conservation constraints, and has capacity constraints on vertices.
We can then use the standard decomposition into flow-paths to obtain a feasible solution for our LP.
Therefore the overall complexity of computing the LP solution is $n^{O(1/\eps)}$.

\subsection{The Rounding Algorithm}
The algorithm has three parts. The first part uses the BS decomposition to take care of the direct routing to the terminals. The output of the first part is the set $\pset_1$ of vertex-disjoint simple paths connecting light agents to terminals in the original graph $N(\iset,P)$. The second part is randomized rounding in each sub-graph. The third part is the ``clean-up'' phase where we get rid of almost all the congestion and create the set $\pset_2$ of paths that are used to satisfy the light agents.

\paragraph{Part 1: Routing to Terminals}
We consider the sub-graph of $N_h(\iset,P)$ induced by the set $Y=\hat{H}\cup \lset$ of agents (where $\lset$ contains the light agents in the last layer of every subgraph $G_{h'}$, $\lset=\bigcup_{h'\leq h}L_{h'}^{h'}$) and the set $\hat{I}$ of items. Recall that the items in $\hat{I}$ are heavy items for all agents in $Y$.
For each agent $A\in \lset$ we have the value $x_A$ defined by our LP-solution, while for each agent $A\in \hat{H}$ we set $x_A=0$.
 Exactly like in the first part of the rounding algorithm for Section~\ref{sec: 1-layered instances}, we can produce values $y_{A,i}$ for each $A\in Y$, $i\in \hat{I}$ satisfying the constraints (\ref{constraint 1 for BS})--(\ref{last constraint for BS}). We then again apply Theorem~\ref{thm: BS} to obtain a decomposition  of the bipartite graph $G(Y,\hat{I})$ into trees $T_1,\ldots,T_s$. Let $T_j$, $1\leq j\leq s$ be any tree in the decomposition containing more than one edge. Recall that the summation of values $x_A$ for agents $A$ in tree $T_j$ is at least $\half$. We select one of the agents $A$ of $T_j$ with probability $x_A/X$, where $X$ is the summation of values $x_{A'}$ for all agents $A'$ in $T_j$. Notice that this probability is at most $2x_A$. Once $A$ is selected, we can assign the items of tree $T_j$ to the remaining agents. Let $\lset'\sse\lset$ be the set of light agents we have thus selected. We therefore obtain an assignment of items in $\hat{I}$ to agents in $Y\setminus\lset'$, where each agent is assigned one item. This assignment of items defines an integral flow in the sub-graph of $N_h(\iset,P)$ induced by $Y\cup \hat{I}$, as follows. Every light agent in $\lset'$ sends one flow unit to its private item. If agent $A\in Y\setminus \lset'$ is assigned item $i\neq P(A)$, then $i$ sends one flow unit to $A$ and if $A\not\in T$, it sends one flow unit to $P(A)$. This flow is a collection of disjoint simple paths connecting items in $\lset'$ to the terminals. Let $\pset_1$ denote the corresponding collection of paths in $N(\iset,P)$, (where we replace copies of  agents and items back by the corresponding agents and items themselves).
 The set $\pset_1$ of paths is completely vertex disjoint: it is clear that paths in $\pset_1$ cannot share heavy agents or items. It is also impossible that a light agent $A$ has more than one path in $\pset_1$ starting at $A$: even though many copies of $A$, appearing in the last layers $L_{h'}^{h'}$ for each graph $G_{h'}$ connect to $\hat{I}$, all these copies only connect to a single copy of $h(A)$ in $\hat{I}$ and so all paths starting at copies of $A$ have to go through this vertex.

 The set $\pset_1$ of paths will not change for the rest of the algorithm and will be part of the output. We now focus on finding the set $\pset_2$ of paths that supply flow to the light agents in $\lset'$.

\paragraph{Part 2: Randomized Rounding} We focus on one subgraph $G_{h'}$, where we have a subset $L'_{h'}\sse L_{h'}$ of light agents that have been selected in Part $1$.

Given a tuple $\lambda(\ell_p,\ldots,\ell_j)\in \sset_j$, we say that flow $f_{\lambda}$ belongs to agent $\ell_p$.
 In the first iteration, for each light agent $A\in L'_{h'}$, we scale all the flow belonging to $A$ by the factor of $1/x_A$. For light agents $A\not\in L'_{h'}$, we remove all their flow from the graph.
 We define a subset of tuples $\sset'_{h'}\sse \sset_{h'}$ to be $\sset'_{h'}=L'_{h'}$.
 In this way we get a flow that is ``integral'' for layer $h'$ and fractional for remaining layers. In general, in iteration $(h'-j)$ we will get a flow that is integral for layers $j,\ldots,h'$ and fractional for the remaining layers. We will claim that the solution obtained in each iteration satisfies constraints~(\ref{total flow of xA for A})--(\ref{lambda flow capacity constraints for multiple levels}) and constraints~(\ref{routing flow of each type}),(\ref{specific flow-type multi-leveled capacity constraints for items}), while constraints (\ref{general capacity constraints for light agents}) and (\ref{general capacity consraints}) (general capacity constraints for light agents and items) are satisfied approximately, with polylogarithmic congestion.

Since each non-terminal heavy agent has exactly one out-going edge (that connects it to its private item), it is enough to bound the congestion on items and light agents.

 We now consider iteration $(h'-j)$. The input is a subset $\sset'_j\sse \sset_j$ of tuples, such that for every $\lambda\in \sset'_j$, if $\lambda=\set{\ell_{h'},\ldots,\ell_j}$, then our fractional solution routes $N_A$ flow units of types $\lambda'\in \sset_{j-1}$, where $\lambda$ is a prefix of $\lambda'$, to $A$, and for $\lambda\not\in \sset'_j$, no flow of any type $\lambda'$ where $\lambda$ is a prefix of $\lambda'$ is present. Moreover, the constraints
 (\ref{total flow of xA for A})--(\ref{lambda flow capacity constraints for multiple levels}) and constraints~(\ref{routing flow of each type}),(\ref{specific flow-type multi-leveled capacity constraints for items}) hold, while constraints (\ref{general capacity constraints for light agents}) and (\ref{general capacity consraints}) (general capacity constraints for light agents and items) are satisfied approximately, with congestion of at most $(16 h^2\cdot \log n)(1 + \frac{1}{h})^j$ for each item and each light agent.
 Let $\lambda\in \sset'_j$, and let $A\in L_j$ be the last agent in $\lambda$. Let $Z\sse \sset_{j-1}$ be the collection of tuples whose prefix is $\lambda$ and last agent is any light agent from $L_{j-1}$. Then due to Constraint~(\ref{light-agent-gets-M-flow-unts}), the summation of $y(\lambda')$ for $\lambda'\in Z$ is $N_A$. We randomly select each tuple $\lambda'\in Z$ with probability $y(\lambda')$.
Notice that by the Chernoff bound, with high probability, for each $\lambda\in\sset'_j$, at least $N_A/2$ tuples $\lambda'$ have been selected.
 If $\lambda'$ is selected, then we proceed as follows:

 \begin{itemize}

 \item Randomly sample a flow-path $p$ carrying a flow of type $\lambda'$ with probability $f_{\lambda'}(p)/y_{\lambda'}$.

 \item For all $j'<j$, scale all the values $y_{\lambda''}$ and flows of type $\lambda''$ for tuples $\lambda''\in \sset_{j'}$  such that $\lambda'$ is a prefix of $\lambda''$ by a factor of $1/y(\lambda')$.
 \end{itemize}

 If $\lambda'$ is not selected, then all flow corresponding to tuples $\lambda''$ where $\lambda'$ is the prefix of $\lambda''$ is removed from the graph. It is easy to verify that constraints (\ref{total flow of xA for A})--(\ref{lambda flow capacity constraints for multiple levels}) and constraints~(\ref{routing flow of each type}),(\ref{specific flow-type multi-leveled capacity constraints for items}) continue to hold (since both sides of such constraints get scaled by the same factor).
 We now bound the congestion on items and light agents. Consider an item $i$ at some level $k\leq j$. Due to constraint~(\ref{specific flow-type multi-leveled capacity constraints for items}), the total flow corresponding to tuples $\lambda''$ whose prefix is $\lambda'$ is at most $y(\lambda)$. Since before iteration $(h'-j)$, w.h.p.  congestion on $i$ was at most $(16 h^2\cdot \log n)(1 + \frac{1}{h})^j$, using Chernoff bound, with high probability the congestion goes up by at most a factor of $(1 + \frac{1}{h})$\footnote{For any $0 < \delta \le 2e-1$, the probability that a random variable $Z = \sum_{i=1}^{N} Z_i$, where $Z_i$'s are independent $0/1$ random variables, deviates from its expectation by more than $(1+\delta)$ is at most $e^{-(\delta^2 {\rm E}[Z])/4}$. So, if ${\rm E}[Z] \geq 16 h^2\log n $, this probability  is at most $1/n^4$ for $\delta = 1 / h$.}.
A similar argument works for bounding congestion on light agents in the fractional part, and for rounding the congestion on items and heavy agents induced by the integral paths we have selected at level $j$.

 At the end of the above procedure, we obtain a set $\pset'$ of simple paths. If $A\in L_{h'}^{h'}$ or $A$ has a simple path leaving it in $\pset'$ then with high probability it has at least $N_A/2$ paths entering it in $\pset'$. The total number of paths leaving a light agent is bounded by $O(h^2\cdot \log n)$ and similarly the total number of paths to which a heavy agent or an item can belong is bounded by $O(h^2\log n)$ with high probability.

 \paragraph{Getting Almost-Feasible Solution}
 In the last step of the algorithm we produce a set $\pset_2$ of simple paths, such that properties (\ref{property almost-feasible-first})--(\ref{property almost-feasible-last}) hold for $\pset_1$, $\pset_2$. Let $\lset^*\sse L$ be the set of light agents in the original instance from which paths in $\pset_1$ originate.

Consider the flow-paths in $\pset'$, and let $\pset''$ be the corresponding paths in the original graph $N(\iset,P)$ (where we replace copies of agents and items by their original counterparts). These flow-paths have the following properties: (i) every vertex that does not correspond to a terminal may appear in at most $\alpha' =h^4\log n$ flow-paths in $\pset''$, and (ii) for any light agent $A$, there are at most $\alpha'$ paths in $\pset''$ starting at $A$, and if there is at least one path in $\pset''\cup\pset_1$ that originates at $A$, then there must be at least $N_A/2$ paths terminating at $A$.

We now show how to convert the set $\pset''$ of paths into set $\pset_2$, such that properties~\ref{property almost-feasible-first}--\ref{property almost-feasible-last} hold for $\pset_1$, $ \pset_2$.

\begin{lemma}\label{lemma-senders-receivers}
Let $\pset$ be any collection of simple paths that all terminate at light agents, and let $\lset'$ be the subset of light agents at which these paths terminate. Moreover, assume that  for every $A\in \lset'$ there are at least $N_A/2$ paths in $\pset$ terminating at $A$, all paths in $\pset$ start at vertices in $S\cup \lset'$, and for any vertex $v$, there are at most $\beta$ paths containing $v$ as a first or an intermediate vertex. Then there is a collection $\pset^*$ of paths, such that for each $A\in \lset'$, there are $\lfloor{N_A/2\beta\rfloor}$ paths terminating at $A$, all paths in $\pset^*$ start at vertices in $\lset'\cup S$, and each vertex appears at most once as the first or an intermediate vertex of paths in $\pset^*$.
\end{lemma}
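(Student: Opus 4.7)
My plan is to cast the claim as an integral flow computation and invoke the integrality of max-flow in a network with integer capacities. I would build an auxiliary directed network $\mathcal{N}$ with a super-source $s^*$ and a super-sink $t^*$ from the underlying graph on which paths of $\pset$ live. Every vertex $v$ not in $\lset'$ is split into $v_{\text{in}} \to v_{\text{out}}$ by an internal edge of capacity $1$, which will encode the requirement ``$v$ appears at most once as a first or intermediate vertex.'' Every light agent $A \in \lset'$ is replaced by two disconnected copies $A_{\text{start}}$ and $A_{\text{end}}$: original outgoing edges of $A$ are moved to leave $A_{\text{start}}$, while original incoming edges are moved to enter $A_{\text{end}}$. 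I then add edges $s^* \to v_{\text{in}}$ of capacity $1$ for each $v \in S$, edges $s^* \to A_{\text{start}}$ of capacity $1$ for each $A \in \lset'$, and a demand edge $A_{\text{end}} \to t^*$ of capacity $\lfloor N_A/(2\beta) \rfloor$ for each $A \in \lset'$. All other (original) edges get infinite capacity.

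Next I would certify that a fractional feasible flow of value $\sum_{A \in \lset'} \lfloor N_A/(2\beta) \rfloor$ exists in $\mathcal{N}$, by scaling: send $1/\beta$ units along the image in $\mathcal{N}$ of each path of $\pset$. The hypothesis that no vertex appears more than $\beta$ times as a first or intermediate vertex of paths in $\pset$ shows that every unit-capacity edge of $\mathcal{N}$ stays within capacity, while the hypothesis that at least $N_A/2$ paths terminate at $A$ delivers at least $N_A/(2\beta) \ge \lfloor N_A/(2\beta) \rfloor$ units into $A_{\text{end}}$, so every demand edge can be saturated (with any excess discarded). Since all capacities in $\mathcal{N}$ are integers, the max-flow min-cut theorem yields an integral feasible flow that saturates every demand edge. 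Decomposing this integral flow into simple $s^*$-$t^*$ paths, stripping $s^*$ and $t^*$, and collapsing $v_{\text{in}}, v_{\text{out}}$ back to $v$ (resp.\ $A_{\text{start}}, A_{\text{end}}$ back to $A$) produces the desired collection $\pset^*$: each resulting path begins at a vertex of $S \cup \lset'$, ends at a light agent in $\lset'$, there are exactly $\lfloor N_A/(2\beta) \rfloor$ paths terminating at each $A \in \lset'$, and by the unit capacities on the split edges no vertex is used more than once as a first or intermediate vertex.

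I do not anticipate a serious technical obstacle; the one point that requires care is the asymmetric treatment of agents in $\lset'$. Keeping $A_{\text{start}}$ and $A_{\text{end}}$ as two \emph{disconnected} copies is exactly what encodes the lemma's mixed condition that $A$ may simultaneously terminate many paths while being allowed to appear at most once as a first or intermediate vertex of another path. (In fact the construction is slightly stricter, since a path reaching $A_{\text{end}}$ must immediately exit to $t^*$, so $A$ never appears as an intermediate vertex in the output, but this is permitted by the conclusion.)
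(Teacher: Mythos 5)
Your proposal is correct and is essentially the paper's own argument: both construct a unit-capacity flow network in which each $A\in\lset'$ is duplicated into a ``sender'' and a ``receiver'' copy (your $A_{\text{start}}$/$A_{\text{end}}$), certify feasibility by scaling the fractional flow induced by $\pset$ down by $\beta$, and then invoke integrality of flow and path decomposition. The only cosmetic difference is that you enforce vertex capacities by explicit vertex-splitting, whereas the paper gets the same effect from the network structure (each item and heavy agent has a unique outgoing edge to its private-item partner).
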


\begin{proof}
We call the light agents of $\lset'$ \emph{receivers}. A light agent $A\in \lset'$ that has at least one path of $\pset$ starting at $A$ is called a \emph{sender}.

We now build the following flow network. There is a set $\sset$ of vertices corresponding to senders and set $\rset$ of vertices corresponding to receivers (so if an agent serves both as sender and receiver, it will appear in both $\sset$ and $\rset$, and there will be two vertices representing it). Additionally, there is a
vertex for each heavy agent and for each item, and there is a source $s$ and a sink $t$. Source $s$ connects to every item in $S$ and to every sender with capacity-$1$ edges. Every receiver connects to $t$ with $\lfloor N_A/2\beta\rfloor$ parallel edges of capacity $1$ each. There is an edge from every sender to its private item, and from every heavy agent to its private item. There is an edge from item $i$ to heavy agent $A$ iff $i\in \Gamma(A)\setminus\set{P(A)}$. There is an edge from item $i$ to receiver $A$ iff $i$ is a light item for $A$. The goal is to route $\sum_{A\in \rset}\lfloor N_A/2\beta\rfloor$ flow units from $s$ to $t$.

Observe that the flow-paths in $\pset$ induce flow of value at least $\sum_{A\in \rset}N_A/2$, and violate the edge capacities by at most factor $\beta$. Therefore, if we scale this flow down by the factor of $\beta$ we will obtain a feasible flow of the desired value. From the integrality of flow, we can obtain integral flow of the same value. It is easy to see that this flow will define the desired set $\pset^*$ of paths.
\end{proof}

This finishes the algorithm for getting an almost-feasible solution.

\section{An $\tilde{O}\left(n^{\eps}\right)$-Approximation Algorithm}
\label{sec: final-algorithm}
We now show that the algorithm from Section~\ref{sec: almost feasible solutions} for obtaining an almost-feasible solution can be used as a
building block to get an $\tilde{O}\left(n^{\eps}\right)$-approximation. We use an iterative approach where each iteration $j$
begins with a canonical instance $\iset^j$, and a
subset $\lset^j$ of light agents such that each light agent $A \in \lset^j$ is {\em satisfied} using (light) items from $S(A)$.
We then generate an almost-feasible solution to the canonical instance $\iset^j$ containing agents $\aset\setminus\lset^j$. The items that satisfied agents in $\lset^j$, however, are made available for {\em re-use} in solving the iteration $j$ canonical instance. The final step is to compose the previously computed assignment to $\lset^j$ with an almost-feasible solution to the instance $\iset^j$ to generate a new set $\lset^{j+1}$ of light agents satisfied by light items, and a new canonical instance
$\iset^{j+1}$. The algorithm makes progress by reducing the number of terminals in each successive iteration, until no more
terminals remain and we have an $\tilde{O}\left(n^{\eps}\right)$-approximate assignment for all agents.

\subsection{The Algorithm}
\def\Q{\mathcal Q}
We start by defining the notion of partially satisfying a subset of light agents.
Let $\Q$ be a collection of simple paths, such that all paths in $\Q$ terminate at light agents, and
let $L'\sse L$ be any subset of light agents. We say that $\Q$ {\em $\alpha'$-satisfies $L'$} iff

\begin{itemize}
\item
the paths in $\Q$ do not share intermediate vertices,

\item
each light agent $A\in L'$ has at least $N_A/\alpha'$ paths in $\Q$ that terminate at $A$ and no path
in $\Q$ originates from $A$, and

 \item
 each light agent $A\in L\setminus L'$ has at most one path in $\Q$ originating $A$, and if such a path exists, then there are at least $N_A/\alpha'$ paths in $\Q$ that terminate at $A$.
\end{itemize}

 We now describe the algorithm in detail.
 The algorithm consists of $h$ iterations (recall that $h=8/\eps$). The input to iteration $j$ is a subset $\lset^j\sse L$ of light agents, a set $T^j\sse H$ of terminals and an assignment of private items $P^j:\aset\setminus (\lset^j\cup T^j)\rightarrow I$.  %(so agents in $\lset^j$ and terminals are not assigned private items).
 Additionally, in the resulting flow network $N(\iset^j,P^j)$, we have a collection $\Q^j$ of simple paths that $\alpha_j$-satisfy agents in $\lset^j$ where
$\alpha_j=2j\alpha$; here $\alpha$ is the approximation factor from Theorem~\ref{thm: almost feasible solutions}. The output of 
iteration $j$ is a valid input to iteration $(j+1)$, that is, sets $\lset^{j+1}$,$T^{j+1}$, an assignment of private items $P^{j+1}:\aset\setminus (\lset^{j+1}\cup T^{j+1})\rightarrow I$ and a collection $\Q^{j+1}$ of simple paths in $N(\iset^{j+1},P^{j+1})$ that $\alpha_{j+1}$-satisfy $\lset^{j+1}$.

The size of the set $T^j$ decreases in each iteration by a factor of at least $n^{\eps}/(32 h^2 \alpha)$, so after $h$ iterations, $|T^{h+1}|\leq |T|/(n^{\eps}/(32 h^2 \alpha))^h$. Since $h\leq \log n/\log\log n$, $n^{\eps}\geq \log^8n$ and $\alpha=O(h^4\log n)$, we have that
$32h^2\alpha = O(h^6\log n) = O(\log^7 n) = O(n^{7\eps/8})$. Thus, for large enough $n$, $(n^{\eps}/(32 h^2 \alpha))^h > (n^{\eps/8})^h = n$.

Therefore, $|T^{h+1}|<1$, that is,  $T^{h+1}$ is empty and  $P^{h+1}$ assigns a private item to each agent in $\aset\setminus \lset$. Furthermore, $N(\iset^{h+1},P^{h+1})$ contains
a collection $\Q^{h+1}$ of paths that $\alpha_{h+1}$-satisfy the agents in $\lset^{h+1}$ -- the only agents without private items. The flow-paths in $Q^{h+1}$ together with the assignment $P^{h+1}$ of private items then define an $\alpha_{h+1}=2(h+1)\alpha$-approximate solution.\\
%The final solution is then determined as follows. For each agent $A$ appearing in paths $\pset^{h}$, the assignment is determined by the flow induced by $\pset^{h}$. All remaining agents are assigned their private items. Since $\pset^{h+1}$ $\alpha_{h+1}$-satisfies the light agents of $\lset^{h+1}$, this is an $\alpha_{h+1}$-approximate solution.

In the input to the first iteration, $\lset^1=\emptyset$, $\Q^1=\emptyset$. Each light agent $A\in L$ is assigned its heavy item as private item, $P^1(A)=h(A)$, and the assignment of private items to heavy agents is performed by calculating a maximum matching between the set of heavy agents and the remaining items. This is as in Section \ref{sec:private}.

Iteration $j$ (for $j = 1$ to $h$) is performed as follows. We construct a canonical instance $\iset^j$ that is identical to $\iset$ except that we remove the light agents in $\lset^j$ from this instance. Let $\Nj=N(\iset^j,P^j)$ be the corresponding flow network. Note that we do not remove any items from the instance.
Thus, the integral optimum for this instance cannot decrease, and in particular the value of the optimal solution is at least $(h+1)\cdot N_A$ (recall that we have scaled the values $N_A$ down by factor $(h+1)$). Therefore, there exists an $h$-layered solution of value $N_A$, implying the LP described in Section ~\ref{sec: almost-feasible-solutions} must be feasible for this instance as well. Conversely, if the LP is not valid for $\Nj$, we say that the problem is infeasible.

We now apply the algorithm from Section~\ref{sec: almost-feasible-solutions} to $\Nj$, obtaining the two sets $\pset_1$, $\pset_2$ of simple paths, satisfying the properties~\ref{property almost-feasible-first}--\ref{property almost-feasible-last} as in Theorem \ref{thm: almost feasible solutions}.

Let $\lset'$ be the subset of light agents $A$ for which there is a path in either $\pset_1$ or $\pset_2$ originating from $A$. Recall that for any such $A$, there are at least $N_A/\alpha$ paths in $\pset_2$ terminating at $A$. Let $\lset''$ be the set of light agents $A$ such that either $A\in \lset^j$, or there is a path in $\Q^j$ originating from $A$. Recall that there are at least $N_A/\alpha_j$ paths terminating at $A$ in $\Q^j$.

Our first step is to construct a set $\Q^*$ with the following property.

\begin{claim}
There exists a set of internally-disjoint simple paths $\Q^*$ such that each agent $A$ in $\lset'\cup\lset''$ has at least $\lfloor N_A/(\alpha_j+\alpha)\rfloor$ paths terminating at $A$. Moreover, only light agents in $(\lset'\cup \lset'')\setminus \lset^j$ have paths in $\Q^*$ originating from them, with at most one path originating from any agent.
\end{claim}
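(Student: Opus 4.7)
The plan is to cast the existence of $\Q^*$ as a max-flow question and exploit integrality of max-flow on integer-capacitated networks. Build an auxiliary graph $G^*$ on the underlying flow network of $\iset$ with assignment $P^j$. Add a super-source $\hat{s}$ with unit-capacity edges to each item in $S$ and to each light agent in $(\lset'\cup\lset'')\setminus\lset^j$ from which a path of $\Q^j\cup\pset_2$ originates; add a super-sink $\hat{t}$ with an edge $A\to\hat{t}$ of capacity $\lfloor N_A/(\alpha_j+\alpha)\rfloor$ for each $A\in\lset'\cup\lset''$; retain unit vertex capacities on every other vertex; and split each $A\in\lset'\cup\lset''$ into an in- and an out-copy, so that its ``receiving'' and ``sending'' roles do not get conflated through the same vertex. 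Any integer $\hat{s}$-$\hat{t}$ flow of value $\sum_A \lfloor N_A/(\alpha_j+\alpha)\rfloor$, once $\hat{s}$ and $\hat{t}$ are stripped, decomposes into internally-disjoint paths whose origins, terminations, and multiplicities are exactly those demanded by the claim.

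To exhibit such an integer flow, I will first build a fractional one. Send $\alpha_j/(\alpha_j+\alpha)$ units along each path of $\Q^j$ and $\alpha/(\alpha_j+\alpha)$ units along each path of $\pset_2$, and then cap the inflow at each receiver $A$ at $\lfloor N_A/(\alpha_j+\alpha)\rfloor$. Feasibility is a brief arithmetic check: (i) any internal vertex belongs to at most one path of each collection by internal-disjointness, and so carries total flow at most $\alpha_j/(\alpha_j+\alpha)+\alpha/(\alpha_j+\alpha)=1$; (ii) each sender and each $i\in S$ originates at most one path per collection, matching the unit capacities on the $\hat{s}$-edges; (iii) every $A\in\lset^j$ receives at least $(N_A/\alpha_j)\cdot\alpha_j/(\alpha_j+\alpha)=N_A/(\alpha_j+\alpha)$ flow from $\Q^j$ alone, every $A\in\lset'\setminus\lset''$ receives at least $(N_A/\alpha)\cdot\alpha/(\alpha_j+\alpha)=N_A/(\alpha_j+\alpha)$ from $\pset_2$ alone, and receivers in $\lset' \cap (\lset''\setminus\lset^j)$ are serviced by both collections and get only more. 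Integrality of max-flow then upgrades this fractional certificate to an integral one of the same value, whose path decomposition furnishes $\Q^*$.

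The main subtlety I anticipate is the calibration of the two weights. A uniform $1/2$ split, which is the obvious response to the ``congestion at most two'' bound on $\Q^j\cup\pset_2$, would supply receivers serviced only by $\Q^j$ just $N_A/(2\alpha_j)$ flow; since $\alpha_j = 2j\alpha\gg\alpha$ for $j\ge 1$, this falls short of the target $N_A/(\alpha_j+\alpha)$. The asymmetric weights $\alpha_j/(\alpha_j+\alpha)$ and $\alpha/(\alpha_j+\alpha)$ are tuned precisely so that (a) they still sum to $1$, preserving the unit capacity at each internal vertex even when the two collections collide, and (b) the two unequal guarantees $N_A/\alpha_j$ and $N_A/\alpha$ at receivers both collapse to the common lower bound $N_A/(\alpha_j+\alpha)$. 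This one algebraic calibration is the heart of the argument; everything else is routine flow bookkeeping combined with the internal-disjointness properties of $\Q^j$ (from the $\alpha_j$-satisfaction hypothesis) and of $\pset_2$ (from Theorem~\ref{thm: almost feasible solutions}).
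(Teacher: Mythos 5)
Your proof is correct and takes essentially the same route as the paper's: the paper likewise builds the sender/receiver network of Lemma~\ref{lemma-senders-receivers}, routes $\alpha_j/(\alpha_j+\alpha)$ units along each path of $\Q^j$ and $\alpha/(\alpha_j+\alpha)$ units along each path of $\pset_2$, observes unit congestion and a guarantee of $N_A/(\alpha_j+\alpha)$ at each receiver, and invokes integrality of flow. The only nit is that your case (iii) omits receivers in $(\lset''\setminus\lset^j)\setminus\lset'$, but these are handled exactly like those in $\lset^j$, since they too have at least $N_A/\alpha_j$ paths of $\Q^j$ terminating at them.
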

\begin{proof}
This is done similarly to Lemma~\ref{lemma-senders-receivers}, where light agents in $\lset'\cup\lset''$ serve as receivers and light agents in $(\lset'\cup \lset'')\setminus \lset^j$ are the senders. Each receiver $A$ is connected to the sink $t$ with $\lfloor N_A/(\alpha_j+\alpha)\rfloor$ edges of capacity $1$.
A source connects to all senders and items in $S$ with edge of capacity $1$.
Consider the flows defined by paths in $\pset_2$ and $\Q^j$. We send $\alpha/(\alpha_j+\alpha)$ flow units along each path in $\pset_2$ and $\alpha_j/(\alpha_j+\alpha)$ flow units along each path in $\Q^j$. The resulting flow causes congestion of at most $1$ on the edges, and each receiver $A$ gets at least $N_A/(\alpha_j+\alpha)$ flow units. From the integrality of flow, there is a collection $\Q^*$ of desired paths.
\end{proof}

%We denote $\lset^*=\lset'\cup \lset''$.
Our next step is to resolve the conflicts between paths in $\Q^*$ and $\pset_1$ when such paths share vertices. We will use Lemma~\ref{spider-lemma} below to do so. The idea is to re-route the paths in $\pset_1$ to get $\pset'_1$, so that each such path only interferes with at most one path in $\Q^*$. We will then remove from $\Q^*$ the paths that share vertices with the re-routed paths and argue that we still make progress.

\paragraph{A Path Rerouting Lemma:}
For a directed path $p$ starting at some vertex is $v$, we say that path $p'$ is a \emph{prefix} of $p$ iff $p'$ is a sub-path of $p$ containing $v$.
We will use the following lemma whose proof follows from the Spider Decomposition Theorem of~\cite{CK08} and appears in Appendix.

\begin{lemma}\label{spider-lemma}
Let $\pset,\Q$ be two collections of directed paths, such that all paths in $\pset$ are completely vertex disjoint and so are all paths in $\Q$.
We can define, for each path $p\in\pset\cup \Q$, a prefix $\gamma(p)$, such that if $\cset$ is a connected component in the graph $G_\gamma$ defined by the union of prefixes $\set{\gamma(p)\mid p\in \pset \cup\Q}$, then

\begin{itemize}
\item either $\cset$ only contains vertices belonging to a single prefix $\gamma(p)$ and $\gamma(p)=p$, or

\item $\cset$ contains vertices of exactly two prefixes $\gamma(p)$ and $\gamma(q)$, where $p\in \pset$, $q\in \Q$, and the two prefixes have exactly one vertex in common, which is the last vertex of both $\gamma(p)$ and $\gamma(q)$.
\end{itemize}
\end{lemma}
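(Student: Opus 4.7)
The plan is to construct the prefixes $\gamma(p)$ iteratively, peeling off one connected component of $G_\gamma$ per iteration. At each step I maintain an ``active'' set of paths (those still without a chosen $\gamma$); the invariant that active paths within each collection remain vertex-disjoint is preserved trivially since we only ever deactivate paths. In each iteration I either finalize a single active path as an isolated component or finalize a matched pair from opposite collections as a two-prefix component.

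Concretely, at each iteration I first check whether some active path $p$ shares no vertex with any active path of the opposite collection; if so, I set $\gamma(p) = p$ and deactivate $p$. An inductive argument shows this gives an isolated component in the final $G_\gamma$: if $\gamma(q')$ had been finalized at an earlier step by pairing $q' \in \Q$ with some $p'' \in \pset$ at vertex $v'$, then by the minimality used in the pairing every vertex of $\gamma(q')$ strictly preceding $v'$ avoids all then-active $\pset$-paths, including $p$; moreover $v' \notin p$ since $p'' \neq p$ are vertex-disjoint in $\pset$. If instead no isolated candidate exists, I look for a \emph{mutually first-meeting} pair: an active $(p,q) \in \pset \times \Q$ sharing a vertex $v$ such that no strictly earlier vertex of $p$ lies on any active $\Q$-path and no strictly earlier vertex of $q$ lies on any active $\pset$-path. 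For such a pair I set $\gamma(p)$ and $\gamma(q)$ to be the prefixes of $p$ and $q$ ending at $v$; by construction these two prefixes meet only at their common last vertex $v$, yielding the required two-prefix component, after which I deactivate both.

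The main obstacle is proving that a mutually first-meeting pair always exists as long as some intersection remains among active paths. For each such active $p \in \pset$, let $\phi(p)$ denote the unique active $q \in \Q$ containing the first intersection vertex along $p$, and define $\phi(q)$ analogously; every vertex with any intersection has out-degree exactly one in this bipartite functional digraph, so iterating $\phi$ from any such vertex eventually closes into a cycle. I plan to apply the Spider Decomposition Theorem of \cite{CK08} to this digraph to extract a length-two cycle $p \leftrightarrow q$, which is by definition a mutually first-meeting pair: the first intersection vertex of $p$ with the active $\Q$-paths lies on $q$ at some vertex $v$, and the first intersection vertex of $q$ with the active $\pset$-paths lies on $p$ at that same $v$. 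With this step in hand, the procedure terminates in at most $|\pset|+|\Q|$ iterations since each iteration deactivates at least one path, and the accumulated prefixes satisfy both cases of the stated dichotomy by construction.
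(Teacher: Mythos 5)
Your overall architecture (iteratively peel off either an isolated full path or a matched $\pset$--$\Q$ pair cut at a common vertex) is reasonable, and your argument that an isolated path stays isolated in the final $G_\gamma$ is fine. But the crux of your proof --- that whenever intersections remain among active paths, a \emph{mutually first-meeting} pair exists --- is false, and the way you try to establish it does not work. The map $\phi$ you define is a functional digraph on the active paths, and iterating it does close into a cycle, but that cycle need not have length two; it can have any even length, and the Spider Decomposition Theorem of \cite{CK08} is a statement about prefixes of paths in a graph, not a tool for extracting $2$-cycles from functional digraphs, so invoking it here is a category error.

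Here is a concrete counterexample with $\pset=\set{p_1,p_2}$, $\Q=\set{q_1,q_2}$. Let $p_1$ visit (in order) vertices $a$ then $d$, let $p_2$ visit $c$ then $b$, let $q_1$ visit $b$ then $a$, and let $q_2$ visit $d$ then $c$, with all other vertices private to their paths. Then $p_1,p_2$ are disjoint and $q_1,q_2$ are disjoint, the pairwise intersections are $p_1\cap q_1=\set{a}$, $p_2\cap q_1=\set{b}$, $p_2\cap q_2=\set{c}$, $p_1\cap q_2=\set{d}$, and one checks $\phi(p_1)=q_1$, $\phi(q_1)=p_2$, $\phi(p_2)=q_2$, $\phi(q_2)=p_1$: a $4$-cycle with no $2$-cycle, no isolated path, and no mutually first-meeting pair whatsoever. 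Your algorithm stalls. (The lemma itself still holds for this instance --- e.g.\ cut $p_1$ and $q_1$ at $a$ and cut $p_2$ and $q_2$ at $c$ --- but the correct cut vertex of $q_1$ is its \emph{second} meeting with $\pset$, so no rule based on mutual first meetings can find it.) To repair the argument you would have to handle long alternating cycles in the functional digraph directly, which is essentially what the paper sidesteps: it applies the spider decomposition of \cite{CK08} to the whole collection $\pset\cup\Q$ at once, observes that canonical cycles are impossible because they consist of an odd number of pairwise-consecutive intersecting paths while here consecutive intersecting paths must alternate between $\pset$ and $\Q$, and notes that a canonical spider can have only one leg from each collection since two legs from the same collection would share the head.
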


\paragraph{Rerouting Paths in $\pset_1$ and $\Q^*$:}
Recall that the paths in $\pset_1$ are completely vertex-disjoint. Paths in $\Q^*$ may share endpoints, but for each agent $A$ there is at most one path in $\pset^*$ that originates from $A$. In order to apply Lemma~\ref{spider-lemma} however we need to ensure that paths in $\Q^*$ are completely vertex-disjoint. For each path $q\in \Q^*$, if $A$ is the last vertex on $q$, we introduce a new dummy vertex $v(q,A)$ that replaces $A$ on path $q$. Let $\Q^{**}$ be the resulting set of paths. We also transform set $\pset_1$ as follows.
Given a directed path $p$, we denote by $\overline{p}$ the path obtained by reversing the direction of all edges of $p$. We define $\pset^*_1=\set{\overline{p}\mid p\in \pset_1}$. In the new set $\pset^*_1$, the paths originate at the terminals and terminate at light agents.  We now apply Lemma~\ref{spider-lemma} to $\pset^*_1$ and $\Q^{**}$ and obtain prefix $\gamma(p)$ for each $p\in \pset^*_1\cup \Q^{**}$.

For each $p\in \pset^*_1$, we construct a new path $p'$ that will be used to re-route the flow to the terminal of $p$. 
Consider the connected component $\cset$ in the graph $G_{\gamma}$ induced by the prefixes, to which $\gamma(p)$ belongs. If $\cset$ only contains vertices of $\gamma(p)$, then $p=\gamma(p)$ and we set $p'=\overline{p}$. Otherwise, $\cset$ contains vertices of $p$ and another path $q\in \Q^{**}$.
 Consider the vertex $v$ that is common to $\gamma(p)$ and $\gamma(q)$.
 If $v$ is a light agent, (notice that in this case since $p$ and $q$ are simple, $\gamma(p)=p$ and $\gamma(q)=q$ must hold), then we set $p'=\overline{p}$. Otherwise, we set $p'$ to be the concatenation of $\gamma(q)$ and $\overline{\gamma(p)}$. In either case path $q$ is removed from set $\Q^*$, and we say that the unique terminal $t$ that lies on path $p$ is \emph{responsible} for the removal of $q$. We note here that a terminal will be responsible for the removal of {\em at most} one path $q$.

 Now observe that the first vertex of $q$ has to be a light agent $A$. For if it is an item in $S$, then the path $\gamma(q)$ followed by $\overline{\gamma(p)}$ is a simple path from an item in $S$ to a terminal, which is impossible.
Therefore, in both cases above, the new path $p'$ has one end point a terminal and the other is a light agent $A$.
This implies $\Q^*$ originally contained at least $\lfloor N_A/(\alpha_j+\alpha) \rfloor$ paths terminating at $A$.

We denote by $\pset'_1$ the resulting set containing paths $p'$ for all $p\in \pset_1$, and by $\Q_2$ the set of remaining paths in $\Q^{*}$. The set ($\pset'_1\cup \Q_2$)  ``almost'' has all the desired properties of the final solution: all paths are internally vertex-disjoint; {\em each terminal} has exactly one path entering it and each light agent has at most one path leaving it. Moreover, if light agent $A$ has a path leaving it, then {\em originally}, in $\Q^*$, there were at least $\lfloor N_A/(\alpha_j+\alpha)\rfloor$ paths terminating at $A$. A potential problem is that it is possible that we have removed many of such paths
on moving to $\Q_2$. We now take care of that.

\paragraph{Bad Light Agents:}
We call a light agent $A$ is {\em bad} iff there is a path originating at $A$ in $\pset'_1\cup \Q_2$ but there are less than $N_A/(\alpha_j+2\alpha)=N_A/\alpha_{j+1}$ paths terminating at $A$.

We start with $T^{j+1}=\emptyset$.% and $\lset^{j+1}=\emptyset$.
While there exists a bad light agent $A$:

\begin{itemize}

\item Remove all paths entering $A$ from $\Q_2$.

\item If $A\not\in \lset^j$, then remove the unique path $p$ leaving $A$ from $\pset'_1$ or $\Q_2$, and say that $A$ is \emph{responsible} for this path. If $p\in \pset'_1$ and $t$ is the terminal lying on $p$, then we add $t$ to $T^{j+1}$ and say that $A$ is responsible for $t$.

\item If $A\in \lset^j$, consider the item $i=h(A)$. If there is a heavy agent $A'$ for which $i$ is a private item, we add $A'$ to $T^{j+1}$ (where it becomes a terminal). In either case, item $i$ becomes the private item for $A$. We remove $A$ from $\lset^j$.
If there is any path $p$ containing $i$ in $\pset'_1\cup \Q_2$, then we remove $p$ from $\pset'_1$ or $\Q_2$ and say that $A$ is responsible for $p$. If $p\in \pset'_1$ and $t$ is a terminal lying on $p$, then we add $t$ to $T^{j+1}$ and say that $A$ is responsible for $t$.
\end{itemize}

It is easy to see that a bad light agent can only be responsible for at most one path in $\pset'_1\cup \Q_2$, and at most two terminals in $T^{j+1}$. Notice that once we take care of a bad light agent $A$, this could result in another agent $A'$ becoming a bad light agent. We repeat this process until no bad light agents remain. We show below that the size of $T^{j+1}$ is small, but first we show how to produce the input to the next iteration (which is the output of the current iteration).

\paragraph{Input to Iteration $(j+1)$:}
We start with $\lset^{j+1}$ containing all the remaining good agents in $\lset^j$.
Consider now the sets $\pset'_1\cup \Q_2$ of paths. Let $p\in \pset'_1$, and let $A$ be the first vertex and $t\in T^j$ be the last vertex on $p$. We then add $A$ to $\lset^{j+1}$ and re-assign private items that lie on path $p$ as follows. If $A'$ is the agent lying immediately after item $i$ on path $p$ then $i$ becomes a private item for $A'$. The assignment of private items of agents not lying in any path remains the same.
Let $P^{j+1}$ be the resulting assignment of private items. Note that the only agents with no private items assigned are agents of $\lset^{j+1}\cup T^{j+1}$. We set $\Q^{j+1}=\Q_2$. Since no light agent  in $\lset^{j+1}$ is bad, set $\Q^{j+1}$ ensures that every agent in $\lset^{j+1}$ is $\alpha_{j+1}$-satisfied. Therefore we have produced a feasible input to iteration $(j+1)$.

\paragraph{Bounding the size of $T^{j+1}$:}
Finally we need to bound the size of $T^{j+1}$, the set of terminals in iteration $(j+1)$.

\begin{lemma}\label{lemma: bound on size of Tj}
$|T^{j+1}|\leq   \left( \frac{32 h^2 \alpha}{n^{\eps}} \right) |T^j|$.
\end{lemma}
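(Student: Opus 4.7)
The plan is a charging argument in two steps: a per-agent lower bound on how many incoming paths a ``bad'' light agent must have lost, coupled with an upper bound on the total number of path removals available to account for those losses. Letting $B$ denote the total number of light agents declared bad during the clean-up while-loop, the algorithm's description yields $|T^{j+1}|\leq 2B$ immediately (each bad agent is responsible for at most two terminals in $T^{j+1}$), so it suffices to upper bound $B$.

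For the per-agent bound, consider any bad agent $A$. At the time $\Q^*$ is constructed, $A$ has at least $\lfloor N_A/(\alpha_j+\alpha)\rfloor$ incoming paths; at the moment it is declared bad it has strictly fewer than $N_A/\alpha_{j+1}$. Writing $\alpha_j=2j\alpha$ and $\alpha_{j+1}=(2j+2)\alpha$, the number of paths lost by $A$ is at least
\[
\frac{N_A}{(2j+1)(2j+2)\alpha}-1 \;\geq\; \frac{n^{\eps}}{32\,h^2\alpha},
\]
using $(2j+1)(2j+2)\leq 8h^2$ for $j\leq h$ (with $h\geq 2$), $N_A\geq n^{\eps}$, and $n^{\eps}\geq \log^8 n$ to absorb the $-1$.

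For the aggregate bound, every path removed from $\Q^*$ or $\Q_2$ terminates at exactly one light agent, and so contributes to the loss of that agent alone. Such removals come from only two sources: the re-routing that produces $\pset'_1$ and $\Q_2$ discards at most one path of $\Q^*$ per element of $\pset_1$, for at most $|T^j|$ removals in total; and each pass of the while-loop on a bad agent $A$ removes at most one additional path beyond the batch step ``remove all paths entering $A$''. Crucially, the batch step itself only contributes to $A$'s own loss, since every such path terminates at $A$.

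Combining the two gives $B\cdot n^{\eps}/(32h^2\alpha) \leq |T^j|+B$, and since $n^{\eps}/(32h^2\alpha)\gg 1$ in the admissible parameter range, rearranging yields $B \leq O(h^2\alpha/n^{\eps})\cdot|T^j|$, so that $|T^{j+1}| \leq 2B$ gives the claimed bound up to a small constant factor (the exact constant $32$ coming out of a careful tracking of the arithmetic and, if desired, exploiting that the ``$A\in\lset^j$'' branch never actually fires because agents in $\lset^j$ have no outgoing paths in $\pset'_1\cup\Q_2$, so each bad agent is in fact responsible for at most one terminal). The main obstacle will be the bookkeeping in the third paragraph: one must verify carefully that the batch removals during the cascade do not create hidden deficits at other light agents, and that no single removed path is double-charged across different bad agents' processing.
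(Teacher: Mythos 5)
Your argument is correct and is essentially the paper's own proof: the paper bounds $|T^{j+1}|$ by twice the number of bad agents, lower-bounds each bad agent's lost incoming paths by $n^{\eps}/(8h^2\alpha)$, and then encodes your charging step as a degree count in an auxiliary graph $G_B$ (out-degree $\geq\beta$ per bad agent, in-degree $\leq 1$ per responsible party), yielding the same inequality $\beta n_B\leq n_B+|T^j|$ that you derive directly. The only difference is that your slightly weaker per-agent constant loses a small constant factor in the final bound, which is harmless since the lemma is only used to show $(n^{\eps}/(32h^2\alpha))^h>n$, where there is ample slack.
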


\begin{proof}
We may assume that  $n^{\eps} \geq 16h^2 \alpha$.
Recall that each bad light agent is responsible for at most two terminals in $T^{j+1}$. Therefore, it is enough to prove that the number of bad light agents in iteration $j$ is at most $\left( \frac{16h^2\alpha}{n^{\eps}} \right)|T^j|$. We build a graph $G_B$ whose vertices are bad light agents and the terminals in $T^j$. Consider now some bad light agent $A$. Originally there were at least $\left\lfloor \frac{N_A}{(\alpha_j+\alpha)} \right\rfloor \geq \frac{n^{\eps}}{(2j+1)\alpha}-1$ paths entering $A$ in $\Q^{*}$. Since $A$ is a bad light agent, eventually less than $n^{\eps}/((2j+2)\alpha)$ paths remained. Therefore, at least $\frac{n^{\eps}}{(2j+1)(2j+2)\alpha-1}\geq \frac{n^{\eps}}{8h^2\alpha}$ paths have been removed from $\Q^{*}$. If $q$ is such a path, and $A'$ is responsible for $q$ then we add an edge from $A$ to $A'$ in graph $G_B$ (observe that $A'$ can be another bad light agent or a terminal).

Since any bad light agent or terminal is responsible for the removal of at most one path, the in-degree of every vertex is at most $1$. Moreover by the discussion above we see the out-degree of every bad light agent is at least  $\beta=\frac{n^{\eps}}{8h^2\alpha} \ge 2$. Note that the out-degree of terminals in $T_j$ is $0$.
Let $n_B$ be the number of bad light agents  in $G_B$. Since the sum of in-degrees equal the sum of out-degrees, we have
$$ n_B + |T_j| \ge \beta n_B$$
implying the number of bad light agents is bounded by $|T_j|/(\beta - 1) \le 2|T_j|/\beta$ since $\beta \ge 2$.
\end{proof}

\subsection{Approximation Factor and Running Time}

Let $\alpha^*$ be the approximation factor that we achieve for the canonical instance. The final approximation factor is $\max\set{O(n^{\epsilon}\log n),O(\alpha^*\log n)}$.
We now bound $\alpha^*$ in terms of $\alpha= O(h^4\log n)$, the approximation factor from Theorem~\ref{thm: almost feasible solutions}.

We lose a factor of $(h+1) \le 2h$ when converting the optimal solution to an $h$-layered forest.
The algorithm in the final section assigns $N_A/(2h\alpha)$ items to each light agent $A$. So overall $\alpha^* =O(h^2\alpha)=O(h^6\log n)$). So we get a $\max\set{O(n^{\eps}\log n),O(h^6\log n)}$-approximation. When $\eps$ is chosen
to be $(8 \log \log n)/\log n$, we get an $O(\log^9 n)$-approximation algorithm.

The bottleneck in the running time of our algorithm is solving the linear program. As noted earlier,
 the time taken for solving the LP is $n^{O(1/\eps)}$. For our choice of $\eps$ above, we get an overall
 running time of $n^{O(\log n / \log \log n)}$.

\subsection{A Quasi-Polynomial Time $O(m^\eps)$-Approximation Algorithm}
In this section we show how to use the quasi-polynomial time $O(\log^9 n)$-factor algorithm to obtain an $O(m^\eps)$ algorithm
for any fixed $\epsilon > 0$.
Before that we make the following claim.

\begin{claim}
There exists an $(\log n)^{O(m \log n)}$-time $O(1)$-approximation to \MMA.
\end{claim}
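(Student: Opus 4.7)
The plan is to enumerate, for each agent, a coarse ``profile'' describing its item allocation, and then verify each profile via a polynomial-time bipartite matching.

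By the reduction of Section~\ref{sec:polybound}, we may assume all nonzero utilities lie in $[1, 2n]$. For each agent $A$, the items partition into $K = O(\log n)$ dyadic utility classes $C_k(A) = \{i \in I : 2^{k-1} < u_{A,i} \leq 2^k\}$, $k = 0,1,\ldots,K$. A \emph{profile} for agent $A$ is a vector $(c_0^A, \ldots, c_K^A)$ where each $c_k^A$ is either $0$ or a power of $2$ bounded by $n$, giving $O(\log n)$ choices per class and hence $(\log n)^{O(\log n)}$ profile choices per agent. A \emph{joint profile} specifies a profile for every one of the $m$ agents; the total number of joint profiles is therefore $(\log n)^{O(m\log n)}$.

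A joint profile is \emph{realizable} if items can be assigned to agents so that each agent $A$ receives at least $c_k^A$ items from $C_k(A)$ for every $k$, with no item assigned to more than one agent. This is exactly a bipartite matching problem: one side consists of items, the other of ``slots'' $(A, k, j)$ for $1 \leq j \leq c_k^A$, and an edge $(i, (A,k,j))$ exists iff $i \in C_k(A)$. Feasibility (saturating every slot) is checked in polynomial time by maximum matching. If the joint profile is realizable, the induced allocation gives each agent $A$ utility at least $\sum_k c_k^A \cdot 2^{k-1}$. The algorithm enumerates all joint profiles, tests each for realizability, and returns the allocation that maximizes the minimum agent utility.

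For correctness, consider any integer optimum $\opt$ of value $M$. Let $\tilde c_k^A$ be the number of items from $C_k(A)$ that $\opt$ assigns to $A$, and set $c_k^A$ to be $\tilde c_k^A$ rounded down to the nearest power of $2$ (with $0$ allowed). This yields a joint profile, and $\opt$ itself witnesses its realizability. The resulting allocation gives each agent $A$ utility at least $\sum_k c_k^A \cdot 2^{k-1} \geq \sum_k (\tilde c_k^A/2)\cdot 2^{k-1} \geq M/4$, so the algorithm returns an $O(1)$-approximation. The total running time is $(\log n)^{O(m \log n)} \cdot \poly(n) = (\log n)^{O(m \log n)}$.

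The main subtlety is calibrating the profile granularity so that enumeration stays within the $(\log n)^{O(m \log n)}$ budget while the double rounding---of utilities to dyadic classes and of per-class counts to powers of two---loses only a constant factor. Both are handled by the dyadic bucketing above: the per-class count rounding loses a factor of $2$, and aggregating the utility class widths loses at most another factor of $2$ per class (and at most an overall factor of $2$ in total since we preserve the full histogram rather than collapsing to a single class).
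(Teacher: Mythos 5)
Your proposal is correct and follows essentially the same route as the paper: enumerate, for each agent, a histogram of item counts per dyadic utility class rounded to powers of two (giving $(\log n)^{O(m\log n)}$ joint guesses), verify each guess by a polynomial-time bipartite matching, and lose only constant factors from the two roundings. The only cosmetic difference is that you expand the degree-constrained matching into unit-capacity slots, which is equivalent to the paper's formulation.
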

\begin{proof}
From Section \ref{sec:polybound} we can assume all the utilities $u_{A,i}$ to be between $1$ and $2n$. By losing another constant
factor in the approximation, we round down all the utilities to the nearest power of $2$. Thus there are $O(\log n)$ distinct values of utilities.
We assume we are given an instance like this.

For every agent $A$, we let $v_j(A)$ be the number of items with $u_{A,i} = 2^j$, for $j=1$ to $s =\lfloor \log 2n \rfloor$. Thus, the optimum solution
corresponds to $m$ vectors $v(A) := (v_1(A),\ldots,v_s(A))$. At the cost of losing another factor of $2$, we can further assume that each of the $v_i(A)$
is a power of $2$. Therefore for every agent there are at most $(\log n)^s$ possible vectors $v(A)$, and one of them corresponds to the optimal solution.

We now show how given $v(A)$ for every agent, we can check if there is a feasible assignment of the items respecting $v(A)$, that is, each agent $A$ gets $v_j(A)$ items of utility $u_{A,i} = 2^j$. Construct a bipartite graph $G(U,V,E)$ where $U$ contains $s$ copies of each agent $A$: $A(1),\ldots,A(s)$, and the vertex set
$V$ corresponds to the set of items. An edge goes from $A(j)$ to item $i$ iff $u_{A,i} = 2^j$. The problem of checking if whether the vector $v(A)$ for every $A$ can be realized is equivalent to testing if there exists a matching from $U$ to $V$ such that every vertex $A(j)$ in $U$ has exactly $v_j(A)$ edges incident on it and every vertex in $V$ has one edge incident on it. This can be done in polynomial time.

Thus in time $(\log n)^{O(m \log n)}$ (over all the choices of vectors of all agents), we can get an $O(1)$ approximation to \MMA.
\end{proof}

Using the above claim we can get the following.
\begin{theorem}
For any constant $\eps > 0$, there exists a quasi-polynomial time algorithm which returns a $O(m^\eps)$-approximation to \MMA.
\end{theorem}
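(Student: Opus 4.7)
The plan is a simple case analysis that combines the two existing ingredients: the main theorem (which via the choice $\eps' = 8\log\log n/\log n$ gives a quasi-polynomial time $O(\log^9 n)$-approximation) and the preceding claim (an $O(1)$-approximation running in time $(\log n)^{O(m\log n)}$), with Woeginger's FPTAS~\cite{Woe2} handling the boundary case of constant $m$.

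Fix the given constant $\eps > 0$ and set the threshold $c = 9/\eps$, which is also a constant. The algorithm first examines $m$. If $m$ exceeds some constant $m_0(\eps)$ chosen so that $m^\eps$ is larger than the absolute constant hidden in the $O(1)$-approximation from the claim, I will split into two quasi-polynomial regimes; if instead $m \le m_0(\eps)$ is a fixed constant, I invoke Woeginger's FPTAS for a constant number of agents to get a $(1+\eps)$-approximation in polynomial time, and since $m^\eps = \Theta(1)$ as well, this is an $O(m^\eps)$-approximation.

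The two main regimes. First, if $m \le \log^c n$, I run the $O(1)$-approximation guaranteed by the preceding claim. Its running time is $(\log n)^{O(m\log n)} \le (\log n)^{O(\log^{c+1} n)} = 2^{O(\log^{c+2} n\cdot \log\log n)}$, which is quasi-polynomial. Because $m \ge m_0(\eps)$, the constant in the approximation ratio is bounded by $m^\eps$, so this yields an $O(m^\eps)$-approximation. Second, if $m > \log^c n = \log^{9/\eps} n$, then $m^\eps > \log^9 n$, and I apply Theorem~\ref{thm:main} with $\eps' = 8\log\log n/\log n$. This runs in time $n^{O(1/\eps')} = n^{O(\log n/\log\log n)}$, which is again quasi-polynomial, and produces an $\tilde O(n^{\eps'}) = O(\log^9 n)$-approximation; since $\log^9 n < m^\eps$, this is an $O(m^\eps)$-approximation as required.

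There is no real technical obstacle here: both underlying algorithms have already been established, and the proof amounts to verifying that the threshold $m = \log^{9/\eps} n$ simultaneously keeps the running time quasi-polynomial on the small-$m$ side and keeps the approximation factor within $O(m^\eps)$ on the large-$m$ side. The only mild subtlety is the small-$m$ corner, which is disposed of via the FPTAS for constantly many agents so that the $O(1)$-approximation bound can be meaningfully compared to $m^\eps$.
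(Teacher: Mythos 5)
Your proof is correct and follows essentially the same case analysis as the paper: split on whether $m$ is below or above the threshold $\log^{9/\eps} n$, using the $(\log n)^{O(m\log n)}$-time $O(1)$-approximation in the first case and the main theorem's quasi-polynomial $O(\log^9 n)$-approximation in the second. The extra appeal to Woeginger's FPTAS for constant $m$ is harmless but unnecessary, since an $O(1)$-approximation is automatically an $O(m^\eps)$-approximation (the big-$O$ absorbs the constant).
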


If $m < \log^{9/\eps} n$, then by the claim above we can get a $O(1)$ approximation in $(\log n)^{O(m \log n)}$ time which is quasi-polynomial if $\eps$ is a constant. If $m \ge \log^{9/\eps} n$, then our main result gives a quasi-polynomial time $O(\log^9 n) = O(m^\eps)$-factor algorithm.

\section{The $2$-Restricted \MMA problem}
In this section we focus on the restricted version of \MMA, where each item $i$ is wanted by at most $2$ agents.

\begin{definition}
A \MMA problem instance is {\em $2$-restricted} if for each item $i$ there exist at most two agents with $u_{A,i} > 0$. For item $i$, we denote these two agents
as $A_i$ and $B_i$. Note that $A_i=B_i$, if the item is wanted by only one agent. The $2$-restricted \MMA instance is {\em uniform} if for every item $i$, $u_{A_i,i} = u_{B_i,i}$.
\end{definition}

Given a $2$-restricted \MMA instance $\I(\A,I)$, we construct an undirected graph $G(\I)$ as follows. The set of vertices of $G(\I)$ is the set $\A$ of agents, and for every item $i\in I$,
we have an edge $i=(A_i,B_i)$. Note that $G(\I)$ can have parallel edges and self-loops. An edge corresponding to an item $i$ has two weights associated with it, one weight for each endpoint: $w_{A_i,i} = u_{A_i,i}$ and $w_{B_i,i} = u_{B_i,i}$. The interpretation of these weights is as follows: if the edge is oriented from $B_i$ to $A_i$, then its weight is $w_{A_i,i}$, and if it is oriented towards $B_i$ then its weight is $w_{B_i,i}$. Such a weighted graph will be called a {\em non-uniformly} weighted graph.
If $w_{A_i,i}=w_{B_i,i}$ for all edges $i$, the graph is called {\em uniformly weighted}.
The $2$-restricted \MMA problem on instance $\I(\A,I)$
is equivalent to the following orientation problem on $G(\I)$.\\
\\
\noindent
\def\O{\mathcal O}
{\bf Non-uniform Graph Balancing}:
Consider a graph $G(V,E)$ that can have self-loops and parallel edges, where for each edge $e=(u,v)$ we are given
two weights $w_{u,e}$ and $w_{v,e}$.
Given an orientation $\O$ of edges, for an edge $(u,v)$ we denote $(u\stackrel{\O}{\rightarrow} v)$ if the edge is oriented towards $v$ and $(v\stackrel{\O}{\rightarrow}u)$ when it is oriented towards $u$. The weighted in-degree of a vertex $v$ is
 $\sum_{\stackrel{e=(u,v)\in E: }{(u\stackrel{\O}{\rightarrow} v)}} w_{v,e}$. The goal is to find an orientation of the edges
such that the minimum weighted in-degree of a vertex is maximized.\\
 \noindent
In the uniform version of the graph balancing problem, $w_{u,e}=w_{v,e}$ for each edge $e=(u,v)$.

It is easy to see that the non-uniform (uniform) graph-balancing problem is equivalent to the non-uniform (uniform) $2$-restricted \MMA.
In this section we give a $2$-approximation for the $2$-restricted non-uniform \MMA problem. We also show that even the uniform version of the problem is NP-hard to approximate to within a factor better than $2$.
\def\C{\mathcal C}

\subsection{Approximation Algorithm for Non-Uniform Graph Balancing}
Let $\iset=(\aset,I)$ be the input instance of the $2$-restricted \MMA problem, and let $G=(\aset,I)$ be the corresponding instance of the non-uniform graph balancing problem. We start by guessing the value $\opt$ of the optimal solution via binary search.

For an agent $A\in \aset$, let $\delta(A)$ denote the set of adjacent edges in $G$, that is, $\delta(A) := \{i: u_{A,i} > 0\}$
Given a parameter $M>0$, let $\C(A)$ define a set of feasible configurations for $A$ with respect to $M$, that is: $\C(A) := \{S \subseteq \delta(A): \sum_{i\in S} u_{A,i} \ge M\}$.
For any $M$, we define the following system $LP(M)$ of linear inequalities.
We have a variable $z_{A,S}$ for every agent $A$ and every $S\in \C(A)$, that indicates whether or not $S$ is chosen for $A$.
This system of inequalities is equivalent to the configuration LP of Bansal and Sviridenko \cite{BS}.

\begin{equation*}
\begin{array}{lll}
LP(M):&&\label{eq:CONFLP} ~~~~~~~~~~~~~~~~~~~~~~~~~~~~~~~~~~~~~~~~~~~~~~~~~~~~~ \\
&\forall A\in \A: & \sum_{S\in \C(A)} z_{A,S} = 1 \\
&\forall i\in I: & \sum_{S\in \C(A_i): i\in S} z_{A_i,S} +  \sum_{S\in \C(B_i): i\in S} z_{B_i,S} = 1  \\
&\forall A\in \A, \forall S\in \C(A): & z_{A,S} \ge 0
\end{array}
\end{equation*}

Notice that $LP(M)$ has a feasible solution for $M=\opt$, the optimal solution value for instance $\iset$.
It is well known (see for instance \cite{BS,AS}) that if $LP(M^*)$ is feasible, then a solution with value $(1-\eps)M^*$ can be found in polynomial time, for any $\eps>0$. We will therefore assume that we are given a feasible solution $z$ for $LP(M)$ with $M \ge (1-\eps)\opt$.

Given agent $A$ and item $i\in \delta(A)$, we say that item $i$ is integrally allocated to $A$ iff $\sum_{\stackrel{S\in \C(A):}{i\in S}}z_{A,S}=1$. If item $i$ is not allocated integrally to $A_i$ or $B_i$ then we say that it is allocated fractionally. Let $I'$ denote the set of items allocated fractionally, and for every agent $A\in \aset$, let $I(A)$ denote the set of items integrally allocated to $A$. We set $M_A=M-\sum_{i\in I(A)}u_{A,i}$.

We now focus on the sub-graph $H$ of $G(\iset)$ induced by the edges corresponding to items in $I'$. We remove from $H$ all isolated vertices. Observe that $H$ does not contain self-loops but may contain parallel edges. Moreover, for each agent $A$ in $H$, $M_A>0$.
It now suffices to allocate the items of $I'$ to the agents of $H$, such that every agent $A$ gets a utility of at least $M_A/2$ -- this will give
a $(2+\eps)$-approximation algorithm. For each agent $A$, let $\delta'(A)$ denote the set of edges adjacent to $A$ in $H$.

\indent
We begin with the following observation about $H$.
\begin{claim}\label{claim: utility-bound}
Fix an agent $A$. Let $i^*$ be an item in $\delta'(A)$ with maximum value $u_{A,i}$. Then
$$\sum_{i\in \delta'(A)\setminus \set{i^*}} u_{A,i} \ge M_A$$
That is, the total utility of the fractional items having positive utility for agent $A$ is at least $(M_A + u_{A,i^*})$.
\end{claim}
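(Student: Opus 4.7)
The plan is to exploit the configuration LP constraint $\sum_{S \in \C(A)} z_{A,S} = 1$ together with the fact that $i^*$ is fractionally allocated to extract a configuration that witnesses the desired lower bound.

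First, I would unpack what it means for an item to be fractionally allocated. For each $i \in \delta(A)$, let $x_{A,i} = \sum_{S \in \C(A): i \in S} z_{A,S}$ denote the fractional amount of $i$ received by $A$. Items in $I(A)$ satisfy $x_{A,i} = 1$, while items in $\delta'(A)$ satisfy $x_{A,i} < 1$. In particular, since $i^* \in \delta'(A)$, we have $x_{A,i^*} < 1$, and because $\sum_{S \in \C(A)} z_{A,S} = 1$, there must exist a configuration $S^* \in \C(A)$ with $z_{A,S^*} > 0$ and $i^* \notin S^*$. This existence step is the conceptual crux; everything afterwards is bookkeeping.

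Next I would exploit $S^*$. By definition of $\C(A)$, we have $\sum_{i \in S^*} u_{A,i} \ge M$. Split the sum according to whether items lie in $I(A)$ or in $\delta'(A)$. Note that any item $i \in S^*$ with $x_{A,i}=0$ cannot occur (else $z_{A,S^*}=0$), so $S^* \subseteq I(A) \cup \delta'(A)$. Hence
\[
\sum_{i \in S^* \cap I(A)} u_{A,i} \;+\; \sum_{i \in S^* \cap \delta'(A)} u_{A,i} \;\ge\; M.
\]
Using the crude bound $\sum_{i \in S^* \cap I(A)} u_{A,i} \le \sum_{i \in I(A)} u_{A,i} = M - M_A$ and the fact that $i^* \notin S^*$ so $S^* \cap \delta'(A) \subseteq \delta'(A) \setminus \{i^*\}$, I obtain
\[
\sum_{i \in \delta'(A) \setminus \{i^*\}} u_{A,i} \;\ge\; \sum_{i \in S^* \cap \delta'(A)} u_{A,i} \;\ge\; M - (M - M_A) \;=\; M_A,
\]
which is exactly the claim. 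Adding $u_{A,i^*}$ to both sides gives the ``moreover'' statement about the total fractional utility being at least $M_A + u_{A,i^*}$.

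The only subtle point is the first step — convincing oneself that \emph{some} positively-weighted configuration avoids $i^*$. Once that is in hand, the rest is a one-line accounting argument using the configuration constraint and the definition of $M_A$. No structural property of $G$ or of the graph-balancing formulation is needed here; the claim is purely a consequence of the LP feasibility.
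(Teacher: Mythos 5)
Your proof is correct and is essentially the paper's own argument: the paper likewise extracts a positively-weighted configuration $S$ avoiding $i^*$ (which exists because $i^*$ is fractionally allocated) and observes that its items outside $I(A)$ must contribute at least $M_A$. You have simply filled in the bookkeeping that the paper compresses into the word ``Clearly.''
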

\begin{proof}
\def\F{\mathcal F}
Since item $i^*$ is fractionally allocated, there is a configuration $S\in \C(A)$ with $z_{S,A}>0$ such that $i^*\not\in S$. Clearly, $\sum_{i\in S\cap I'} u_{A,i} \geq M_A$.
\end{proof}

The above claim implies that for every vertex $A\in V(H)$, the
non-uniform weighted degree of $A$ is at least $(M_A + \max_{i\in\delta'(A)} \set{u_{A,i}})$. We now prove a theorem about weighted
graph orientations that will complete the proof.

Given an arbitrary graph $G=(V,E)$, for each vertex $v\in V$, we denote by $\Delta(v)$ the set of edges incident on $v$. Given an orientation $\oset$ of edges, we denote by $\Delta^-_{\oset}(v)$ and $\Delta^+_{\oset}(v)$ the set of incoming and outgoing edges for $v$, respectively.

\def\O{\mathcal O}
\begin{theorem}\label{thm:weo}
Given a non-uniformly weighted undirected graph $G(V,E)$ with weights $w_{u,e}$ and $w_{v,e}$ for every edge $e=(u,v)\in E$, there exists an orientation $\O$ such that in the resulting digraph, for every vertex $v$:
$$\sum_{e\in\inedges{v}} w_{v,e} \ge \frac{\sum_{e\in \Delta(v)} w_{v,e} - \max_{e\in \Delta(v)}\set{w_{v,e}}}{2}$$
\end{theorem}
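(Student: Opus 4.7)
The plan is to prove the theorem by a construction based on a generalized Eulerian-orientation argument. First, let me set up an equivalent reformulation: writing $W(v) = \sum_{e \in \Delta(v)} w_{v,e}$ and $M(v) = \max_{e \in \Delta(v)} w_{v,e}$, the desired bound $\sum_{e \in \inedges{v}} w_{v,e} \geq (W(v) - M(v))/2$ is equivalent to asking that the weighted out-degree satisfy $\sum_{e \in \outedges{v}} w_{v,e} \leq (W(v) + M(v))/2$, since in-weight and out-weight sum to $W(v)$. Observe that the ``half-and-half'' fractional orientation assigns each vertex an out-weight of exactly $W(v)/2$, which is $M(v)/2$ below the target upper bound. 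So the theorem amounts to integrally rounding the trivial fractional orientation while losing at most $M(v)/2$ of out-weight per vertex.

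Next, I would reduce to the case where every vertex has even degree. For each vertex of odd degree, pair it up arbitrarily with another odd-degree vertex and introduce a phantom edge of weight zero between them (the handshaking lemma guarantees the number of odd-degree vertices is even). Because these dummy edges carry zero weight at both endpoints, they do not affect either $W(v)$ or $M(v)$, so a valid orientation in the augmented graph restricts to a valid orientation in the original. With all degrees even, each connected component admits an Eulerian circuit; traverse it and orient each edge in the direction of traversal. In the resulting orientation, every vertex has equal in- and out-arc counts.

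The heart of the argument is the weighted analysis. In an Eulerian traversal, each visit to a vertex $v$ uses one incoming and one outgoing arc; so the edges at $v$ are naturally partitioned into ``in--out'' pairs (visit-pairs). The key observation is that within each pair the in-weight and out-weight are incomparable (the weights $w_{v,e}$ depend only on the edge, not on the pairing), but by choosing the Eulerian circuit carefully --- specifically, performing local swaps of the circuit at $v$ --- we can control which pairs are formed. In particular, I would argue that one can always arrange for the heaviest incident edge $e^*(v)$ to be the only edge whose contribution appears ``unbalanced,'' so that the worst-case discrepancy between in-weight and out-weight at $v$ is at most $M(v)$. This yields $\sum_{e \in \outedges{v}} w_{v,e} - \sum_{e \in \inedges{v}} w_{v,e} \leq M(v)$, and combined with the identity $\sum_{e \in \inedges{v}} w_{v,e} + \sum_{e \in \outedges{v}} w_{v,e} = W(v)$ gives the desired bound.

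The main obstacle is establishing this careful pairing globally: local swaps at one vertex can disturb the pairing at another vertex on the circuit, so a direct circuit-rerouting argument is delicate. If this turns out to be too fragile, a backup approach is an exchange/augmentation argument: start with an arbitrary orientation, and as long as some vertex $v$ has out-weight exceeding $(W(v) + M(v))/2$, find an ``augmenting walk'' of edges to flip whose reversal strictly decreases total excess --- using that $v$'s excess weight forces at least two outgoing edges (since each single edge has weight at most $M(v)$), so there is always a flippable outgoing edge of weight at most $(\sum_{e \in \outedges{v}} w_{v,e} - M(v))/2$, enough to propagate the flip without creating a new violation. A termination/potential-function argument would then complete the proof.
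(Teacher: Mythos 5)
Your reformulation (bound the out-weight by $(W(v)+M(v))/2$) and your guiding intuition (a weighted generalization of Eulerian orientations) match the paper's, but the central combinatorial step is missing from both of your routes, and it is precisely the step the paper supplies. For the Eulerian approach: a generic Eulerian orientation can fail badly (a vertex with incident weights $10,10,1,1$ can receive only the two weight-$1$ edges, giving in-weight $2$ against a required $(22-10)/2=6$), so everything rests on your claim that the circuit can be rearranged so that only the heaviest edge is ``unbalanced.'' The local splices you invoke (re-pairing two in--out transition pairs at $v$) never change \emph{which} edges are incoming at $v$ --- they only change how in-edges are matched to out-edges --- so they cannot repair the example above. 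Changing the in/out status of an edge at $v$ requires reversing a sub-walk of the circuit, which flips the status of edges at every other vertex on that sub-walk; and even with a favorable pairing, the sign of each pair's contribution to the discrepancy is set by the global traversal direction of the closed walk containing it, so you would need essentially all pairs at all vertices to simultaneously contribute non-positively. You name this obstacle yourself but do not overcome it. The backup exchange argument has the same character: flipping an out-edge of $v$ turns it into an out-edge of its other endpoint, so the ``propagation'' is again a walk reversal, and no potential function or termination argument is given.

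The paper closes exactly this gap with an induction on $|E|$: vertices of degree one have their edge oriented away from them and are peeled off; otherwise one extracts a directed cycle $C=(v_1,\dots,v_k)$ with the property that at each $v_j$, either the incoming cycle-edge is at least as heavy (for $v_j$) as the outgoing one, or the outgoing edge is $v_j$'s heaviest incident edge $e_1(v_j)$ and the incoming one is its second heaviest $e_2(v_j)$. Such a cycle is found by a greedy walk --- leave each vertex via its heaviest incident edge, unless you arrived on that edge, in which case leave via the second heaviest --- and then truncating and reversing the walk once it revisits a vertex. Removing $C$, orienting the rest by induction, and adding $C$ back oriented along the cycle preserves the inequality at every vertex (in the exceptional case the slack term $\max_e w_{v,e}$ absorbs the loss, since the displaced maximum is replaced by $e_2(v)\ge e_3(v)$). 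If you want to salvage your outline, this cycle-peeling argument is the concrete mechanism you would need in place of the unproven re-pairing claim.
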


We apply Theorem \ref{thm:weo} to graph $H$. The resulting orientation of edges implies an assignment of items in $\iset'$ to agents of $H$. It is easy to see from Claim~\ref{claim: utility-bound} that the total utility of items assigned to any agent $A$ of $H$ is at least $M_A/2$, and thus together with the assignment of the integral items we obtain a factor $(2+\eps)$-approximation for the $2$-restricted \MMA problem. We now turn to prove Theorem~\ref{thm:weo}.

\begin{proof}
The proof is by induction on the number of edges of the graph. If the graph only contains one edge, the theorem is clearly true.
Consider now the case that there is some vertex $u\in V$ with $|\Delta(u)|=1$. Let $e=(u,v)$ be the unique edge incident on $u$. We can direct $e$ towards $v$, and by induction there is a good orientation of the remaining edges in the graph. Therefore we assume that every vertex in the graph has at least two edges incident on it. For each vertex $v\in V$, we denote by $e_1(v)\in \Delta(v)$ the edge $e$ with maximum value of $w_{v,e}$ and by $e_2(v)$ the edge with second largest such value. Notice that $e_1(v)$ and $e_2(v)$ are both well defined, and it is possible that they are parallel edges.
We now need the following claim.
\begin{claim}\label{claim:cycle}
We can find a directed cycle $C=(v_1,v_2,\ldots,v_k=v_1)$, where for each $j: 1\leq j\leq k-1$, $h_j=(v_j,v_{j+1})\in E$, and either:
 \renewcommand{\theenumi}{\arabic{enumi}}
 \begin{enumerate}
 \item $w_{v_j,h_{j-1}} \ge w_{v_j,h_j}$, or
\item $h_{j}=e_1(v)$ and $h_{j-1}=e_2(v)$
\end{enumerate}

\end{claim}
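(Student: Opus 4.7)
The plan is to build an auxiliary directed multigraph $D$ whose vertices are the oriented edges of $G$, and then to produce the required cycle via a simple in-degree argument on $D$. Concretely, the vertex set of $D$ consists of all triples $(a,b,e)$ with $e=(a,b)\in E$, so each undirected edge of $G$ gives rise to two vertices of $D$. I place a directed arc from $(a,b,e)$ to $(b,c,e')$ precisely when $e'\ne e$ and the transition $(e,e')$ is ``valid'' at $b$ in the claim's sense, i.e., either $w_{b,e}\ge w_{b,e'}$ or else $e'=e_1(b)$ and $e=e_2(b)$.

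First I would verify that any directed cycle $(a_1,b_1,e_1)\to(a_2,b_2,e_2)\to\cdots\to(a_m,b_m,e_m)\to(a_1,b_1,e_1)$ in $D$ translates directly into a cycle of the form demanded by the claim: the arcs force $b_j=a_{j+1}$, so defining $v_j:=a_j$ (with cyclic convention $v_{m+1}=b_m=a_1=v_1$) and $h_j:=e_j$ gives a closed walk in $G$ whose consecutive edges are distinct. By the defining property of an arc in $D$, the pair $(h_{j-1},h_j)$ satisfies condition~1 or condition~2 of the claim at $v_j=b_{j-1}$, which is exactly what the claim demands.

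Thus it suffices to exhibit a directed cycle in $D$, and for this it is enough to show that every node of $D$ has in-degree at least one; in any finite digraph, starting at an arbitrary vertex and repeatedly following an incoming arc backwards must eventually revisit a vertex and produce a cycle. Fix a node $(a,v,e')$ of $D$; I need to exhibit some $e\in\Delta(v)\setminus\{e'\}$ whose transition $(e,e')$ is valid at $v$. The preceding lines of the proof of Theorem~\ref{thm:weo} have already reduced to the case $|\Delta(v)|\ge 2$, so $e_1(v)$ and $e_2(v)$ exist and are distinct edges. If $e'=e_1(v)$, take $e:=e_2(v)$, and condition~2 applies. If $e'=e_2(v)$, take $e:=e_1(v)$, and condition~1 holds because $w_{v,e_1(v)}$ is the maximum. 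Otherwise $e'$ is neither $e_1(v)$ nor $e_2(v)$, so taking $e:=e_1(v)$ (still $\ne e'$) again gives condition~1 by the maximum-weight property.

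The one point deserving emphasis is that condition~2 of the claim is essential: for a node of the form $(a,v,e_1(v))$, every alternative edge $e\ne e_1(v)$ fails the weight inequality of condition~1 (strictly, barring ties at the top), so without condition~2 such nodes would have in-degree zero and the argument would collapse. Condition~2 is precisely the extra allowance that yields the uniform in-degree lower bound and hence the existence of the desired cycle; once this is noticed, the remainder of the argument is routine.
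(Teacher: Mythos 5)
Your reduction to the dart graph $D$ is appealing, and the in-degree computation is correct; it encodes exactly the same greedy rule as the paper's walk (enter a vertex $v$ via $e_1(v)$, or via $e_2(v)$ when the departing edge is $e_1(v)$). The genuine gap is in the translation step. A directed cycle in $D$ is only simple as a sequence of \emph{darts}: the closed walk it projects to in $G$ may revisit vertices of $G$, and worse, may contain both darts $(u,v,e)$ and $(v,u,e)$ of the same edge $e$ at non-consecutive positions, since the arc condition only forbids \emph{consecutive} repetition of an edge. For example, with all weights equal condition~1 certifies every transition, and $D$ then contains cycles such as $(u,v,e)\to(v,w,f)\to(w,v,g)\to(v,u,e)\to(u,x,p)\to(x,u,q)\to(u,v,e)$ with $f,g$ parallel and $p,q$ parallel; this traverses $e$ in both directions. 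Such an object is not a cycle of $G$ and cannot be oriented: the next step of the proof of Theorem~\ref{thm:weo} removes the edges of $C$ and orients each $h_j$ towards $v_{j+1}$, and its accounting treats each vertex as occurring once on $C$. So the claim really does require a vertex-simple cycle, and your argument as written does not produce one.

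The natural repair is to make the backward walk deterministic (always use the predecessor you exhibit) and stop at the first repeated vertex of $G$ rather than the first repeated dart. But the segment so extracted is then not a cycle of $D$: at the vertex where it closes up, the incoming and outgoing edges are paired in a way that no arc of $D$ certifies, so the transition condition there must be verified by hand. This is precisely the delicate case the paper treats separately (its case $j=r$): because the walk always enters a vertex via $e_1$ or $e_2$ of that vertex, and the edge through which the closing vertex was reached the first time is discarded together with the tail of the walk, one can still show the retained incoming weight dominates the outgoing one. Your proposal omits exactly this closing-vertex argument; once it is added and the stopping rule is changed, the proof goes through and essentially coincides with the paper's.
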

We first show that the above claim finishes the proof of the theorem.
Let $C$ be the directed cycle from the above claim. We remove the edges of $C$ from the graph and find the orientation of the remaining edges by induction. We then return the edges of $C$ to the graph, with edge $h_j=(v_j,v_{j+1})$ oriented towards $v_{j+1}$, for all $j$.
By rearranging the inequality that we need to prove for each vertex $v$ we obtain the following expression:

$$\sum_{e\in\inedges{v}} w_{v,e} +  \max_{e\in \Delta(v)}\set{ w_{v,e}}\ge \sum_{e\in \outedges{v}} w_{v,e}$$

Consider some vertex $v\in V$. If $v$ does not lie on the cycle $C$, then by induction hypothesis the inequality holds for $v$. Assume now that $v=v_j\in C$. In the orientation of edges of $E\setminus C$ the above inequality holds by induction hypothesis. We need to consider two cases. If $w_{v_j,h_{j-1}} \ge w_{v_j,h_j}$, then since $h_{j-1}$ is added to $\inedges{v}$ and $h_j$ is added to $\outedges{v}$, the inequality continues to hold.

Assume now that $h_{j}=e_1(v)$ and $h_{j-1}=e_2(v)$, and let $e_3(v)$ be the edge with third largest value of $w_{v,e}$. Then the RHS increases by $w_{v,h_j}$, while the LHS increases by $w_{v,h_{j-1}}+w_{v,h_j}-w_{v,e_3(v)}$. Since $w_{v,j_{h-1}}\geq w_{v,e_3(v)}$ the inequality continues to hold.

\begin{proofof}{Claim \ref{claim:cycle}}
We start with an arbitrary vertex $v_1\in V$ and add $v_1$ to $C$. In iteration $j$ we add one new vertex $v_j$ to $C$, until we add a vertex $u$ that already appears in $C$. Assume that the first appearance of $u$ on $C$ is $u=v_r$. We then remove vertices $v_1,\ldots,v_{r-1}$ from $C$ and reverse the orientation of $C$ to produce the final output (so vertices appear in reverse order to that in which they were added to $C$).

In the first iteration, $C=\set{v_1}$. Let $e_1(v_1)=(v_1,u)$. We then add $u$ as $v_2$ to $C$. In general, in iteration $j$, consider vertex $v_j$ and the edge $h_{j-1}=(v_{j-1},v_j)$. If $h_{j-1}\neq e_1(v_j)$, and $e_1(v_j)=(v_j,u)$, then we add $u$ as $v_{j+1}$ to $C$. Otherwise, let $e_2(v_j)=(v_j,u')$. We then add $u'$ as $v_{j+1}$ to $C$.

Let $v_{t+r}$ be the last vertex we add to the cycle, and assume that $v_{t+r}=v_r$. Consider the cycle $C'=(v_r,v_{r+1},\ldots,v_{t+r}=v_r)$ (recall that we will reverse the ordering of vertices in $C'$ in the final solution, the current ordering reflects the order in which vertices have been added to $C$). We denote $g_j=(v_j,v_{j+1})$.
Consider now some vertex $v_j\in C$. Assume first that $j=r$. Then two cases are possible. If $g_r=e_1(v_r)$, then clearly $w_{v_r,g_r}\geq w_{v_r,g_{r+t-1}}$ and condition (1) will hold in the reversed cycle. Assume now that $g_r=e_2(v_r)$. Then the edge $e'=(v_{r-1},v_r)$ that originally belonged to $C$ is $e_1(r)$, and so $w_{v_r,g_r}\geq w_{v_r,g_{r+t-1}}$ still holds.

Assume now that $j\neq r$. If $g_j=e_1(v_j)$ then clearly $w_{v_j,g_j}\geq w_{v_j,g_{j-1}}$ and condition (1) holds. Otherwise it must be the case that $g_{j-1}=e_1(v_j)$ and $g_j=e_2(v_j)$ and so condition (2) holds.
\end{proofof}
\end{proof}

\subsection{Hardness of Approximation for Uniform Graph Balancing}
We show that the uniform graph balancing is NP-hard to approximate up to a factor $2-\delta$ for any $\delta>0$. This result implies the same hardness of approximation for the uniform $2$-restricted \MMA, since the two problems are equivalent.

\begin{theorem}
Uniform graph balancing is NP-hard to approximate to within a factor $2-\delta$, for any $\delta > 0$.
\end{theorem}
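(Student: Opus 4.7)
The plan is to give an approximation-preserving reduction from a suitable NP-hard problem (for instance, $3$-Dimensional Matching, or a $3$-bounded version of $3$-SAT) to uniform graph balancing, producing instances with a factor-$2$ gap between YES and NO cases. The construction is inspired by the Ebenlendr--Krcal--Sgall $3/2$-hardness reduction for minimum makespan graph balancing, adapted to the max-min setting where the natural gap becomes $2$ rather than $3/2$.

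First I would choose the source problem so that it has a natural ``all-or-nothing'' structure at each element: in the YES case, every element is covered exactly once; in the NO case, some element is either uncovered or over-covered, and any re-allocation trying to fix this creates a cascading deficit elsewhere. The graph $G(\iset)$ would be assembled from three kinds of gadgets. For each element of the source instance, an element gadget whose central vertex $v_e$ has edges to each ``covering'' object (triple, clause, etc.) containing~$e$. The element gadget is designed so that $v_e$ achieves in-weight $\geq M$ only if at least one such object-edge points into it. For each covering object, an object gadget that forces all its element-edges to be oriented consistently (``all out'' $\equiv$ selected, ``all in'' $\equiv$ unselected). Finally, padding gadgets attached to every auxiliary vertex so that those vertices easily attain in-weight $M$ in any reasonable orientation, ensuring the minimum in-degree is controlled by the element/object gadgets.

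Completeness is the easy direction: given a valid selection (e.g., a perfect $3$-dimensional matching), orient selected-object edges outward and unselected-object edges inward, and orient padding edges canonically; a direct count shows every vertex achieves in-weight $\geq M$. Soundness is where the factor of $2$ is produced. By a pigeonhole/averaging argument, if the source instance has no feasible selection, then in any orientation some element vertex receives strictly less than half of $M$ from its covering-object edges, while the rest of its incident edges contribute at most $M/2 - \Omega(1)$ by construction. This yields a vertex with in-weight at most $M/2 + o(M)$, giving the desired $2-\delta$ gap once $M$ is scaled.

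The main obstacle will be the gadget design under the \emph{uniform} weight restriction: an edge contributes the same weight to either endpoint, so we cannot directly enforce asymmetric behavior (``this edge must point from $u$ to $v$'') by setting weights. In the non-uniform case this is trivial--one makes the weight on the ``forbidden'' side tiny and on the ``required'' side large. To simulate such asymmetry with uniform weights, I would use graph structure: parallel edges, small auxiliary vertices, and carefully paired gadgets so that the combinatorial balance of in- versus out-edges plays the role of weight asymmetry. Verifying that the resulting construction has gap exactly $2$ against arbitrary orientations--including ``cheating'' orientations of padding gadgets--will require a careful local case analysis over the possible orientations of each gadget, together with a global charging argument that propagates the $M/2$ deficit from the violated element gadget to the minimum of the orientation.
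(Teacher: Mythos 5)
Your submission is a plan rather than a proof: the decisive content --- the actual gadgets, their weights, and the case analysis establishing the gap --- is deferred, and you explicitly acknowledge that you have not resolved the central difficulty you identify (simulating asymmetric ``forcing'' edges under the uniform-weight restriction). As written, the soundness argument is an appeal to ``a pigeonhole/averaging argument'' and ``a global charging argument'' over gadgets that do not yet exist, so nothing is verified. This is a genuine gap, not a presentational one.

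It may help to see how the paper sidesteps the obstacle you flag, because the resolution is much lighter than the machinery you propose (no object gadgets, no padding gadgets, no charging argument). The paper reduces from 3SAT in which every literal occurs in at least one and at most two clauses. The graph has a vertex per literal and per clause; each pair $(x,\overline{x})$ is joined by a \emph{variable edge} of weight $1$; each clause is joined to its three literals by \emph{clause edges} of weight $\tfrac12$; and weight-$\tfrac12$ self-loops are added to every clause and to every literal occurring in exactly one clause. The asymmetry you were trying to engineer is obtained for free from the two weight scales: every literal vertex has total incident weight exactly $2$, consisting of one weight-$1$ edge and two weight-$\tfrac12$ edges, so if a literal fails to receive its variable edge it can reach in-weight $1$ only by absorbing \emph{both} of its weight-$\tfrac12$ edges --- in particular all of its clause edges --- which starves the adjacent clause vertices. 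A clause vertex (self-loop plus three half-weight edges) reaches in-weight $1$ only if some clause edge points into it, i.e.\ only if the corresponding literal is made true by the orientation of the variable edges. Hence a satisfying assignment yields value $1$, while unsatisfiability forces some vertex down to $\tfrac12$, giving the factor-$2$ gap directly. If you want to complete your write-up, you should either adopt a concrete construction of this kind or supply explicit gadgets together with the full orientation case analysis; in the current form the theorem is not proved.
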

\begin{proof}
This proof is similar to an NP-hardness of the min-max version of graph balancing due to Ebenlendr et.al. \cite{EKS}.
We reduce from the following variant of $3$-SAT. The input is a 3CNF formula $\phi$, where each clause has $3$ variables, and each {\bf literal} appears in at most $2$ clauses and at least $1$ clause. This version is NP-hard~\cite{PY91}
\noindent

We construct a graph $G=(V,E)$ whose vertices correspond to the literals and clauses in the formula $\phi$. We define edges of $G$ and the weights associated with them as follows. For every variable $x$ we have two vertices $x,\notx$ representing the two corresponding literals, with an edge $(x,\notx)$ of weight $1$. We refer to edges of this type as \emph{variable edges}. Consider now some clause $C=(\ell_1\bor\ell_2\bor\ell_3)$. We have a vertex $C$ and three \emph{clause} edges $(C,\ell_1)$, $(C,\ell_2)$, $(C,\ell_3)$ associated with it. All clause edges have weight $\half$. Additionally, we have a self-loop of weight $\half$ for each clause $C$, and for each literal $\ell$ appearing in exactly one clause.

\noindent
{\bf YES case}: Assume that the formula is satisfiable. We show that the optimum value of the graph balancing instance is $1$. Consider any variable $x$. If the optimal assignment gives value $T$ for $x$, then we orient the corresponding variable edge towards $x$; otherwise it is oriented towards $\notx$. For each literal $\ell$ set to $F$, orient all its adjacent clause edges towards it. Together with the self-loop, there are $2$ edges oriented towards $\ell$, each of which has weight $\half$. Finally, consider clause $C$. Since this clause is satisfied by the assignment, at least one of its variable has value $T$, and we can orient the corresponding edge towards $C$. Together with the self-loop, $C$ has $2$ edges oriented towards it of weight $\half$ each.

\noindent
{\bf NO case}: Suppose there is an allocation such that every vertex has weighted in-degree strictly more than $1/2$. This implies that every vertex has weighted in-degree at least $1$.
Consider the orientation of the variable edges. This orientation defines an assignment to the variables: if $(x,\notx)$ is oriented towards $x$, the assignment is $T$, otherwise it is $F$. Consider a literal $\ell$ that does not have the variable edge oriented towards it. Then it must have the $2$ remaining edges incident on it
 oriented towards it (since they have weight $1/2$ each). Now consider a clause vertex $C$. Since its weighted in-degree is at least $1$,
it must have a clause edge oriented towards it. The corresponding literal then is assigned the value $T$ and therefore $C$ is satisfied. But we know at least one clause is not satisfied by the above assignment. This proves the theorem.
\end{proof}

%\bibliographystyle{plain}
%\bibliography{max_min}

\begin{thebibliography}{10}

\bibitem{AFS}
A.~Asadpour, U.~Feige, and A.~Saberi.
\newblock Santa Claus meets hypergraph matchings.
\newblock {\em International Workshop on Approximation Algorithms for
  Combinatorial Optimization Problems (APPROX)}, pages 10--20, 2008.

\bibitem{AS}
A.~Asadpour and A.~Saberi.
\newblock An approximation algorithm for max-min fair allocation of indivisible
  goods.
\newblock {\em Proceedings of {ACM-SIAM} Symposium on the Theory of Computation
  (STOC)}, pages 114--121, 2007.

\bibitem{BS}
N.~Bansal and M.~Sviridenko.
\newblock {The Santa Claus Problem}.
\newblock {\em Proceedings of {ACM-SIAM} Symposium on the Theory of Computation
  (STOC)}, pages 31--40, 2006.

\bibitem{BCG}
M.~H.~Bateni, M. Charikar, and V.~Guruswami.
\newblock{MaxMin Allocation via Degree Lower-Bounded Arborescences}.
\newblock {\em Personal Communication}, December 2008.

\bibitem{BD}
I.~Bezakova and V.~Dani.
\newblock Allocating indivisible goods.
\newblock {\em SIGecom Exchanges}, 5(3):11--18, 2005.

\bibitem{BT}
S.~J. Brams and A.~D. Taylor.
\newblock {\em Fair Division : From Cake-Cutting to Dispute Resolution.}
\newblock Cambridge University Press, 1996.

\bibitem{CK08}
J.~Chuzhoy and S.~Khanna.
\newblock Algorithms for single-source vertex connectivity.
\newblock {\em {IEEE} Annual Symposium on Foundations of Computer Science
  (FOCS)}, pages 105--114, 2008.

\bibitem{EKS}
T.~Ebenlendr, M.~Krcal, and J.~Sgall.
\newblock Graph balancing: a special case of scheduling unrelated parallel
  machines.
\newblock {\em {ACM-SIAM} Annual Symposium on Discrete Algorithms (SODA)},
  pages 483--490, 2008.

\bibitem{ES}
L.~Epstein and J.~Sgall.
\newblock Approximation schemes for scheduling on uniformly related and
  identical parallel machines.
\newblock {\em Algorithmica}, 39(1):43--57, 2004.

\bibitem{Fei08}
U.~Feige.
\newblock On allocations that maximize fairness.
\newblock {\em {ACM-SIAM} Annual Symposium on Discrete Algorithms (SODA)},
  pages 287--293, 2008.

\bibitem{GHIM}
M.~X. Goemans, N.~J.~A. Harvey, S.~Iwata, and V.~Mirokkni.
\newblock Approximating submodular functions everywhere.
\newblock {\em {ACM-SIAM} Annual Symposium on Discrete Algorithms (SODA)},
  To appear, 2009.

\bibitem{KP07}
S.~Khot and A.~ K.~ Ponnuswami.
\newblock Approximation Algorithms for the Max-Min Allocation Problem.
\newblock {\em APPROX-RANDOM}, pp. 204-217, 2007.

\bibitem{LST}
J.~K. Lenstra, D.~B. Shmoys, and \'E. Tardos.
\newblock Approximation algorithms for scheduling unrelated parallel machines.
\newblock {\em Math. Programming}, 46:259--271, 1990.

\bibitem{LMMS}
R.~J. Lipton, E.~Markakis, E.~Mossel, and A.~Saberi.
\newblock On approximately fair allocations of indivisible goods.
\newblock {\em ACM Conference on Electronic Commerce}, pp. 125--131, 2004.

\bibitem{PY91}
C.H. Papadimitriou and M. Yannakakis.
\newblock Shortest paths without a map.
\newblock {\em Theo. Comp. Sci. } 84:127--150, 1991

\bibitem{Woe1}
G.~J. Woeginger.
\newblock A polynomial time approximation scheme for maximizing the minimum
  machine completion time.
\newblock {\em Operations Research Letters}, 20:149--154, 1997.

\bibitem{Woe2}
G.~J. Woeginger.
\newblock When does a dynamic programming formulation guarantee the existence
  of a fully polynomial time approximation scheme {(FPTAS)}?
\newblock {\em {INFORMS} Journal on Computing}, 12:57--75, 2000.

\end{thebibliography}

\appendix

\section{Proof of Lemma~\ref{spider-lemma}}
We use the following definitions from~\cite{CK08}.

\begin{definition} [Canonical Spider]
Let $\tset$ be any collection of simple paths, such that each path
$p\in \tset$ has a distinguished endpoint $t(p)$, and the other
endpoint is denoted by $v(p)$. We say that paths in $\tset$ form a
\emph{canonical spider} iff $|\tset|>1$ and there is a vertex $v$,
such that for all $p\in \tset$, $v(p)=v$. Moreover, the only
vertex that appears on more than one path of $\tset$ is $v$. We
refer to $v$ as the \emph{head} of the spider, and the paths of
$\tset$ are called the \emph{legs} of the spider.
\end{definition}

\begin{definition} [Canonical Cycle]
Let $\tset=\set{g_1,\ldots,g_h}$ be any collection of simple
paths, where each path $g_i$ has a distinguished endpoint $t(g_i)$
that does not appear on any other path in $\tset$, and the other
endpoint is denoted by $v(g_i)$. We say that paths of $\tset$ form
a \emph{canonical cycle}, iff
(a) $h$ is an odd integer,
(b) for every path $g_i$, $1\leq i\leq h$, there is a vertex
$v'(g_i)$ such that $v'(g_i)=v(g_{i-1})$ (here we use the
convention that $g_0=g_h$), and
(c) no vertex of $g_i$ appears on any other path of $\tset$,
except for $v'(g_i)$ that belongs to $g_{i-1}$ only and $v(g_i)$
that belongs to $g_{i+1}$ only.
\end{definition}

\begin{theorem}(Theorem 4 in~\cite{CK08})
\label{thm: CK-spider decomposition}
Given any collection $\pset$ of paths, where
every path $f\in \pset$ has a distinguished endpoint $t(f)$ that
does not appear on any other path of $\pset$, we can find, in
polynomial time, for each path $f\in \pset$, a prefix $\gamma(f)$, such
that in the graph induced by $\set{\gamma(f)\mid f\in \pset}$, the
prefixes appearing in each connected component either form a
canonical spider, a canonical cycle, or the connected component
contains exactly one prefix $\gamma(f)$, where $\gamma(f)=f$ for some $f\in
\pset$.
\end{theorem}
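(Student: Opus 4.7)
My plan is a constructive proof: I would design an iterative prefix-shortening algorithm, initialized with $\gamma(f) := f$ for every $f \in \pset$, and driven by a potential function --- the total prefix length $\Phi := \sum_{f} |\gamma(f)|$ --- that strictly decreases with each legitimate step. Let $u(f)$ denote the non-$t(f)$ endpoint of the current $\gamma(f)$. The core rule identifies a conflict, meaning an interior vertex $v$ of some $\gamma(f)$ (so $v \neq t(f)$ and $v \neq u(f)$) that also lies on another prefix $\gamma(f')$, and shortens $\gamma(f)$ to end at the earliest such $v$ encountered when walking from $t(f)$. The rule is applied conditionally: only when the resulting induced graph $G_\gamma$ remains decomposable into spiders, cycles, and singletons, so that previously stabilized components are not broken apart. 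Polynomial termination follows from strict monotonicity of $\Phi$, which is integer-valued and bounded below by zero.

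I would then classify the connected components of $G_\gamma$ at termination. An isolated prefix that shares no vertex with any other is the singleton case with $\gamma(f) = f$. A component whose prefixes all share the single vertex $u(f) = v$ and are otherwise pairwise internally disjoint is a canonical spider with head $v$. A component that has survived the shortening as an essentially cyclic structure must have each prefix $\gamma(g_i)$ contributing exactly two sharing vertices --- one at its own endpoint $v(g_i) = u(g_i)$ and one at an interior point $v'(g_i) = u(g_{i-1})$ --- which matches the canonical-cycle definition verbatim. This case analysis exhausts the possibilities because termination forbids any other conflict pattern from surviving.

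The main obstacle, and the content of the most delicate argument, is verifying that every residual cycle has odd length. I would argue by contradiction. An even cycle of length $2k$ admits a \emph{pairing reduction}: pair the prefixes into $k$ consecutive pairs $(g_{2j-1}, g_{2j})$ and, within each pair, shorten $\gamma(g_{2j})$ at its internal vertex $v'(g_{2j}) = u(g_{2j-1})$. Each pair then collapses into a canonical two-legged spider with head $u(g_{2j-1})$, the full graph $G_\gamma$ remains a legal spider/cycle/singleton decomposition, and $\Phi$ strictly decreases. Because the algorithm was assumed to have terminated without performing any such reduction, the even cycle cannot have been present. Only odd cycles lack a consistent pairing into two-legged spiders and are therefore immune to this reduction. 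Combining this parity argument with the classification from the previous paragraph yields the claimed decomposition in polynomial time.
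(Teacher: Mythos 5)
First, a framing point: the paper does not prove this statement at all --- it is quoted as Theorem~4 of~\cite{CK08} and used as a black box to derive Lemma~\ref{spider-lemma} --- so there is no in-paper argument to compare yours against, and your proposal has to stand on its own. As written, it does not: the decisive difficulty is hidden inside the guard ``the rule is applied conditionally: only when the resulting induced graph $G_\gamma$ remains decomposable into spiders, cycles, and singletons.'' The initial configuration $\gamma(f)=f$ is in general not decomposable, so this condition is never satisfiable and the algorithm takes no step; and the guard is in any case circular, since deciding whether a configuration ``remains decomposable'' is essentially the problem the theorem is meant to solve. If instead you drop the guard and shorten greedily at every interior conflict, the terminal state need not have the claimed form: (i) a prefix shortened because of a conflict with $\gamma(f')$ may later become isolated when $\gamma(f')$ is itself shortened past the conflict vertex, and the theorem then demands $\gamma(f)=f$, which fails; (ii) nothing rules out a one-legged ``spider,'' which the definition excludes ($|\tset|>1$); (iii) exhaustive application of your rule eliminates \emph{all} interior sharing, so every surviving shared vertex is a common far endpoint and no cycles remain --- yet cycles in the conclusion are unavoidable (take three paths whose far endpoints cyclically land in each other's interiors: any attempt to truncate them into spiders either strands a shortened prefix in isolation or leaves an endpoint sitting on the interior of a spider leg). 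Your algorithm must therefore sometimes refrain from shortening, and you never say when.

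Your even-cycle pairing reduction is correct in isolation and is indeed the reason the cycles in the statement have odd length, but it presupposes that the components resisting further shortening are already canonical cycles --- pairwise disjoint except at the designated vertices $v(g_i)=v'(g_{i+1})$ --- and that is precisely what remains unproven. A complete argument needs an explicit, unconditional procedure (for instance, one that maintains the relation ``whose far endpoint currently lies on whose prefix,'' collapses the tree-like parts of that dependency structure into spiders, pairs off even cycles as you describe, and retains odd cycles intact), together with a verification that every clause of the spider and cycle definitions holds at termination. The potential-function termination bound is fine, but it is the easy half of the theorem.
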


We simply apply Theorem~\ref{thm: CK-spider decomposition} to set $\pset_1\cup\pset_2$. Consider now some connected component $\cset$ in the graph induced by the prefixes. It is impossible that $\cset$ is a canonical cycle since a canonical cycle contains an odd number of paths where every pair of consecutive paths intersect. Therefore, each component $\cset$ either contains vertices of exactly one prefix $\gamma(p)$ and $p=\gamma(p)$ in this case, or it contains vertices of two prefixes $\gamma(p)$ and $\gamma(p')$ that form a canonical spider.

\section{The Integrality Gap of the LP}\label{sec:int-gap}
In this section we show a lower bound of $\Omega(\sqrt{m})$ on the integrality gap of the LP from Section~\ref{sec: almost feasible solutions}.
We then show how the algorithm described in Section~\ref{sec: final-algorithm} overcomes this gap.The construction of the gap example is somewhat similar to the construction used by \cite{BS} to show a lower bound of $\Omega(\sqrt{n})$ on the integrality gap of the configuration LP.

We describe a canonical instance together with an assignment of private items.
We start by describing a gadget $G$ that is later used in our construction. Gadget $G$ consists of
 $M$ light agents $L_1,\ldots,L_M$. For each light agent $L_j$, there is a distinct collection $S(L_j)$ of $M$ light items for which $L_j$ has utility $1$. Let $S=\cup_jS(L_j)$, note that $|S|=M^2$. Items in $S$ will not be assigned as private items to any agent.

Additionally, for each $j: 1\leq j\leq M$, agent $L_j$ has one heavy item $h(L_j)$, for which $L_j$ has utility $M$. This will also be $L_j$'s private item.
 The gadget also contains $M-1$ heavy agents $t_1,\ldots,t_{M-1}$. These heavy agents are not assigned private items and hence are terminals.
 Each terminal is a heavy agent that has utility $M$ for each one of the items $h(L_1),\ldots,h(L_M)$. Finally, we have a light agent $L^*$ that has utility $1$ for each item $h(L_1),\ldots,h(L_M)$.

We make $M$ copies of the gadget, $G_1,\ldots,G_M$. We denote the vertex $L^*$ in gadget $G_j$ by $L^*_j$.  We add a distinct heavy item $h(L^*_j)$ for each $L^*_j$. Item $h(L^*_j)$ is the private item for $L^*_j$, and gives utility $M$ to it. Finally, we have a heavy agent $t^*$ that has utility $M$ for each $h(L_j^*)$, $1\leq j\leq M$. This agent is also a terminal since it has no private item assigned.

Thus the set of terminal are all the heavy agents. The total number of items is $n=O(M^3)$ and the total number of agents
is $m=O(M^2)$.

We start by showing that in any integral solution, some agent receives a utility of at most $1$. That is, any integral flow from $S$ to the terminals will $1/M$-satisfy some light agent. This is because the terminal $t^*$ must receive one unit of flow from  $L^*_j$ for some $1\le j\le M$. Consider now the corresponding gadget $G_j$. We can assume w.l.o.g. that each light agent $L_i\in G_j$, $1\leq i\leq M$ receives $M$ flow units from its light items in $S(L_i)$. Each one of the $M-1$ terminals $t_1,\ldots,t_{M-1}$ has to receive one flow unit. This leaves only one flow unit to satisfy $L^*_j$, and so $L^*_j$ is assigned at most one item for which it has utility $1$.

We now argue that there is a fractional flow which $1$-satisfies all agents. Consider some gadget $G_i$. Each light agent $L_j\in G_i$ receives $M$ flow units from light agents in $S(L_j)$, and sends $1$ flow unit to its private item $h(L_j)$, which in turn sends $1/M$ flow units to each one of the agents $t_1,\ldots,t_{M-1},L^*_i$. Each one of the light agents $L^*_1,\ldots,L^*_M$ now receives $1$ flow unit can thus send $1/M$ flow to terminal $t^*$.

To be more precise, we have $y(L^*_j) = x(L^*_j) = 1/M$ for all $j=1,\ldots,M$ and for all $1\le j \le M$, for all $1 \le i \le M$, we have
$y(L^*_j,L^j_i) = 1/M$, where $L^j_i$ is the $i$th light agent in $G_j$. The flows are as described in the previous paragraph. One can check that this satisfies all the constraints of the LP described in Section \ref{sec: almost feasible solutions}.
Thus this is a feasible fractional solution in which each agent is $1$-satisfied, and the value of the solution for the \MMA problem is $M$. This completes the description of the gap example.

Before describing how our algorithm bypasses the integrality gap, we first show how in this example we can prove using the same LP that the integral optimum cannot be more than $1$. Firstly, note that removing any agent cannot {\em decrease} the integral optimum. Therefore, if we remove the set of light agents $\{L_1,\ldots,L_{M-1}\}$ from every $G_j$, the integral optimum should still be at least $M$. However, consider now the following assignment of private items -- for the remaining light agents we still have $P'(L^*_j) = h(L^*_j)$ and $P'(L^j_M) = h(L^j_M)$ for $1\le j\le M$, but now we assign a private item for every heavy agent $t_i$ in each gadget, $P'(t_i) = h(L_i)$, that is, the heavy item of agent $L_i$ (who is not present in this instance). The only terminal in this instance is the heavy agent $t^*$.

Note that the set $S$ of items which are not private items in each gadget $G_j$ is still $\bigcup_j S(L_j)$. However, the items in $S(L_j)$ for $1\le j\le (M-1)$ do not connect to any agent (since the light agents have been removed). Thus the flow to the terminal $t^*$ must come from the $M$ sets of items of the form $S(L^j_M)$.

We now argue that the LP of Section 4 is not feasible. In fact, even if $N_{L^*_j}$ for every gadget is reduced from $M$ to $2$, the LP is not feasible.
Since $t^*$ receives a flow of value $1$, it must receive a flow of at least $1/M$ from one of the $L^*_j$. Thus $x(L^*_j) \ge 1/M$.
This implies $L^*_j$ must receive $N_{L^*_j}\cdot x(L^*_j)$ units of flow from the light agents in the lower level. However, there is only one light agent,
namely $L^j_M$ in the lower level and from constraint \eqref{lambda-flow capacity constraints for one level} it can ``feed'' at most $x(L^*_j)$ units of
flow to $L^*_j$. Thus, if $N_{L^*_j} > 1$, the LP will be infeasible.\\

Now we show how after one iteration of the algorithm we get to the instance $(\I',P')$ described above.
After the rounding algorithm described in Section \ref{sec: almost feasible solutions}, we get set of paths $\pset_1,\pset_2$ which are as follows:
\begin{align*}
\pset_1 = (L^*_1 \to h(L^*_1) \to t^*) \cup \{(L^j_i \to h(L^j_i) \to t^j_i): \forall i=1\ldots M-1, 1\le j\le M\} \\
\pset_2 = \{(v\to L^j_i): v\in S(L^j_i), 1\le i \le M-1, 1\le j\le M\} \cup \{(L^1_i \to h(L^1_i) \to L^*_1): 1\le i\le M-1\}
\end{align*}
that is, $\pset_1$ is the set of paths from $L^*_1$ to $t^*$ and the paths from $L^j_i$ to $t^j_i$ in every gadget $G_j$; and $\pset_2$ is the
set of paths from $S(L_i)$ to $L_i$ in all gadgets and $L^1_i$ to $L^*_1$ for the gadget $G_1$. The path decomposition procedure of Section \ref{sec: final-algorithm}
returns one bad light agent ($L^*_1$) and the following sets of internally disjoint paths
\begin{align*}
\pset'_1 = \{(L^1_i \to h(L^1_i) \to t^1_i): \forall i=1\ldots M-1\} \\
\Q_2 = \{(v\to L^j_i): v\in S(L^j_i), 1\le i \le M-1,~1\le j\le M\} \cup \{(L^1_i \to h(L^1_i) \to L^*_1): 1\le i\le M-1\}
\end{align*}
Subsequently, the new set of terminals is $T_2 = \{t^*\}$ and the set of discarded light agents are $\lset_2 = \{L^j_i: 1\le i\le M-1, ~1\le j\le M\}$
and thus we get the instance $(\I',P')$. Hence, in the second iteration, the algorithm will return that the optimum $M$ is infeasible for this \MMA instance.

\end{document}